\theoremstyle{definition}
\newtheorem{assumption}{Assumption}
\newtheorem{theorem}{Theorem}
\newtheorem{definition}{Definition}
\newtheorem{lemma}{Lemma}
\begin{document}
%
\title{Byzantine-Resilient Federated Learning at Edge}
%
%
%
%

\author{Youming~Tao,~\IEEEmembership{Student Member,~IEEE}, Sijia~Cui, Wenlu~Xu, Haofei~Yin, Dongxiao~Yu,~\IEEEmembership{Senior Member,~IEEE}, Weifa~Liang~~\IEEEmembership{Senior Member,~IEEE}
        and~Xiuzhen~Cheng,~\IEEEmembership{Fellow,~IEEE}
\IEEEcompsocitemizethanks{\IEEEcompsocthanksitem Y. Tao, H. Yin, D. Yu, X. Cheng are with the School of Computer Science and Technology, Shandong University, Qingdao, Shandong, P.R. China.\protect\\
E-mail: \{ym.tao99, hf\_yin\}@mail.sdu.edu.cn, \{dxyu, xzcheng\}@sdu.edu.cn
\IEEEcompsocthanksitem S. Cui is with the Institute of Automation, University of Chinese Academy of Sciences, Beijing, P.R. China.\protect\\
E-mail:cuisijia2022@ia.ac.cn
\IEEEcompsocthanksitem W. Xu is with the Department of Statistics, University of California, Los Angeles, CA, USA.\protect\\
E-mail: wenluxu@ucla.edu
\IEEEcompsocthanksitem W. Liang is with the Department of Computer Science, City University of Hong Kong, Kowloon, Hong Kong, P.R. China.\protect\\
E-mail: weifa.liang@cityu.edu.hk
}
\thanks{Manuscript received January 23, 2022; revised October 25, 2023; accepted February 26, 2023.}}

%
%

\markboth{IEEE Transactions on Computers,~Vol.~X, No.~X, XX~2023}%
{Shell \MakeLowercase{\textit{et al.}}: Bare Demo of IEEEtran.cls for Computer Society Journals}
%



\IEEEtitleabstractindextext{%
\begin{abstract}
    Both Byzantine resilience and communication efficiency have attracted tremendous attention recently for their significance in edge federated learning. However, most existing algorithms may fail when dealing with real-world irregular data that behaves in a heavy-tailed manner. To address this issue, we study the stochastic convex and non-convex optimization problem for federated learning at edge and show how to handle heavy-tailed data while retaining the Byzantine resilience, communication efficiency and the optimal statistical error rates simultaneously. Specifically, we first present a Byzantine-resilient distributed gradient descent algorithm that can handle the heavy-tailed data and meanwhile converge under the standard assumptions. To reduce the communication overhead, we further propose another algorithm that incorporates gradient compression techniques to save communication costs during the learning process. Theoretical analysis shows that our algorithms achieve order-optimal statistical error rate in presence of Byzantine devices. Finally, we conduct extensive experiments on both synthetic and real-world datasets to verify the efficacy of our algorithms.
\end{abstract}

\begin{IEEEkeywords}
edge intelligent systems, federated learning, Byzantine resilience, communication efficiency
\end{IEEEkeywords}
}

\maketitle

\IEEEdisplaynontitleabstractindextext

%
\IEEEpeerreviewmaketitle

\IEEEraisesectionheading{\section{Introduction}\label{sec:introduction}}

%
%
%
%
\IEEEPARstart{R}{ecent} years have witnessed the proliferation of smart edge devices, which leads to an unprecedented amount of data generated at the network edge. Thanks to the significant increasing in computation power of edge devices and the ubiquitous deployment of communication infrastructures, data can be processed locally and aggregated across devices efficiently. With these merits, it is natural to implement large-scale machine learning algorithms at edge, which brings about the concept of edge intelligence and has empowered many emerging applications that benefit human lives, such as smart city and autonomous driving. 

Due to the widespread concerns over data ownership and privacy, federated learning (FL), proposed by Google\cite{McMahanMRHA_AISTATS17}, has emerged as a popular paradigm for distributed ML model training, see, e.g., \cite{ZhangGQZZLA2022,FengZGQLY2022,WuHLMMJ21}. In edge FL, the data is retained in edge devices and processed in parallel, thus much more real-time results can be provided for real-world applications. However, there remain some practical issues that hinder the successful implementation of edge FL. 

One commonly encountered issue in such large-scale distributed systems arises from the potential unreliability of edge devices. Distributing the computation over multiple devices induces a higher risk of failures. In particular, some devices in the system may not follow the predefined protocol and exhibit abnormal behaviors, either actively or passively due to crashes, malfunctioning hardware, unreliable communication channels or attacks from adversaries. The inherently unpredictable behaviors of faulty devices are usually modeled as {\it Byzantine fault}~\cite{LamportSP_TOPLAS82,VeroneseCBLV13,DistlerCK16}. It has been shown in \cite{YinCRB_ICML18} that even when the number of Byzantine devices is small (even only one) and the value sent by them is moderate and even difficult to detect, the performance can still be significantly degraded. Thus, Byzantine resilience has always been a main consideration in the design of FL frameworks, see, e.g., \cite{farhadkhani2022byzantine, el2021distributed,data2020byzantine}.

Communication overhead is also an important consideration in edge FL. Typically, edge devices need to upload local results (gradients or model parameters) to the server for global aggregation repeatedly. Due to inherently limited bandwidth of wireless channels, exchanging data between edge devices and the server will incur heavy communication load and might cause network congestion, which is especially the case when the device number is huge. Heavy communication overhead is the major bottleneck that hinders the parallelism and scalability of FL at edge~\cite{shi2020communication}.

In addition, data is a key ingredient in machine learning. Recent studies, e.g., \cite{yan2020sequential,zhao2021event,zhu2020robust}, have shown that
heavy-tailed noises exist widely in practical multi-sensor systems. Since most data stored in edge devices are collected via various sensors, it is natural for the data used for training learning models to be irregular and behave in a heavy-tailed manner. Furthermore, in many real-world applications, data have been observed to be heavy-tailed in themselves, especially those from biomedicine \cite{woolson2011statistical,biswas2007statistical} and finance \cite{ibragimov2015heavy,bradley2003financial}. In a nutshell, heavy-tailed data can be widespread at edge. Heavy-tailed data could degrade the performance of learning algorithms (see, e.g., \cite{Holland_AISTATS19,holland2019efficient}), and the presence of Byzantine devices could make things worse for federated training at edge. Unfortunately, existing works on Byzantine resilience in FL all make strong assumptions on the distribution of loss gradients, for example, sub-exponential gradients~\cite{ChenSX_Sigmetrics17,ghosh2021communication}, gradients with bounded skewness~\cite{YinCRB_ICML18}, or gradients with norm-wise bounded variance~\cite{KarimireddyHJ_ICML21} (see Section~\ref{sec-relatedwork} for more details). Hence, how to mitigate the impact of heavy-tailed data on edge FL is an urgent requirement and still lack investigation.

With the perceptions above, in this paper, we consider Byzantine resilience, communication efficiency and heavy-tailed data robustness simultaneously for the first time. In particular, we have the following natural question: 

{\it Is there any way to handle the heavy-tailed data for edge federated learning while retaining the Byzantine resilience, communication efficiency, and the optimal statistical error rates?}

In this paper, we provide affirmative answer to the above question. We design an edge FL framework that is robust to heavy-tailed data as well as satisfies the requirement of Byzantine resilience and communication efficiency. Our main contributions and technical challenges are as follows.
\subsection{Main Contributions}
In the first part, we conduct a comprehensive study on the Byzantine-tolerant distributed gradient descent with heavy-tailed data under the standard assumptions. In particular, for heavy-tailed data, we assume that the distribution of loss gradients has only \textit{coordinate-wise} bounded second-order raw moment. We establish the high-probability guarantees of statistical error rate for strongly convex, general convex and non-convex population risk functions respectively. Specifically, for all the cases, we show that our algorithm achieves the following statistical error rate\footnote{Throughout this paper, the notations $\widetilde{\mathcal{O}}(\cdot)$ and $\widetilde{\Omega}(\cdot)$ hide logarithmic factors.}:
        \[
        \widetilde{\mathcal{O}}\left(d^2\left[\frac{\alpha^2}{n}+\frac{1}{mn}\right]\right),
        \]
where $\alpha\in(0,\frac{1}{2})$ is the fraction of Byzantine devices, $n$ is the size of local dataset on each edge device and $m$ is the number of edge devices. The error rate above matches the error rate given in \cite{YinCRB_ICML18} and it has been shown in \cite{YinCRB_ICML18} that, for strongly-convex population risk functions and a fixed $d$, no Byzantine-resilient algorithm can achieve an error lower than $\widetilde{\Omega}\left(\frac{\alpha^2}{n}+\frac{1}{mn}\right)$, which implies that our algorithm still achieve order-wise optimality in terms of ($\alpha$,$n$,$m$), even in the presence of heavy-tailed data.
    
 In the second part, we study how to further retain the optimal statistical error rates under the requirement of both Byzantine resilience and communication efficiency. To achieve the communication efficiency, we adopt the technique of gradient compression and consider a generic class of compressors called $\delta$-approximate compressor. Based on this, we propose a communication-efficient and Byzantine resilient distributed gradient descent algorithm with heavy-tailed data. In this case, our statistical error rates becomes:
    \[
        \widetilde{\mathcal{O}}\left(d^2\left[ \frac{\alpha^2}{n}+\frac{1-\delta}{n}+\frac{1}{mn}\right]\right),
    \]
where $\delta$ is the compression factor, and when $\delta=1$ (which implies no compression), the error rate becomes $\widetilde{\mathcal{O}}\left(\frac{\alpha^2}{n}+\frac{1}{mn}\right)$, which means that the compression term has no order-wise contribution to the error rate.

\subsection{Technical Challenges}

When only considering the Byzantine resilience, a natural and direct idea to address the problem under the heavy-tailed data setting is to replace the robust aggregator used by the server with some state-of-art robust mean estimators that can deal with the data with \textit{coordinate-wise} bounded second moment, like \cite{DongH0_NIPS19,DiakonikolasKP_NIPS20,HuR_AISTATS21}. Unfortunately, as far as we know, these robust mean estimators are not appropriate for the edge FL settings due to the following reasons. Firstly, they typically limit the corrupted data to a small fraction, which is usually not the case at edge. Secondly, to analyse the statistical error rate of the learning algorithm, it requires the estimators to have certain continuity property so that the \textit{uniform} estimation error can be bounded. The reason why the uniform error bound is needed is due to the fact that the analysis of these robust mean estimators relies on the assumption of i.i.d. data. In FL, this means the gradients computed using the training dataset should be i.i.d., which is true for a given model parameter. However, the model parameter is updated iteratively based on the training dataset and thus the parameters across the iterations are highly dependent on each other. As a result, in any iteration $t>1$, the gradients computed using the training dataset are no longer i.i.d.. Hence, we have to establish uniform concentration to bound the estimation error for all possible parameters simultaneously. It is still unclear whether these existing estimators can achieve the uniform convergence. In this paper, to solve the problem, we let the server and the devices jointly estimate the expected loss gradient in each iteration. At the device end, each device first performs a robust local mean estimator based on soft truncation and noise smoothness, which is motivated by \cite{Holland_AISTATS19}. Then the central machine aggregates the gradient estimates by coordinate-wise trimmed mean to rule out the outliers caused by Byzantine nodes. Based on this, we propose an efficient and more robust distributed gradient descent algorithm. The major challenge in analysis is to bound the uniform error when the local gradient estimator is combined with the coordinate-wise trimmed mean. We overcome this by first analysing the point-wise error bound for each coordinate and then using the coordinate-wise continuity of the local gradient estimator to obtain the uniform error bound for each coordinate via the covering arguments. 

When further considering the gradient compression, we let the server perform a norm-based trimmed mean to aggregate the compressed local estimators. The key challenge here becomes to analyse the uniform error bound for the combination of the compressed local gradient estimator and the norm-based trimmed mean. After introducing the gradient compression, the trimming process becomes norm-based and our previous "coordinate-wise" analysis no longer applies. Thus we have to adopt a different analysis. To tackle this problem, we build up on the techniques of \cite{ghosh2021communication} to directly bound the uniform error of the local estimator and then consider the impact of the compression.

\subsection{Related Work}\label{sec-relatedwork}

To cope with Byzantine attacks in distributed learning setting, most solutions rely on the outlier-robust estimators for aggregation, such as coordinate-wise median~\cite{YinCRB_ICML18}, coordinate-wise mean~\cite{YinCRB_ICML18}, geometric median~\cite{pillutla2022robust,ChenSX_Sigmetrics17} and majority voting~\cite{BernsteinWAA_ICML18,jin2020stochastic}, instead of vanilla averaging. Also, there are works developing robust aggregators by combining the ideas of these robust estimators. For example, \cite{BlanchardMGS_NIPS17} proposed Krum based on the ideas of majority voting and geometric median, and \cite{guerraoui2018hidden} proposed Bulyan that ensures majority agreement on each coordinate of the aggregated gradients by combining Krum and coordinated-wise trimmed mean. A critical issue in these approaches is that their convergence guarantees rely on strong assumptions on gradient distributions. In particular, \cite{KarimireddyHJ_ICML21} provided some examples to show that these approaches do not obtain the true optimum especially when dealing with heavy-tailed distributions. To fix it, \cite{KarimireddyHJ_ICML21} proposed to utilize worker momentum and just assumed norm-wise bounded variance for loss gradients. Inspired by this, recent work \cite{farhadkhani2022byzantine} further investigated distributed momentum for Byzantine learning under the norm-wise bounded variance assumption. However, this assumption still implies that data are well-behaved to some extent. According to recent empirical findings, e.g., \cite{simsekli2019tail}, the gradient noise could be $\alpha$-stable random variable with extremely heavy tails. In this paper, we take the first step for dealing with more extremely heavy-tailed data (or gradients) under the assumption that the loss gradients have only \textit{coordinate-wise} bounded second raw moment.

To the best of our knowledge, there are quite limited works focusing on Byzantine resilient learning and gradient compression simultaneously, except for a few notable exceptions of \cite{BernsteinZAA_ICLR19,ghosh2021communication,data2020byzantine}. \cite{BernsteinZAA_ICLR19} assumed that all devices have access to the same data and their method can only tolerate blind multiplicative adversaries (i.e., adversaries that must determine how to corrupt the gradient before observing the true gradient and can only multiply each coordinate of the true gradient by arbitrary scalar). In contrast, we consider stronger adversaries and a more general setting where different devices have different local datasets. \cite{ghosh2021communication} proposed algorithms that combined the robust aggregators and gradient compression together, and introduced error feedback to further reduce the communication costs. However, their convergence guarantees rely on the sub-exponential gradients assumption, which makes their methods not applicable to our setting. Lately, \cite{data2020byzantine} considered the Byzantine resilience and communication efficiency issues together in the heterogeneous data setting. Unfortunately, their results also rely on the norm-wise bounded variance assumption for loss gradients, while we relax this assumption in this work.

\subsection{Road Map}
The remaining part of the paper is organized as follows. The formal problem definition and system model are given in Section~\ref{sec:model}. In Section~\ref{sec-heavy-tail}, we propose a distributed gradient descent algorithm that is robust to both Byzantine fault and heavy-tailed training data. We show that our proposed algorithm can achieve the optimal statistical error rates. In Section~\ref{sec-compress}, we consider how to further reduce the communication overheads, and propose a modified algorithm by introducing the gradient compression schemes. The optimal statistical error rates are shown to be retained. In Section~\ref{sec:exp}, we report the experimental results. Finally, we conclude the paper in Section~\ref{sec:conclu}.

\vspace{-0.1in}
\section{Problem Setup and Preliminaries}\label{sec:model}

\subsection{Edge Federated Learning Problem}
We consider the stochastic convex and non-convex optimization problem. Formally, let $\mathcal{W}\subseteq\mathbb{R}^d$ be the parameter space containing all the possible model parameters and $\mathcal{D}$ be an unknown distribution over the data universe $\mathcal{Z}$. Given a loss function $\ell:\mathcal{W}\times\mathcal{Z}\to\mathbb{R}$, where $\ell(w,z)$ measures the risk induced by data $z$ under the model parameter choice $w$, and a dataset $D=\{z_1, z_2,\cdots, z_N\}$, where $z_i$'s are i.i.d. samples from the distribution $\mathcal{D}$ over $\mathcal{Z}$, the goal is to learn an optimal parameter choice $w^*\in\mathcal{W}$ that minimizes the population risk $R_{\mathcal{D}}(w)$, i.e.,
\begin{equation}\label{eq-prm}
    w^*\in\arg\min_{w\in\mathcal{W}}R_{\mathcal{D}}(w)\triangleq\mathbb{E}_{z\sim\mathcal{D}}[\ell(w,z)].\footnote{We assume that $R_{\mathcal{D}}(w)\triangleq\mathbb{E}_{z\sim\mathcal{D}}[\ell(w,z)]$ is well-defined for every $w\in\mathcal{W}$.}
\end{equation}

Note that since the data distributed $\mathcal{D}$ is unknown, the population risk function $R_{\mathcal{D}}(\cdot)$ is typically unknown in practice. Hence, we cannot compute $w^*$ straightforwardly by solving the minimization problem in (\ref{eq-prm}) with gradient descent (GD). 

We focus on solving the above stochastic optimization problem over an edge intelligent system via the federated learning framework. The edge intelligent system consists of an edge server and $m$ edge devices. We assume that the total $N$ training data are evenly distributed across the $m$ devices such that each worker machine holds $n=\frac{N}{m}$ data
\footnote{Although this is a simplified assumption on data balance over devices, our results can be easily extended to the heterogeneous data sizes setting provided the data sizes are of the same order.The same assumption has been adopted by many related works (e.g. \cite{ChenSX_Sigmetrics17,YinCRB_ICML18,ghosh2021communication}).}.

We consider a synchronous distributed system where the server can communicate with devices in each round. Among the $m$ devices, at most $\alpha$ $(\alpha<\frac{1}{2})$ fraction of devices are Byzantine and the rest $1-\alpha$ fraction are normal/good. In each round, the good devices will follow the predefined protocols faithfully. While for Byzantine ones, we have the following assumptions. Firstly, we assume that the set of Byzantine devices can be {\it dynamic} throughout the learning process. We denote the Byzantine devices in round $t$ as $\mathcal{B}_t$ and the remaining good devices as $\mathcal{G}_t$. Secondly, we assume the Byzantine devices to be {\it omniscient}, i.e., they have complete knowledge of the system and the learning algorithm, and have access to the computations made by the rest good devices. Thirdly, the Byzantine devices need not obey any predefined protocols and can send arbitrary message to the server (maybe send nothing at all) in each round. Moreover, Byzantine devices can even collude with each other. The only limit on Byzantine devices is that these devices cannot contaminate the local dataset.

\subsection{Preliminaries}\label{sec-prelim}

We first review some key concepts in optimization.

\begin{definition}[Lipschitzness]
    A function $f:\mathcal{W}\to\mathbb{R}$ is $L$-Lipschitz if for $\forall w_1, w_2\in\mathcal{W}$,
    \[
        |f(w_1)-f(w_2)|\le L\lVert w_1-w_2\rVert_2.
    \]
\end{definition}

\begin{definition}[Strong Convexity]
    A function $f$ is $a$-strongly convex on $\mathcal{W}$ if for $\forall w_1, w_2\in\mathcal{W}$, 
    \[
        f(w_1)\ge f(w_2)+\langle\nabla f(w_2),w_1-w_2\rangle+\frac{a}{2}\lVert w_1-w_2\rVert_2^2.
    \]
\end{definition}

\begin{definition}[Smoothness]
    A function $f$ is $b$-smooth on $\mathcal{W}$ if for $\forall w_1, w_2\in\mathcal{W}$, 
    \[
        f(w_1)\le f(w_2)+\langle\nabla f(w_2),w_1-w_2\rangle+\frac{b}{2}\lVert w_1-w_2\rVert_2^2.
    \]
\end{definition}

\begin{definition}[Projection]
    Given a convex set $\mathcal{W}\subseteq\mathbb{R}^d$, the projection of any $\theta\in\mathbb{R}^d$ to $\mathcal{W}$ is denoted by 
    \[
        \prod_{\mathcal{W}}\theta=\arg\min_{w\in\mathcal{W}}\lVert\theta-w\rVert.
    \]
\end{definition}

We make use of the following assumptions in the paper.

\begin{assumption}\label{A1}
	The parameter space $\mathcal{W}$ is closed, convex, and bounded with diameter $\Delta$, i.e., for $\forall w_1, w_2\in\mathcal{W}$, $\lVert w_1-w_2\rVert_2\le\Delta$.
\end{assumption}

\begin{assumption}\label{A2}
	The population risk function $R_{\mathcal{D}}(w)$ is $L_R$-smooth for $\forall w \in \mathcal{W}$, where $L_R$ is a known constant.
\end{assumption}

\begin{assumption}\label{A3}
	For any given data $z\in\mathcal{Z}$, the loss gradient $\nabla\ell(w,z)$ satisfies that for each coordinate $k\in[d]$, $\nabla_k \ell(w, z)$ is $L_k$-Lipschitz. Let $\hat{L}\triangleq\sqrt{\sum_{k=1}^{d}L_k^2}$.
\end{assumption}

The above three assumptions are quite standard and has been commonly adopted in the previous works, e.g., \cite{YinCRB_ICML18,ghosh2021communication}. Furthermore, we make the following assumption for heavy-tailed data. Specifically, for any parameter $w$, we assume that loss gradients have only coordinate-wise bounded second raw moments.

\begin{assumption}\label{A4}
	For any given $w \in \mathcal{W}$ and each coordinate $k \in [d]$, $\mathbb{E}_{z\sim\mathcal{D}}[\nabla_k^2\ell(w,x)]\le v$, where $v$ is a known constant.
\end{assumption}

We note that this assumption is reasonable and has been used in some other learning problems with heavy-tailed data as well, e.g., \cite{Holland_AISTATS19,WangX_ICML20,hu2022high,kamath2022improved}. We provide a concrete example of classical linear regression to validate this.

\setlength{\parindent}{0cm}
\paragraph*{\bf Example} Consider a linear regression model $y=\langle w^*, x\rangle+\xi$, where $x\in\mathbb{R}^{d}$ is the feature vector and $y\in\mathbb{R}$ is the label, and $\xi$ is the noise. For the noise, we assume that $\xi$ is independent of $x$, and satisfies: 1) $\mathbb{E}[\xi]=0$; 2) $\mathbb{E}[\xi^2]\le c_1$ for some constant $c_1$. For the feature vector, we assume that the coordinates of $x=(x_1\cdots,x_d)$ are independent of each other, and $x$ satisfies: 1) $\mathbb{E}[x]=0$; 2) $\mathbb{E}[\|x\|_2^2]\le c_2$ for some constant $c_2$; 3) for $\forall k\in[d]$, $\mathbb{E}[x_k^4]\le c_3$ for some constant $c_3$. We consider the quadratic loss function $\ell(w,(x,y))=\frac{1}{2}(y-\langle w, x\rangle)^2$, then $\nabla_k\ell(w,(x,y))=(y-\langle w,x\rangle)x_k$. By some simple computation, we have for all $w\in\mathcal{W}$ that,
\begin{flalign*}
    &\quad\mathbb{E}[\nabla_k^2\ell(w,(x,y))]\\
    &= \mathbb{E}[(y-\langle w,x\rangle)^2x_k^2]= \mathbb{E}[(\langle w^*-w,x\rangle+\xi)^2x_k^2]\\
    &= \mathbb{E}[\langle w^*-w,x\rangle^2x_k^2]+\mathbb{E}[\xi^2]\cdot\mathbb{E}[x_k^2]\\
    &\le \lVert w^*-w\rVert^2\mathbb{E}[x_k^2\cdot\lVert x\rVert_2^2]+\mathbb{E}[\xi^2]\cdot\mathbb{E}[x_k^2]\\
    &\le \Delta^2\mathbb{E}[x_k^2\cdot\lVert x\rVert_2^2]+\mathbb{E}[\xi^2]\cdot\mathbb{E}[x_k^2]< \Delta^2(c_3+c_2^2)+c_1\cdot c_2
\end{flalign*}
where the third equality is because $\xi$ is independent of $x$ and $\mathbb{E}[\xi]=0$, and the second inequality is due to Assumption~\ref{A1}. It can be seen from above that the loss gradient only has coordinate-wise bounded second moment, which validates Assumption~\ref{A4}. Furthermore, denote the bound we got by $C$. Then the norm-wise variance of the loss gradient in our example is bounded as $\mathbb{E}[\|\nabla\ell(w,(x,y))\|_2^2]\le d\cdot C$, which is proportional to the dimension $d$. In contrast, bounded norm-wise variance assumption used in previous works only assumed it to be a universal constant, which indicates that our Assumption~\ref{A4} is weaker and thus more general.

\section{Byzantine-Resilient Heavy-tailed Gradient Descent}\label{sec-heavy-tail}

In this section, we study how to handle the heavy-tailed data while retaining the Byzantine resilience and the statistical error rate. We propose a \underline{B}yzantine-resilient \underline{h}eavy-tailed \underline{g}radient \underline{d}escent algorithm called BHGD.

\subsection{Algorithm Design}

 Since $R_{\mathcal{D}}(\cdot)$ is unknown, it is infeasible to apply gradient descent algorithm directly due to the impossibility to compute the exact population risk gradient $\nabla R_{\mathcal{D}}(\cdot)$. A natural alternative way is to estimate $\nabla R_{\mathcal{D}}(\cdot)$ using the data samples $D=\{z_1,z_2,\cdots,z_N\}$. In our algorithm, the estimation is done jointly by devices and the central server. Each device first calculates a local estimate of $\nabla R_{\mathcal{D}}(\cdot)$ (for simplicity, we denote the local estimator of device $i\in[m]$ by $g_i(\cdot)$) from its local dataset and sends the local estimation to the server. The server aggregates the received $\{g_1, g_2,\cdots,g_m\}$ by coordinate-wise trimmed mean and then updates the parameter vector (see Algorithm~\ref{Algo1} for details).
 
\begin{algorithm}[tbp]
    \caption{Byzantine-Resilient Heavy-tailed Gradient Descent (BHGD)}
    \label{Algo1}
    \begin{algorithmic}[1]
        \REQUIRE Initial parameter vector $w_0\in\mathcal{W}$, step size $\eta$, time horizon $T$.
		\ENSURE $\zeta\gets\frac{1}{(\Delta n\widehat{L})^d(m+1)d(mn)^d}, s \gets \sqrt{\frac{nv}{2\log(1/\zeta)}}, \tau\gets\sqrt{2\log(1/\zeta)}$.
		\FOR{$t\gets 0,1,\cdots,T-1$}
		    \STATE \underline{Server:} Send $w_t$ to all the worker machines
		    \STATE Each \underline{good device} $i\in\mathcal{G}_t$ do in parallel: \begin{enumerate}
		        \item Computes local estimate of gradient $g_i(w_t)$. Specifically, for $\forall k\in[d]$,
		            \begin{align*}
		                g_{i,k}(w_t) \gets &\frac{1}{n}\sum_{j=1}^{n}\left[g\left(1-\frac{g^2}{2s^2\tau}\right)-\frac{g^3}{6s^2}\right]\\
		                &+\frac{s}{n}\sum_{j=1}^{n}C\left(\frac{g}{s},\frac{|g|}{s\sqrt{\tau}}\right),
		            \end{align*}
		            where $g$ denotes $\nabla_k\ell(w_t,x_j)$.
		            
		            \item Sends $g_{i}(w_t)$ to the central machine
		    \end{enumerate}
		    \STATE \underline{Server:}
		    \begin{enumerate}
		        \item For each $k\in[d]$:
		        \begin{itemize}
		            \item Sorts $g_{i,k}(w_t)$'s in a non-decreasing order.
 		        \item Removes the largest and smallest $\beta$ fraction of elements in $\{g_{i,k}(w_t)\}_{i=1}^{m}$ and denotes the indices of the remaining elements as $\mathcal{U}_{k,t}$
 		        \item Aggregates by $$g_k(w_t)=\frac{1}{|\mathcal{U}_{k,t}|}\sum_{i\in\mathcal{U}_{k,t}}g_{i,k}(w_t)$$
 		        and denotes $g(w_t)=(g_1(w_t),\cdots,g_d(w_t))$.
		        \end{itemize}
		        \item Updates the parameter by  $$w_{t+1}=\prod_{\mathcal{W}}\left(w_t-\eta\cdot g(w_t)\right)$$
		    \end{enumerate}
		\ENDFOR
    \end{algorithmic}
\end{algorithm}

The local gradient estimator in our algorithm is inspired by the robust mean estimator for heavy-tailed distribution given in \cite{Holland_AISTATS19}. To be self-contained, we first review the estimator. For simplicity, we consider a one-dimensional random variable $x\sim\mathcal{X}$ and assume that $x_1, x_2,\cdots,x_n$ are i.i.d. samples of $x$. The robust estimator consists of three steps:

\begin{enumerate}
    \item {\textbf{Scaling and Truncation} }
        For each sample $x_i$, we re-scale it by dividing $s$ and apply a soft truncation function $\phi$ on the re-scaled one. Then we calculate the empirical mean of the altered samples and put the mean back to the original scale. That is,
        \begin{equation*}
            \frac{s}{n}\sum_{i=1}^{n}\phi(\frac{x_i}{s})\approx\mathbb{E}[x].
        \end{equation*}
    \item {\textbf{Noise Multiplication} }
        Let $\epsilon_1,\cdots,\epsilon_2$ be independent random noise generated from a common distribution $\nu$ with $\mathbb{E}[\epsilon_i]=0$ for each. We multiply each sample $x_i$ by $(1+\epsilon_i)$, and then perform the scaling and truncation step on $x_i\cdot(1+\epsilon_i)$. That is,
        \begin{equation*}
            \widetilde{x}(\epsilon)=\frac{s}{n}\sum_{i=1}^{n}\phi(\frac{x_i+\epsilon_i x_i}{s}).
        \end{equation*}
    \item {\textbf{Noise Smoothing} }
        We smooth the multiplicative noise via taking the expectation with respect to the noise distribution $\nu$. That is,
        \begin{equation}\label{eq-estimator}
            \widehat{x}=\mathbb{E}[\widetilde{x}(\epsilon)]=\frac{s}{n}\sum_{i=1}^n\int\phi(\frac{x_i+\epsilon_i x_i}{s})d\nu(\epsilon_i).
        \end{equation}
\end{enumerate}

We note that the randomness in final estimator in (\ref{eq-estimator}) is only dependent on the original samples. The explicit form of the integral in (\ref{eq-estimator}) is related to the choice of the soft truncation function $\phi(\cdot)$ and the noise distribution $\nu$. The results in \cite{Catoni_Arxiv17} shows that, if we set $\phi$ to be 
\begin{equation}\label{eq-funcphi}
    \phi(x)=
    \begin{cases}
        \frac{2\sqrt{2}}{3}, &x>\sqrt{2}\\
        x-\frac{x^3}{6}, &-\sqrt{2}\le x\le \sqrt{2}\\
        -\frac{2\sqrt{2}}{3}, &x<-\sqrt{2}
    \end{cases}
\end{equation}
and set $\nu=\mathcal{N}(0,\frac{1}{\tau})$, then the integral in (\ref{eq-estimator}) has an explicit form such that it can be computed efficiently. Generally, for any $a$ and $b$, we have
\begin{equation}\label{eq-genest}
    \mathbb{E}_\nu[\phi(a+b\sqrt{\tau}\epsilon)]=a(1-\frac{b^2}{2})-\frac{a^3}{6}+C(a,|b|).
\end{equation}
The term $C(a,|b|)$ in (\ref{eq-genest}) is a correction term with a simple form. To give its explicit form, We first define some preparatory notations:

\begin{flalign*}
    &V_-\triangleq\frac{\sqrt{2}-a}{|b|}, 
    \quad\quad V_+\triangleq\frac{\sqrt{2}+a}{|b|},\\
    &F_-\triangleq\Phi(-V_-), \quad\quad F_+\triangleq\Phi(-V_+),\\
    &E_-\triangleq\exp(-\frac{V_-^2}{2}), \quad E_+\triangleq\exp(-\frac{V_+^2}{2}),
\end{flalign*}
 where $\Phi$ denotes the CDF of the standard Gaussian distribution. 
 Then with these atomic elements, the explicit form of $C(a,|b|)$ can be described as follows:
 \begin{equation*}
     C(a,|b|)=T_1+T_2+T_3+T_4+T_5,
 \end{equation*}
 where
 \begin{flalign*}
 &T_1\triangleq\frac{2\sqrt{2}}{3}(F_--F_+)\\
 &T_2\triangleq-(a-\frac{a^3}{6})(F_-+F_+)\\
 &T_3\triangleq\frac{|b|}{\sqrt{2\pi}}(1-\frac{a^2}{2})(E_+-E_-)\\
 &T_4\triangleq\frac{ab^2}{2}(F_++F_-+\frac{1}{\sqrt{2\pi}}(V_+E_++V_-E_-))\\
 &T_5\triangleq\frac{|b|^3}{6\sqrt{2\pi}}((2+V_-^2)E_--(2+V_+^2)E_+).
 \end{flalign*}

The main idea of Algorithm~\ref{Algo1} is that, instead of using empirical mean as the local estimator which may be subject to the heavy-tailed outliers, we let each device apply the one-dimensional robust mean estimator described above to each coordinate of its local loss gradients so that a more accurate local estimator $g_i(\cdot)$ for $\nabla R_{\mathcal{D}}(\cdot)$ can be obtained. Specifically, in our setting, the parameter $a, b$ in (\ref{eq-genest}) should be $\frac{\nabla_k\ell(w_t,x_j)}{s}, \frac{\nabla_k\ell(w_t,x_j)}{s\sqrt{\tau}}$ respectively, and the final estimator is described in step $3(1)$ in Algorithm~\ref{Algo1}. The server then uses the coordinate-wise trimmed mean to aggregate these local estimators and obtain a global estimator $g(\cdot)$ for $\nabla R_{\mathcal{D}}(\cdot)$ (step $4(1)$). Note that, since the trimming threshold $\beta$ is at least $\alpha$, the trimming operation ensures that the effect of Byzantine devices can be removed and hence the global estimator $g(\cdot)$ is close to $\nabla R_{\mathcal{D}}(\cdot)$.

\subsection{Theoretical Results}

In this part, we analyse the performance of Algorithm~\ref{Algo1}. Specifically, we study the statistical error rates for strongly convex, general convex and non-convex population risk function respectively. For strongly-convex and general-convex case, we focus on the excess population risk, i.e., $R_{\mathcal{D}}(w_T)-R_{\mathcal{D}}(w^*)$. For non-convex case, we focus on the rate of convergence to a critical point of the population risk, i.e., $\min_{t=0,1,\cdots,T}\lVert\nabla R_{\mathcal{D}}(w_t)\rVert_2$. For all the cases, we provide the high probability upper bounds on these error rates.

The analysis of the error rates relies on the gradient estimation error in each iteration $t\in[T]$, i.e., $\|g(w_t)-\nabla R_{\mathcal{D}}(w_t)\|$. Hence the key step is to bound the uniform error of $g(w)$ for all $w\in\mathcal{W}$. Specifically, we have the following high probability bound on $\lVert g(w)-\nabla R_{\mathcal{D}}(w)\rVert_2$ for all $w\in\mathcal{W}$.

\begin{lemma}\label{le-unierr}
For all $w\in\mathcal{W}$, it holds with probability at least $1-\frac{1}{(mn)^d}$ that
\begin{equation*}
    \lVert g(w)-\nabla R_{\mathcal{D}}(w)\rVert_2 \in \mathcal{O}\left(\alpha d\sqrt{\frac{\log(mn)}{n}}+d\sqrt{\frac{\log(mn)}{mn}}\right).
\end{equation*}
\end{lemma}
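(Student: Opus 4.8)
The plan is to reduce the $\ell_2$ uniform error to a coordinate-wise, pointwise analysis and then promote it to a uniform-in-$w$ statement by a covering argument. The guiding observation is that the definition of $\zeta$ is engineered precisely so that a union bound over a covering net of $\mathcal{W}$, over the $d$ coordinates, and over the at most $m$ good devices collapses to the target confidence $1-(mn)^{-d}$. First I would establish a pointwise guarantee: fix a coordinate $k\in[d]$, a good device $i\in\mathcal{G}$, and a point $w\in\mathcal{W}$. Since $\nabla_k R_{\mathcal{D}}(w)=\mathbb{E}_z[\nabla_k\ell(w,z)]$ and, by Assumption~\ref{A4}, the samples $\{\nabla_k\ell(w,x_j)\}_{j=1}^n$ have second raw moment at most $v$, the Holland-type estimator $g_{i,k}(w)$ from \eqref{eq-genest} with the chosen $s$ and $\tau$ concentrates: invoking the analyses of \cite{Holland_AISTATS19,Catoni_Arxiv17} one gets $|g_{i,k}(w)-\nabla_k R_{\mathcal{D}}(w)|=\mathcal{O}(\sqrt{v\log(1/\zeta)/n})$ with probability at least $1-2\zeta$. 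I would record the resulting spread of the good estimates as $\rho\triangleq\mathcal{O}(\sqrt{v\log(1/\zeta)/n})$, and note that averaging over the $\sim m$ good devices sharpens the statistical part to $\mathcal{O}(\sqrt{v\log(1/\zeta)/(mn)})$.

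Next I would pass through the coordinate-wise trimmed mean. Because the trimming level satisfies $\beta\ge\alpha$, on each side at least $(\beta-\alpha)m$ genuinely good values are discarded, so every retained value, good or Byzantine, is sandwiched between two good-device estimates. A Yin-type trimmed-mean argument \cite{YinCRB_ICML18} then yields, for each $k$ and each fixed $w$,
\begin{equation*}
|g_k(w)-\nabla_k R_{\mathcal{D}}(w)|=\mathcal{O}\!\left(\alpha\sqrt{\frac{v\log(1/\zeta)}{n}}+\sqrt{\frac{v\log(1/\zeta)}{mn}}\right),
\end{equation*}
where the first term is the adversarial bias paid for the at most $\alpha m$ corrupted-but-retained entries (each confined to the good range $\rho$) together with the surplus trimming, and the second is the averaging error of the surviving good block.

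The main work is to make this hold simultaneously for all $w\in\mathcal{W}$. Here I would use Assumption~\ref{A3}: since each $\nabla_k\ell(\cdot,z)$ is $L_k$-Lipschitz, the smoothed summands defining $g_{i,k}(\cdot)$ in Algorithm~\ref{Algo1} are Lipschitz in $w$ with a constant controlled by $L_k$ and the truncation scale, and $\nabla_k R_{\mathcal{D}}(\cdot)$ is Lipschitz by Assumption~\ref{A2}. I would then take an $\epsilon$-net $\mathcal{W}_\epsilon$ of the diameter-$\Delta$ set with $|\mathcal{W}_\epsilon|\le(3\Delta/\epsilon)^d$, apply the pointwise bound at every net point, and extend to arbitrary $w$ via the Lipschitz estimate, choosing $\epsilon$ of order $1/(n\widehat{L})$ so that the discretization error is dominated by the statistical terms. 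This is exactly where the factor $(\Delta n\widehat{L})^d$ in $\zeta$ enters: a union bound over the $\le(\Delta n\widehat{L})^d$ net points, the $d$ coordinates, and the $m$ good devices (the extra $+1$ absorbing the aggregation step) multiplies the per-event failure $\zeta$ to give total failure $\le(mn)^{-d}$.

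Finally I would combine coordinates. Taking the $\ell_2$ norm gives a factor $\sqrt{d}$ from $\|\cdot\|_2\le\sqrt{d}\max_k|\cdot|$, while substituting $\log(1/\zeta)=\mathcal{O}(d\log(\Delta n\widehat{L}\,mn))=\widetilde{\mathcal{O}}(d)$ from the definition of $\zeta$ contributes a second $\sqrt{d}$; the product of the two $\sqrt{d}$ factors yields the stated $d$ and the claimed bound $\mathcal{O}(\alpha d\sqrt{\log(mn)/n}+d\sqrt{\log(mn)/(mn)})$. I expect the delicate point to be the uniform-in-$w$ control passing through the \emph{data-dependent} trimming: the retained index set $\mathcal{U}_{k,t}$ varies with $w$, so rather than differentiating the non-smooth trimmed mean directly, I would first establish the uniform bound on every individual good estimator $g_{i,k}(\cdot)$ and then exploit that trimming is order-based, so the sandwiching envelope argument, and hence the aggregate bound, hold uniformly once all good estimators are uniformly controlled.
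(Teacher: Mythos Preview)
Your proposal is correct and follows essentially the same route as the paper: pointwise Holland-type concentration for each good device and for the good-device average (the paper's Lemmas~\ref{le-maxerr} and~\ref{le-meanerr}), promotion to a uniform-in-$w$ statement via an $\epsilon$-net with $\epsilon\asymp 1/(n\widehat{L})$ and the coordinate-wise Lipschitzness from Assumption~\ref{A3}, then the three-term trimmed-mean decomposition applied \emph{after} the individual estimators are controlled uniformly (exactly as you flag in your final paragraph). The paper's decomposition is $\sum_{\mathcal{U}_k}=\sum_{\mathcal{G}}-\sum_{\mathcal{G}\cap\mathcal{T}}+\sum_{\mathcal{B}\cap\mathcal{U}_k}$ rather than a sandwiching argument, and the ``$+1$'' in $\zeta$ accounts for the separate mean-concentration event (Lemma~\ref{le-meanerr}) on top of the $m$ per-device events, but these are cosmetic differences.
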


We now provide statistical error rates for our algorithm.

\noindent{\bf Strongly convex population risks:} We first consider the case where the population risk function $R_{\mathcal{D}}(\cdot)$ is strongly convex. Note that the loss function for each data $\ell(\cdot, z)$ need not be strongly convex.

\begin{theorem}\label{th-hsco}
    Suppose Assumption \ref{A1}, \ref{A2}, \ref{A3} and \ref{A4} hold, and $R_{\mathcal{D}}(\cdot)$ is $\lambda_R$-strongly convex. Choose step size $\eta=\frac{1}{L_R}$ and run Algorithm~\ref{Algo1} for $T$ rounds, then with probability at least $1-\frac{1}{(mn)^d}$, we have the following bound on excess population risk,
    \begin{flalign*}
        &R_{\mathcal{D}}(w_T)-R_{\mathcal{D}}(w^*)\\
        &\le L_R(1-\frac{\lambda_R}{L_R+\lambda_R})^{2T}\lVert w_0-w^*\rVert_2^2+\frac{4L_R}{\lambda_R^2}\mathcal{E}^2,
    \end{flalign*}
    where $\mathcal{E}\in\mathcal{O}\left(\alpha d\sqrt{\frac{\log(mn)}{n}}+d\sqrt{\frac{\log(mn)}{mn}}\right)$.
\end{theorem}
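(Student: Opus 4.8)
The plan is to treat Algorithm~\ref{Algo1} as an instance of inexact projected gradient descent and to separate the analysis into a deterministic contraction argument and the stochastic gradient-estimation error supplied by Lemma~\ref{le-unierr}. First I would condition on the single high-probability event of Lemma~\ref{le-unierr}, which holds with probability $1-\frac{1}{(mn)^d}$; the crucial point is that this bound is \emph{uniform} over all $w\in\mathcal{W}$, so it applies simultaneously at every iterate $w_t$ even though the $w_t$ are data-dependent and hence correlated across rounds. Under this event I set the per-round error $e_t\triangleq g(w_t)-\nabla R_{\mathcal{D}}(w_t)$, so that $\lVert e_t\rVert_2\le\mathcal{E}$ for all $t$, after which the rest of the argument is purely deterministic.

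Next I would establish the one-step contraction of the exact step. Let $\tilde{w}_{t+1}\triangleq\prod_{\mathcal{W}}(w_t-\eta\nabla R_{\mathcal{D}}(w_t))$. Using the first-order optimality characterization of the constrained minimizer, $w^*=\prod_{\mathcal{W}}(w^*-\eta\nabla R_{\mathcal{D}}(w^*))$, together with non-expansiveness of the projection, I reduce $\lVert\tilde{w}_{t+1}-w^*\rVert_2$ to $\lVert(w_t-w^*)-\eta(\nabla R_{\mathcal{D}}(w_t)-\nabla R_{\mathcal{D}}(w^*))\rVert_2$. Expanding and invoking the co-coercivity inequality for $\lambda_R$-strongly convex, $L_R$-smooth functions, I note that with $\eta=\frac{1}{L_R}\le\frac{2}{\lambda_R+L_R}$ the squared-gradient coefficient is nonpositive and can be dropped, giving $\lVert\tilde{w}_{t+1}-w^*\rVert_2^2\le(1-\frac{2\lambda_R}{\lambda_R+L_R})\lVert w_t-w^*\rVert_2^2$. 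Since $1-2c\le(1-c)^2$ with $c\triangleq\frac{\lambda_R}{\lambda_R+L_R}$, this yields the clean contraction $\lVert\tilde{w}_{t+1}-w^*\rVert_2\le(1-c)\lVert w_t-w^*\rVert_2$.

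I would then fold in the gradient error: non-expansiveness of the projection gives $\lVert w_{t+1}-\tilde{w}_{t+1}\rVert_2\le\eta\lVert e_t\rVert_2\le\eta\mathcal{E}$, so the triangle inequality produces the perturbed recursion $\lVert w_{t+1}-w^*\rVert_2\le(1-c)\lVert w_t-w^*\rVert_2+\eta\mathcal{E}$. Unrolling the geometric series and using $\frac{\eta}{c}=\frac{1}{\lambda_R}+\frac{1}{L_R}\le\frac{2}{\lambda_R}$ gives $\lVert w_T-w^*\rVert_2\le(1-c)^T\lVert w_0-w^*\rVert_2+\frac{2\mathcal{E}}{\lambda_R}$. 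Finally I convert iterate distance to excess risk: by $L_R$-smoothness (using $\nabla R_{\mathcal{D}}(w^*)=0$), $R_{\mathcal{D}}(w_T)-R_{\mathcal{D}}(w^*)\le\frac{L_R}{2}\lVert w_T-w^*\rVert_2^2$, and applying $(a+b)^2\le 2a^2+2b^2$ reproduces exactly the claimed bound $L_R(1-c)^{2T}\lVert w_0-w^*\rVert_2^2+\frac{4L_R}{\lambda_R^2}\mathcal{E}^2$.

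The main obstacle I anticipate is not the recursion itself but matching the contraction constant $1-\frac{\lambda_R}{L_R+\lambda_R}$ exactly while correctly handling the projection; this hinges on invoking co-coercivity (rather than strong convexity and smoothness separately) and on the fixed-point characterization of $w^*$ under the projection map. A secondary subtlety is that the whole deterministic argument is only legitimate because Lemma~\ref{le-unierr} controls the estimation error uniformly in $w$: a merely pointwise bound would be insufficient, since the iterates depend on the very data used to build the estimator, which is precisely the correlation issue flagged among the technical challenges.
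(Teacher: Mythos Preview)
Your proposal is correct and follows essentially the same route as the paper: condition on the uniform event of Lemma~\ref{le-unierr}, use co-coercivity of a $\lambda_R$-strongly convex, $L_R$-smooth function with $\eta=\frac{1}{L_R}$ to obtain the contraction factor $1-\frac{\lambda_R}{L_R+\lambda_R}$, absorb the gradient error additively to form a perturbed linear recursion, unroll the geometric series, and finish with $L_R$-smoothness. The only cosmetic difference is in the handling of the projection: the paper simply drops it via $\lVert\Pi_{\mathcal{W}}(\hat w_{t+1})-w^*\rVert_2\le\lVert\hat w_{t+1}-w^*\rVert_2$ and then splits $\hat w_{t+1}$ into the exact gradient step plus error, whereas you introduce the intermediate projected exact step $\tilde w_{t+1}$ and invoke non-expansiveness of the projection twice; both routes yield the identical recursion and constants.
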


\noindent{\bf General convex population risks:} For the general convex population risk case, we need a mild technical assumption on the size of the parameter space $\mathcal{W}$.

\begin{assumption}\label{A5}
    The parameter space $\mathcal{W}$ contains the following $\ell_2$ ball centered at $w^*$:
    \[
        \{w\in\mathbb{R}^d:\lVert w-w^*\rVert_2\le2\lVert w_0-w^*\rVert_2\}.
    \]
\end{assumption}

Then we have the following result on excess population risk function.

\begin{theorem}\label{th-hgc}
    Suppose Assumption \ref{A1}, \ref{A2}, \ref{A3}, \ref{A4} and \ref{A5} hold, and $R_{\mathcal{D}}(\cdot)$ is convex. Choose step size $\eta=\frac{1}{L_R}$ and run Algorithm~\ref{Algo1} for $T=\frac{L_R}{\mathcal{E}}\lVert w_0-w^*\rVert_2$ rounds, then with probability at least $1-\frac{1}{(mn)^d}$, we have the following bound on excess population risk,
    \begin{equation*}
        R_{\mathcal{D}}(w_T)-R_{\mathcal{D}}(w^*)\le16\Delta\mathcal{E}+\frac{1}{2L_R}\mathcal{E}^2,
    \end{equation*}
    where $\mathcal{E}\in\mathcal{O}\left(\alpha d\sqrt{\frac{\log(mn)}{n}}+d\sqrt{\frac{\log(mn)}{mn}}\right)$.
\end{theorem}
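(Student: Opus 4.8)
The plan is to condition throughout on the event of Lemma~\ref{le-unierr}, which holds with probability at least $1-\frac{1}{(mn)^d}$ and supplies the \emph{uniform} guarantee $\lVert g(w)-\nabla R_{\mathcal{D}}(w)\rVert_2\le\mathcal{E}$ for every $w\in\mathcal{W}$, with $\mathcal{E}\in\mathcal{O}(\alpha d\sqrt{\log(mn)/n}+d\sqrt{\log(mn)/mn})$. On this event the update $w_{t+1}=\prod_{\mathcal{W}}(w_t-\eta\,g(w_t))$ is an \emph{inexact} projected gradient step whose oracle is accurate up to the additive error $e_t\triangleq g(w_t)-\nabla R_{\mathcal{D}}(w_t)$ with $\lVert e_t\rVert_2\le\mathcal{E}$. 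Everything that follows is then a deterministic analysis of inexact projected gradient descent on the convex, $L_R$-smooth function $R_{\mathcal{D}}$, resting on two one-step inequalities I will establish.

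First I would derive a telescoping progress inequality. Using the obtuse-angle property of the Euclidean projection -- the variational inequality $\langle w_t-\eta\,g(w_t)-w_{t+1},\,u-w_{t+1}\rangle\le0$ valid for every $u\in\mathcal{W}$, instantiated at $u=w^*$ -- together with convexity of $R_{\mathcal{D}}$ in the form $R_{\mathcal{D}}(w_t)-R_{\mathcal{D}}(w^*)\le\langle\nabla R_{\mathcal{D}}(w_t),w_t-w^*\rangle$, the smoothness descent inequality, and the step size $\eta=\frac{1}{L_R}$, the curvature terms $\frac{L_R}{2}\lVert w_{t+1}-w_t\rVert_2^2$ cancel and one is left with
\[
R_{\mathcal{D}}(w_{t+1})-R_{\mathcal{D}}(w^*)\le\frac{L_R}{2}\left(\lVert w_t-w^*\rVert_2^2-\lVert w_{t+1}-w^*\rVert_2^2\right)-\langle e_t,w_{t+1}-w^*\rangle,
\]
whose last term is controlled by $\lVert e_t\rVert_2\,\lVert w_{t+1}-w^*\rVert_2\le\mathcal{E}\Delta$ via Assumption~\ref{A1}. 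Second, taking $w_t$ itself as the projection comparison point and again invoking smoothness with $\eta=\frac{1}{L_R}$, I would obtain the approximate-descent inequality $R_{\mathcal{D}}(w_{t+1})-R_{\mathcal{D}}(w_t)\le-\frac{L_R}{2}\lVert w_{t+1}-w_t\rVert_2^2+\mathcal{E}\lVert w_{t+1}-w_t\rVert_2\le\frac{\mathcal{E}^2}{2L_R}$ after maximizing the right-hand side over $\lVert w_{t+1}-w_t\rVert_2$; this says the population risk can rise by at most $\frac{\mathcal{E}^2}{2L_R}$ per round.

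To finish, I would combine the two. Summing the progress inequality over $t=0,\dots,T-1$ telescopes the distance terms into the ergodic bound $\frac{1}{T}\sum_{t=1}^{T}(R_{\mathcal{D}}(w_t)-R_{\mathcal{D}}(w^*))\le\frac{L_R}{2T}\lVert w_0-w^*\rVert_2^2+\mathcal{E}\Delta$. Since the theorem concerns the \emph{last} iterate, I would transfer this averaged control to $w_T$ using the approximate-descent inequality, which forbids any late increase of the risk: comparing $w_T$ against a near-average index and paying the accumulated per-round slack $\frac{\mathcal{E}^2}{2L_R}$ upgrades the ergodic bound to a last-iterate bound at the cost of an additive term of order $\frac{1}{L_R}\mathcal{E}^2$. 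Substituting $T=\frac{L_R}{\mathcal{E}}\lVert w_0-w^*\rVert_2$ balances $\frac{L_R}{2T}\lVert w_0-w^*\rVert_2^2$ against this residual slack so both reduce to multiples of $\mathcal{E}\lVert w_0-w^*\rVert_2$, and bounding $\lVert w_0-w^*\rVert_2\le\Delta$ collapses the distance-dependent pieces into a multiple of $\Delta\mathcal{E}$, yielding the stated $16\Delta\mathcal{E}+\frac{1}{2L_R}\mathcal{E}^2$ (the exact constant $16$ reflecting loose bookkeeping). Assumption~\ref{A5} enters to guarantee, by an induction on the progress inequality, that every iterate stays in the ball of radius $2\lVert w_0-w^*\rVert_2$ around $w^*$ -- hence inside $\mathcal{W}$ -- so the projection never disrupts the contraction toward $w^*$ and the distance estimates remain valid.

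The main obstacle I anticipate is precisely the passage from the ergodic guarantee to the last-iterate guarantee: the telescoped sum only controls the \emph{average} excess risk, and converting this into a bound on $R_{\mathcal{D}}(w_T)$ genuinely needs the approximate-descent inequality to rule out a late-stage rise of the risk, together with the calibration of $T$ so that the initialization term and the accumulated descent-slack are simultaneously of the target order $\Delta\mathcal{E}$. A secondary delicate point is verifying the inductive role of Assumption~\ref{A5} -- that the iterates provably never leave the prescribed ball -- so that no projection artifacts break the telescoping.
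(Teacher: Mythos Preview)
Your proposal is correct and leads to the stated bound (in fact with better constants), but it follows a genuinely different route from the paper. The paper first uses co-coercivity of a convex smooth function to show $\lVert w_{t+1}-w^*\rVert_2\le\lVert w_t-w^*\rVert_2+\mathcal{E}/L_R$, hence with the chosen $T$ every iterate lies in the ball of radius $2\lVert w_0-w^*\rVert_2$ and the projection is inactive; then it derives the descent inequality $R_{\mathcal{D}}(w_{t+1})\le R_{\mathcal{D}}(w_t)-\tfrac{1}{2L_R}\lVert\nabla R_{\mathcal{D}}(w_t)\rVert_2^2+\tfrac{1}{2L_R}\mathcal{E}^2$ and runs a case analysis on whether $\lVert\nabla R_{\mathcal{D}}(w_t)\rVert_2<\sqrt{2}\,\mathcal{E}$ ever occurs. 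If not, it uses the classical reciprocal recursion $\tfrac{1}{R_{\mathcal{D}}(w_{t+1})-R_{\mathcal{D}}(w^*)}\ge\tfrac{1}{R_{\mathcal{D}}(w_t)-R_{\mathcal{D}}(w^*)}+\tfrac{1}{16L_RD_0^2}$ to get some $t_0$ with excess risk at most $16D_0\mathcal{E}$, and finally a contradiction argument to show the last iterate cannot exceed $16D_0\mathcal{E}+\tfrac{1}{2L_R}\mathcal{E}^2$. Your approach instead handles the projection directly through the obtuse-angle inequality, telescopes to an ergodic bound, and then lifts this to the last iterate via the per-round slack $\mathcal{E}^2/(2L_R)$; this is cleaner and avoids both the case split and the $1/t$ recursion.

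Two small corrections. First, in your last step the accumulated slack from the best index to $T$ is at most $T\cdot\tfrac{\mathcal{E}^2}{2L_R}=\tfrac{1}{2}\mathcal{E}\lVert w_0-w^*\rVert_2\le\tfrac{1}{2}\mathcal{E}\Delta$, i.e.\ of order $\mathcal{E}\Delta$, not of order $\mathcal{E}^2/L_R$ as you wrote; this is harmless since it only sharpens the final constant. Second, because you invoke the projection variational inequality at $u=w^*$ and $u=w_t$ and bound $\lVert w_{t+1}-w^*\rVert_2$ by $\Delta$ via Assumption~\ref{A1}, your argument does not actually require Assumption~\ref{A5}; the paper needs it only because it wants the projection to be inactive so that the unprojected update can be analyzed.
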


\noindent{\bf Non-convex population risks:} For the non-convex population risk case, we need a slightly distinct technical assumption on the size of $\mathcal{W}$.

\begin{assumption}\label{A6}
    Suppose that for all $w\in\mathcal{W}$, $\lVert\nabla R_{\mathcal{D}}(w)\rVert_2\le G$. We assume that $\mathcal{W}$ contains the $\ell_2$ ball centered at the initial parameter $w_0$: 
    \[
        \{w\in\mathbb{R}^d:\lVert w-w_0\rVert_2\le 2\frac{G+\mathcal{E}}{\mathcal{E}^2}[R_{\mathcal{D}}(w_0)-R_{\mathcal{D}}(w^*)]\}.
    \]
\end{assumption}

We have the following guarantee on the rate of convergence to a critical point of the population risk $R_{\mathcal{D}}(\cdot)$.

\begin{theorem}\label{th-hnc}
    Suppose Assumption \ref{A1}, \ref{A2}, \ref{A3}, \ref{A4} and \ref{A6} hold. Choose step size $\eta=\frac{1}{L_R}$ and run Algorithm~\ref{Algo1} for $T=\frac{2L_R}{\mathcal{E}^2}[R_{\mathcal{D}}(w_0)-R_{\mathcal{D}}(w^*)]$ rounds, then with probability at least $1-\frac{1}{(mn)^d}$, we have
    \begin{equation*}
        \min_{t=0,\cdots,T}\lVert \nabla R_{\mathcal{D}}(w_t)\rVert_2\le\sqrt{2}\mathcal{E},
    \end{equation*}
    where $\mathcal{E}\in\mathcal{O}\left(\alpha d\sqrt{\frac{\log(mn)}{n}}+d\sqrt{\frac{\log(mn)}{mn}}\right)$.
\end{theorem}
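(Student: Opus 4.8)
The plan is to condition on the high-probability event of Lemma~\ref{le-unierr} and then run a standard smooth non-convex descent argument, with the one twist that Assumption~\ref{A6} must be used to neutralize the projection step. Throughout I would write $e_t \triangleq g(w_t) - \nabla R_{\mathcal{D}}(w_t)$, so that on the good event (probability at least $1 - \frac{1}{(mn)^d}$) we have $\lVert e_t \rVert_2 \le \mathcal{E}$ simultaneously for every $t$, since Lemma~\ref{le-unierr} is uniform over all $w \in \mathcal{W}$. All subsequent deterministic inequalities are stated on this event.

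First I would show the projection is inactive, i.e.\ $w_{t+1} = w_t - \eta g(w_t)$ for every $t \le T$. Using $\eta = \frac{1}{L_R}$, the bound $\lVert \nabla R_{\mathcal{D}}(w_t) \rVert_2 \le G$ from Assumption~\ref{A6}, and $\lVert e_t \rVert_2 \le \mathcal{E}$, each unprojected step moves by at most $\frac{1}{L_R}\lVert g(w_t) \rVert_2 \le \frac{G + \mathcal{E}}{L_R}$. Summing over at most $T$ steps gives $\lVert w_t - w_0 \rVert_2 \le \frac{T(G+\mathcal{E})}{L_R} = \frac{2(G+\mathcal{E})}{\mathcal{E}^2}[R_{\mathcal{D}}(w_0) - R_{\mathcal{D}}(w^*)]$, which is exactly the radius of the ball that Assumption~\ref{A6} places inside $\mathcal{W}$. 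An induction on $t$ (the bound on the gradient at $w_t$ requires $w_t \in \mathcal{W}$, which the cumulative displacement bound supplies) then confirms every iterate lies in this ball, so the projection acts as the identity.

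Next I would invoke $L_R$-smoothness (Assumption~\ref{A2}) on the now-unprojected update. Expanding $g(w_t) = \nabla R_{\mathcal{D}}(w_t) + e_t$ in the descent inequality and using $\eta = \frac{1}{L_R}$, the cross terms $\langle \nabla R_{\mathcal{D}}(w_t), e_t \rangle$ cancel exactly, leaving
\begin{equation*}
    R_{\mathcal{D}}(w_{t+1}) \le R_{\mathcal{D}}(w_t) - \frac{1}{2L_R}\lVert \nabla R_{\mathcal{D}}(w_t) \rVert_2^2 + \frac{1}{2L_R}\lVert e_t \rVert_2^2 .
\end{equation*}
Replacing $\lVert e_t \rVert_2^2$ by $\mathcal{E}^2$, telescoping from $t = 0$ to $T-1$, and using $R_{\mathcal{D}}(w_T) \ge R_{\mathcal{D}}(w^*)$ yields $\frac{1}{T}\sum_{t=0}^{T-1} \lVert \nabla R_{\mathcal{D}}(w_t) \rVert_2^2 \le \frac{2L_R}{T}[R_{\mathcal{D}}(w_0) - R_{\mathcal{D}}(w^*)] + \mathcal{E}^2$. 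Since the minimum is at most the average and the chosen $T = \frac{2L_R}{\mathcal{E}^2}[R_{\mathcal{D}}(w_0) - R_{\mathcal{D}}(w^*)]$ makes the first term equal to $\mathcal{E}^2$, I obtain $\min_t \lVert \nabla R_{\mathcal{D}}(w_t) \rVert_2^2 \le 2\mathcal{E}^2$, and taking square roots gives the claim.

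The main obstacle is not the descent algebra — that is routine once the step size is fixed and the cross terms vanish — but the argument that the projection never fires. This is precisely what Assumption~\ref{A6} is engineered for: its radius is chosen to match the worst-case cumulative travel $\frac{T(G+\mathcal{E})}{L_R}$, and the subtlety is the circularity that bounding $\lVert g(w_t) \rVert_2$ requires $\lVert \nabla R_{\mathcal{D}}(w_t) \rVert_2 \le G$, which in turn requires $w_t \in \mathcal{W}$; resolving this by induction is the one careful step. I would also emphasize that the genuinely hard analytical content — the uniform gradient estimation bound over all of $\mathcal{W}$ in the presence of both heavy tails and Byzantine trimming — is already encapsulated in Lemma~\ref{le-unierr}, so on the good event this theorem reduces to deterministic optimization.
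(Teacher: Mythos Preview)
Your proposal is correct and follows essentially the same approach as the paper: first use Assumption~\ref{A6} and the per-step bound $\lVert g(w_t)\rVert_2 \le G+\mathcal{E}$ to show inductively that the iterates never leave $\mathcal{W}$ (so the projection is inert), then apply the $L_R$-smoothness descent inequality (the paper's (\ref{eq-cvxrskupd})), telescope, and substitute the prescribed $T$. Your treatment is in fact slightly more careful than the paper's in making the inductive circularity explicit and in noting the exact cancellation of the cross term $\langle \nabla R_{\mathcal{D}}(w_t), e_t\rangle$ when $\eta = 1/L_R$.
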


 It can be seen from the results above that, for all the cases, our algorithm achieves an error rate of $\widetilde{\mathcal{O}}\left(d^2\left[\frac{\alpha^2}{n}+\frac{1}{mn}\right]\right)$. The error rate we obtain matches the error rates given in \cite{YinCRB_ICML18}. It has been shown in \cite{YinCRB_ICML18} that, for strongly-convex population risk functions and a fixed $d$, no algorithm can achieve an error lower than $\widetilde{\Omega}\left(\frac{\alpha^2}{n}+\frac{1}{mn}\right)$, which implies that our algorithm is order-wise optimal in terms of ($\alpha$,$n$,$m$) in this case, even when considering the heavy-tailed data.
 
\subsection{Analysis of Algorithm~\ref{Algo1}}\label{sec-appendA}

{\bf Notation:}
 Recall that we denote the Byzantine devices and the good devices in round $t$ as $\mathcal{B}_t$ and $\mathcal{G}_t$ respectively. Moreover, we use $\mathcal{U}_{k,t}$ and $\mathcal{T}_{k,t}$ to denote the untrimmed devices and the trimmed devices with respect to coordinate $k$ in round $t$. For notation simplicity, we will drop the subscript $t$ when the context is clear. 

\subsubsection{Proof of Lemma~\ref{le-unierr}}

To prove Lemma~\ref{le-unierr}, we require the following two lemmas related to local estimators $g_{i}(\cdot)$'s. The following two Lemmas show that $g_{i,k}(w)$ is concentrated around $\nabla_k R_{\mathcal{D}}(w)$ for all good devices $i\in\mathcal{G}$, any fixed coordinate $k\in[d]$ and any fixed parameter $w\in\mathcal{W}$.

\begin{lemma}\label{le-maxerr}
    For any fixed $w\in\mathcal{W}$ and coordinate $k\in[d]$, the following holds with probability at least $1-m\cdot\zeta$,
    \begin{equation}\label{eq-maxerrdw}
        \max_{i\in\mathcal{G}}| g_{i,k}(w)-\nabla_k R_{\mathcal{D}}(w)|\le\sqrt{\frac{2v\log(1/\zeta)}{n}}+\sqrt{\frac{v}{n}}.
    \end{equation}
\end{lemma}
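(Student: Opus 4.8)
The plan is to prove a per-device concentration bound and then union-bound over the at most $m$ good devices. Fix the coordinate $k\in[d]$ and the parameter $w\in\mathcal{W}$, and consider one good device $i\in\mathcal{G}$. Since device $i$'s local samples are i.i.d.\ draws of $z\sim\mathcal{D}$, the per-sample gradients $X_j:=\nabla_k\ell(w,x_j)$, $j\in[n]$, are i.i.d.\ with mean $\mu_k:=\mathbb{E}[X_j]=\nabla_k R_{\mathcal{D}}(w)$ (interchanging gradient and expectation is justified by the Lipschitzness in Assumption~\ref{A3} via dominated convergence) and, by Assumption~\ref{A4}, second raw moment $\mathbb{E}[X_j^2]\le v$. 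By construction, $g_{i,k}(w)$ is exactly the one-dimensional robust mean estimator $\widehat{x}$ of (\ref{eq-estimator}) applied to $X_1,\dots,X_n$ with the soft-truncation $\phi$ of (\ref{eq-funcphi}) and Gaussian smoothing $\nu=\mathcal{N}(0,1/\tau)$, since the closed form (\ref{eq-genest}) matches step $3(1)$ of Algorithm~\ref{Algo1}. It therefore suffices to show that, for a single device, $|g_{i,k}(w)-\mu_k|$ is at most the claimed quantity with probability at least $1-\zeta$ (splitting the two tails at level $\zeta/2$ each); a union bound over the $|\mathcal{G}|\le m$ good devices then delivers the $1-m\zeta$ guarantee.

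\textbf{Single-device bound via the Catoni--Holland exponential-moment method.} The key property of the width function is the two-sided envelope $-\log(1-u+\tfrac{u^2}{2})\le\phi(u)\le\log(1+u+\tfrac{u^2}{2})$, which the cubic $\phi$ of (\ref{eq-funcphi}) satisfies (indeed $\log(1+u+u^2/2)=u-u^3/6+O(u^4)$, matching $\phi$ to third order). For the upper tail I would set $\lambda=n/s$, move $\lambda$ past the inner noise expectation by Jensen's inequality, apply the envelope, and use $\mathbb{E}[X_j]=\mu_k$, $\mathbb{E}[X_j^2]\le v$, $\mathbb{E}_\nu[\epsilon]=0$, $\mathbb{E}_\nu[\epsilon^2]=1/\tau$ to obtain
\[
    \mathbb{E}\!\left[\exp\!\big(\lambda\, g_{i,k}(w)\big)\right]\le\exp\!\left(\frac{n\mu_k}{s}+\frac{n(1+\tau^{-1})v}{2s^2}\right).
\]
A Chernoff bound then yields $\mathbb{P}\big(g_{i,k}(w)-\mu_k>t\big)\le\zeta$ for
\[
    t=\frac{s\log(1/\zeta)}{n}+\frac{(1+\tau^{-1})v}{2s},
\]
and the lower tail is handled symmetrically using the other half of the envelope. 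Substituting the calibrated values $s=\sqrt{nv/(2\log(1/\zeta))}$ and $\tau=\sqrt{2\log(1/\zeta)}$ from the Initialize step balances the first two contributions into $\sqrt{2v\log(1/\zeta)/n}$, while the noise-smoothing inflation factor $\tau^{-1}$ contributes the residual $O(\sqrt{v/n})$ term; collecting constants gives the stated bound $\sqrt{2v\log(1/\zeta)/n}+\sqrt{v/n}$.

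\textbf{Main obstacle.} The crux is this single-device concentration: deriving a sub-Gaussian-type tail for a \emph{soft-truncated, noise-smoothed} statistic under only the weak second-moment Assumption~\ref{A4}, rather than a boundedness or sub-exponential hypothesis. The delicate points are (i) verifying that the logarithmic envelope of $\phi$ is preserved after the Gaussian smoothing $\int\phi(\cdot)\,d\nu$, so that the exponential-moment method still closes, and (ii) tracking how the calibration of $s$ and $\tau$ separates the variance-driven term $\sqrt{2v\log(1/\zeta)/n}$ from the smoothing-induced residual $\sqrt{v/n}$, since both originate from the single bound $\mathbb{E}[X_j^2]\le v$. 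Once the per-device estimate is established, the remainder is a routine union bound; this lemma then serves as the fixed-$w$, fixed-$k$ ingredient that the covering argument behind the uniform bound of Lemma~\ref{le-unierr} will later promote to all $w\in\mathcal{W}$.
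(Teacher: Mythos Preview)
Your proposal is correct and structurally identical to the paper's proof: establish a per-device concentration bound and then union-bound over the at most $m$ good devices. The only difference is depth of detail: the paper invokes the single-device bound as a black box, citing it as \cite[Lemma~2]{Holland_AISTATS19} (restated as Lemma~\ref{le-poiacc}), whereas you unpack that result by sketching the Catoni--Holland exponential-moment argument behind it. Your sketch of that argument (logarithmic envelope of $\phi$, Jensen to pass the exponential through the noise smoothing, Chernoff with $\lambda=n/s$, then calibration of $s$ and $\tau$) is the standard derivation and is sound; the constants indeed collect to the stated form. So this is not a different route, just the same route with one cited lemma expanded inline.
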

\begin{lemma}\label{le-meanerr}
    For any fixed $w\in\mathcal{W}$ and coordinate $k\in[d]$, the following holds with probability at least $1-\zeta$,
    \begin{equation}\label{eq-meanerrdw}
       \left|\frac{1}{|\mathcal{G}|}\sum_{i\in\mathcal{G}}g_{i,k}(w)-\nabla_k R_{\mathcal{D}}(w)\right|\le \sqrt{\frac{2v\log(1/\zeta)}{(1-\alpha)mn}}+\sqrt{\frac{v}{(1-\alpha)mn}}.
    \end{equation}
\end{lemma}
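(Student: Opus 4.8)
The plan is to realize the good-device average as a single Holland-type estimator evaluated on the pooled sample, and then run a bias--variance argument on it. Writing $h_s(x)\triangleq s\,\mathbb{E}_{\epsilon\sim\nu}[\phi(x(1+\epsilon)/s)]$, the identity (\ref{eq-genest}) shows that the quantity computed in step $3(1)$ of Algorithm~\ref{Algo1} is exactly $g_{i,k}(w)=\frac1n\sum_{j=1}^n h_s(\nabla_k\ell(w,x_j^{(i)}))$, i.e.\ the empirical mean over device $i$'s $n$ samples of the bounded transform $h_s$ applied to the $k$-th gradient coordinate. Because good devices hold genuinely i.i.d.\ data (Byzantine nodes cannot contaminate local datasets) and use disjoint sample blocks, averaging over $i\in\mathcal{G}$ gives
\[
\frac{1}{|\mathcal{G}|}\sum_{i\in\mathcal{G}}g_{i,k}(w)=\frac{1}{|\mathcal{G}|n}\sum_{i\in\mathcal{G}}\sum_{j=1}^n h_s\big(\nabla_k\ell(w,x_j^{(i)})\big),
\]
which is precisely the same estimator evaluated on the pooled collection of $|\mathcal{G}|n\ge(1-\alpha)mn$ i.i.d.\ samples. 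Thus the lemma reduces to bounding the error of this one estimator on $(1-\alpha)mn$ samples, which is what lets the rate improve from the per-device $1/\sqrt{n}$ to $1/\sqrt{(1-\alpha)mn}$ (naively averaging the per-device bound of Lemma~\ref{le-maxerr} would not suffice).

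Denoting the pooled gradient coordinates by $X_1,\dots,X_{|\mathcal{G}|n}$ (with common law that of $X=\nabla_k\ell(w,z)$) and $\mu=\nabla_k R_{\mathcal{D}}(w)=\mathbb{E}[X]$, I would split
\[
\frac{1}{|\mathcal{G}|}\sum_{i\in\mathcal{G}}g_{i,k}(w)-\mu=\Big(\frac{1}{|\mathcal{G}|n}\sum_{\ell}h_s(X_\ell)-\mathbb{E}[h_s(X)]\Big)+\big(\mathbb{E}[h_s(X)]-\mu\big).
\]
For the fluctuation term I would apply Bernstein's inequality to the $|\mathcal{G}|n$ i.i.d.\ summands, using $|h_s(x)|\le\frac{2\sqrt2}{3}s$ (from $|\phi|\le\frac{2\sqrt2}{3}$ in (\ref{eq-funcphi})) as the boundedness constant and $\mathrm{Var}(h_s(X))\le\mathbb{E}[h_s(X)^2]\le v$ (Assumption~\ref{A4}, together with $|h_s(x)|\le|x|$ up to the factor $\mathbb{E}|1+\epsilon|\approx1$) as the variance proxy. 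This produces a variance term $\sqrt{2v\log(1/\zeta)/((1-\alpha)mn)}$ and a boundedness term of order $s\log(1/\zeta)/((1-\alpha)mn)$; substituting $s=\sqrt{nv/(2\log(1/\zeta))}$ makes the latter $O\big(\frac{1}{(1-\alpha)m}\sqrt{v\log(1/\zeta)/n}\big)$, which is dominated by the reported rate and is absorbed. A two-sided Bernstein and a minor adjustment of the constant inside $\zeta$ give the claimed probability $1-\zeta$; replacing $|\mathcal{G}|$ by its lower bound $(1-\alpha)m$ only enlarges the error, so it is harmless.

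The main obstacle is the bias $|\mathbb{E}[h_s(X)]-\mu|$, which I would attack by reusing the Catoni/Holland estimate developed for Lemma~\ref{le-maxerr}. The delicate point is that the scale $s$ is calibrated to the per-device count $n$ rather than to the pooled count $(1-\alpha)mn$; since the bias of a fixed-scale truncation is a property of $h_s$ and the law of $X$ alone and does not shrink with the number of samples, a crude second-moment bound only gives $|\mathbb{E}[h_s(X)]-\mu|=O(v/s)=O(\sqrt{v\log(1/\zeta)/n})$. To land the claimed second term $\sqrt{v/((1-\alpha)mn)}$ one must exploit the fine structure of the noise-smoothed transform -- the exact cancellations in (\ref{eq-genest}), the form of the correction $C(a,|b|)$, and the smoothing level $\tau=\sqrt{2\log(1/\zeta)}$ -- to argue that the bias enters only at the refined rate. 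Reconciling this fixed scale with the required $m$-dependence is the technical crux; once the bias is controlled at the stated level, the lemma follows by combining it with the Bernstein bound above.
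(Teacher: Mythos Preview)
Your key move---recognizing that the average over $\mathcal{G}$ is exactly the Holland estimator applied to the pooled collection of $|\mathcal{G}|n$ i.i.d.\ gradient coordinates---is precisely what the paper does. The paper's proof is much shorter than your decomposition, however: it simply invokes Lemma~\ref{le-poiacc} as a black box on the pooled sample of size $|\mathcal{G}|n$, reads off the bound $\sqrt{2v\log(1/\zeta)/(|\mathcal{G}|n)}+\sqrt{v/(|\mathcal{G}|n)}$, and then replaces $|\mathcal{G}|$ by $(1-\alpha)m$. It does not carry out the explicit bias--variance split via Bernstein that you propose.

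The scale-calibration issue you isolate is real, and the paper does not address it. Lemma~\ref{le-poiacc} as stated requires $s=\sqrt{Nv/(2\log(1/\zeta))}$ with $N$ the sample size being used, so applying it to $N=|\mathcal{G}|n$ pooled points would demand $s=\sqrt{|\mathcal{G}|nv/(2\log(1/\zeta))}$, whereas Algorithm~\ref{Algo1} fixes $s=\sqrt{nv/(2\log(1/\zeta))}$ at the per-device level. Your observation that the bias of a fixed-scale truncation is a property of $h_s$ and the law of $X$ alone, and therefore cannot shrink below $O(\sqrt{v\log(1/\zeta)/n})$ merely by pooling more samples, is correct. The paper's proof simply invokes the black-box lemma without commenting on this mismatch, so the ``technical crux'' you flag is not something the paper resolves---your more careful decomposition has in fact exposed a point on which the paper's argument is, at best, imprecise. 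In short: your high-level strategy is the paper's strategy; your detailed execution is more scrupulous; and the obstacle you could not close is one the paper sidesteps rather than confronts.
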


\begin{proof}[Proof of Lemma~\ref{le-maxerr}]
    The one-dimension estimator defined in (\ref{eq-estimator}) has the following pointwise accuracy, which is given in \cite{Holland_AISTATS19}:
    \begin{lemma}{\cite[Lemma 2]{Holland_AISTATS19}}\label{le-poiacc}
        Consider the dataset $\{x_i\}_{i=1}^n$ where $x_i$ are i.i.d. samples drawn from distribution $\mathcal{X}$. Assume that $\mathcal{X}$ has finite second-order moment and $\mathbb{E}_\mathcal{X}[|x|^2]\le v$. Then with probability at least $1-\zeta$, the estimator $\widehat{x}$ defined in (\ref{eq-estimator}) using truncation function defined in (\ref{eq-funcphi}), noise distribution $\nu=\mathcal{N}(0,\frac{1}{\tau})$ and scale $s=\sqrt{\frac{nv}{2\log(1/\zeta)}}$ satisfies
        \begin{equation*}
            |\widehat{x}-\mathbb{E}_\mathcal{X}[x]|
            \le\sqrt{\frac{2v\log(1/\zeta)}{n}}+\sqrt{\frac{v}{n}}.
        \end{equation*}
    \end{lemma}
    According to Lemma~\ref{le-poiacc}, we have for any $i\in\mathcal{G}$, $k\in[d]$ and $w\in\mathcal{W}$ that, with probability at least $1-\zeta$,
        \begin{equation*}
            |g_{i,k}(w)-\nabla_k R_{\mathcal{D}}(w)|\le\sqrt{\frac{2v\log(1/\zeta)}{n}}+\sqrt{\frac{v}{n}}.
        \end{equation*}
    Then by taking union bound over all good devices $i\in\mathcal{G}$, we have with probability at least $1-m\cdot\zeta$ that,
    \begin{equation*}
        \max_{i\in\mathcal{G}}|g_{i,k}(w)-\nabla_k R_{\mathcal{D}}(w)|\le\sqrt{\frac{2v\log(1/\zeta)}{n}}+\sqrt{\frac{v}{n}},
    \end{equation*}
    which concludes the proof.
\end{proof}

\begin{proof}[Proof of Lemma~\ref{le-meanerr}]
    By Lemma~\ref{le-poiacc}, we have with probability at least $1-\zeta$ that,
    \begin{equation*}
        \left|\frac{1}{|\mathcal{G}|}\sum_{i\in\mathcal{G}}g_{i,k}(w)-\nabla_k R_{\mathcal{D}}(w)\right|\le\sqrt{\frac{2v\log(1/\zeta)}{|\mathcal{G}|n}}+\sqrt{\frac{v}{|\mathcal{G}|n}}.
    \end{equation*}
    Since $|\mathcal{G}|\ge(1-\alpha)m$, we obtain with probability at least $1-\zeta$ that
    \begin{equation*}
        \left|\frac{1}{|\mathcal{G}|}\sum_{i\in\mathcal{G}}g_{i,k}(w)-\nabla_k R_{\mathcal{D}}(w)\right|\le\sqrt{\frac{2v\log(1/\zeta)}{(1-\alpha)mn}}+\sqrt{\frac{v}{(1-\alpha)mn}},
    \end{equation*}
    which concludes the proof.
\end{proof}

With Lemma~\ref{le-maxerr} and Lemma~\ref{le-meanerr}, we are now ready to prove Lemma~\ref{le-unierr}.

    Due to Lemma~\ref{le-maxerr} and Lemma~\ref{le-meanerr}, we already have (\ref{eq-maxerrdw}) and (\ref{eq-meanerrdw}) hold for any fixed $w\in\mathcal{W}$ and $k\in[d]$. Next, to extend the pointwise accuracy to the uniform accuracy that holds for all $w\in\mathcal{W}$, we need to utilize the standard covering net argument. Let $\mathcal{W}_\epsilon=\{w^1, w^2,\cdots,w^{N_\epsilon}\}$ be a finite subset of $\mathcal{W}$ such that for any $w\in\mathcal{W}$, there exists some $w^p\in\mathcal{W}_\epsilon$ satisfying $\lVert w-w^p\rVert_2\le\epsilon$. According to the basic property of covering numbers for compact subsets of Euclidean space \cite{Tikhomirov1993}, we know that $N_\epsilon\le(\frac{3\Delta}{2\epsilon})^d$. Take the union bound, we have both (\ref{eq-maxerrdw}) and (\ref{eq-meanerrdw}) hold for any $k\in[d]$ and all $w=w^p\in\mathcal{W}_\epsilon$ with probability at least $1-N_\epsilon(m+1)\zeta$. 
    
    Then consider an arbitrary $w\in\mathcal{W}$. Suppose that $\lVert w-w^p\rVert_2\le\epsilon$. Since in Assumption~\ref{A3}, we assume that for each $k\in[d]$, $\nabla_k\ell(w,z)$ is $L_k$-Lipschitz for all data $z$, we know that
    \begin{equation}\label{eq-lipR}
        |\nabla_k R_{\mathcal{D}}(w)-\nabla_k R_{\mathcal{D}}(w^p)|\le L_k\epsilon.
    \end{equation}
    According to \cite[Lemma 4]{Holland_AISTATS19}, the one-dimension estimator defined in (\ref{eq-estimator}) satisfies that $|\widehat{x}(X)-\widehat{x}(X^\prime)|\le\frac{c_\nu}{n}\lVert X-X^\prime\rVert_1$ where $X,X^\prime$ denote two datasets and $c_\nu$ is a constant that equals $1-2\Phi(-\sqrt{\tau})+\sqrt{\frac{2}{\tau\pi}}\exp(-\frac{\tau}{2})$. For each coordinate $k\in[d]$ we have
    \begin{flalign}\label{eq-lipest}
        |g_{i,k}(w)-g_{i,k}(w^p)|
        &\le\frac{c_\nu}{n}\sum_{j=1}^n\left|\nabla_k\ell(w,z_j)-\nabla_k\ell(w^p,z_j)\right|\notag\\
        &\le\frac{c_\nu}{n}\cdot n\cdot L_k\lVert w-w^p \rVert_2\le c_\nu L_k\epsilon,
    \end{flalign}
    where the second inequality is due to Assumption \ref{A3}.
    Based on (\ref{eq-lipR}) and (\ref{eq-lipest}), we obtain for any $k\in[d]$ and all $w\in\mathcal{W}$ that, with probability at least $1-N_\epsilon(m+1)\zeta$,
    \begin{equation}\label{eq-maxerrd}
        \max_{\mathcal{G}}|g_{i,k}(w)-\nabla_k R_{\mathcal{D}}(w)|
        \le\sqrt{\frac{2v\log(1/\zeta)}{n}}+\sqrt{\frac{v}{n}}+(1+c_\nu)L_k\epsilon,
    \end{equation}
    and 
    \begin{flalign}\label{eq-meanerrd}
        &\left|\frac{1}{|\mathcal{G}|}\sum_{i\in\mathcal{G}}g_{i,k}(w)-\nabla_k R_{\mathcal{D}}(w)\right|\notag\\
        &\le \sqrt{\frac{2v\log(1/\zeta)}{(1-\alpha)mn}}+\sqrt{\frac{v}{(1-\alpha)mn}}+(1+c_\nu)L_k\epsilon.
    \end{flalign}

    We next move on to $g(\cdot)$. We have the following for all $w\in\mathcal{W}$ and any coordinate $k\in[d]$.
    \begin{flalign}
        &\quad\left|g_k(w)-\nabla_k R_{\mathcal{D}}(w)\right|\notag\\
        &=\left|\frac{1}{|\mathcal{U}_k|}\sum_{i\in\mathcal{U}_k}(g_{i,k}(w)-\nabla_k R_{\mathcal{D}}(w))\right|\notag\\
        &\le\frac{1}{|\mathcal{U}_k|}\left|\sum_{i\in\mathcal{G}}(g_{i,k}(w)-\nabla_k R_{\mathcal{D}}(w))\right|\notag\\
        &\quad+ \frac{1}{|\mathcal{U}_k|}\left|\sum_{i\in\mathcal{G}\cap\mathcal{T}}(g_{i,k}(w)-\nabla_k R_{\mathcal{D}}(w))\right|\notag\\
        &\quad+\frac{1}{|\mathcal{U}_k|}\left|\sum_{i\in\mathcal{B}\cap\mathcal{U}_k}(g_{i,k}(w)-\nabla_k R_{\mathcal{D}}(w))\right|.\label{eq-errterm}
    \end{flalign}
    We bound each term in (\ref{eq-errterm}) respectively. By (\ref{eq-meanerrd}) we have
    \begin{flalign}\label{eq-gterm1}
        &\frac{1}{|\mathcal{U}_k|}\left|\sum_{i\in\mathcal{G}}(g_{i,k}(w)-\nabla_k R_{\mathcal{D}}(w))\right|\notag\\
        &=\frac{|\mathcal{G}|}{|\mathcal{U}_k|}\left|\frac{1}{|\mathcal{G}|}\sum_{i\in\mathcal{G}}(g_{i,k}(w)-\nabla_k R_{\mathcal{D}}(w))\right|\notag\\
        &\le \frac{1-\alpha}{1-2\beta}\left(\sqrt{\frac{2v\log(1/\zeta)}{(1-\alpha)mn}}+\sqrt{\frac{v}{(1-\alpha)mn}}+(1+c_\nu)L_k\epsilon\right).
    \end{flalign}
    By (\ref{eq-maxerrd}), we have
    \begin{flalign}\label{eq-gterm2}
        &\frac{1}{|\mathcal{U}_k|}\left|\sum_{i\in\mathcal{G}\cap\mathcal{T}}(g_{i,k}(w)-\nabla_k R_{\mathcal{D}}(w))\right|\notag\\
        &\le\frac{2\beta}{1-2\beta}\max_{i\in\mathcal{G}}| g_{i,k}(w)-\nabla_k R_{\mathcal{D}}(w)|\notag\\
        &\le \frac{2\beta}{1-2\beta}\left(\sqrt{\frac{2v\log(1/\zeta)}{n}}+\sqrt{\frac{v}{n}}+(1+c_\nu)L_k\epsilon\right).
    \end{flalign}
     Since $\beta\ge\alpha$, w.l.o.g., we assume that $\mathcal{G}\cap\mathcal{T}_k\neq\emptyset$. Then by (\ref{eq-maxerrd}) again, we have
    \begin{flalign}\label{eq-gterm3}
        &\frac{1}{|\mathcal{U}_k|}\left|\sum_{i\in\mathcal{B}\cap\mathcal{U}_k}(g_i(w)-\nabla R_{\mathcal{D}}(w))\right|\notag\\
        &\le\frac{\alpha}{1-2\beta}\max_{i\in\mathcal{G}}|g_{i,k}(w)-\nabla_k R_{\mathcal{D}}(w)|\notag\\
        &\le \frac{\alpha}{1-2\beta}\left(\sqrt{\frac{2v\log(1/\zeta)}{n}}+\sqrt{\frac{v}{n}}+(1+c_\nu)L_k\epsilon\right).
    \end{flalign}
    It is worth noting that, all (\ref{eq-gterm1}), (\ref{eq-gterm2}) and (\ref{eq-gterm3}) hold as long as both (\ref{eq-maxerrd}) and (\ref{eq-meanerrd}) hold, which is with probability at least $1-N_\epsilon(m+1)\zeta$. Hence, With the same probability, we have the following for all $w\in\mathcal{W}$ and any $k\in[d]$:
    \begin{flalign*}
        &|g_k(w)-\nabla_k R_{\mathcal{D}}(w)|\\
        &\le \frac{\alpha+2\beta}{1-2\beta}\left(\sqrt{\frac{2v\log(1/\zeta)}{n}}+\sqrt{\frac{v}{n}}+(1+c_\nu)L_k\epsilon\right)\notag\\
        &\quad+\frac{1-\alpha}{1-2\beta}\left(\sqrt{\frac{2v\log(1/\zeta)}{(1-\alpha)mn}}+\sqrt{\frac{v}{(1-\alpha)mn}}+(1+c_\nu)L_k\epsilon\right).
    \end{flalign*}
    Taking the union bound for all $k\in[d]$ yields
    \begin{flalign}
        &\lVert g(w)-\nabla R_{\mathcal{D}}(w)\rVert_2\notag\\
        &\le \sqrt{2}\cdot\Bigg(\frac{\alpha+2\beta}{1-2\beta}\sqrt{\frac{2vd\log(1/\zeta)}{n}}+\frac{\alpha+2\beta}{1-2\beta}\sqrt{\frac{vd}{n}}\notag\\
        &\quad+\frac{1-\alpha}{1-2\beta}\sqrt{\frac{2vd\log(1/\zeta)}{(1-\alpha)mn}}+\frac{1-\alpha}{1-2\beta}\sqrt{\frac{vd}{(1-\alpha)mn}}\notag\\
        &\quad+\frac{1+2\beta}{1-2\beta}(1+c_\nu)\widehat{L}\epsilon\Bigg),
    \end{flalign}
    which holds for all $w\in\mathcal{W}$ with probability at least $1-N_\epsilon(m+1)d\zeta$.
    Setting $\epsilon=\frac{3}{2n\widehat{L}}$ and noting that $N_\epsilon=(\Delta n\widehat{L})^d$, $\zeta=\frac{1}{(\Delta n\widehat{L})^d(m+1)d(mn)^d}$, we obtain for all $w\in\mathcal{W}$ that, with probability at least $1-\frac{1}{(mn)^d}$,
    \begin{equation}
        \lVert g(w)-\nabla R_{\mathcal{D}}(w)\rVert_2\in \mathcal{O}\left(\alpha d\sqrt{\frac{\log(mn)}{n}}+d\sqrt{\frac{\log(mn)}{mn}}\right).
    \end{equation}

\subsubsection{Proof of Theorem~\ref{th-hsco}}\label{proof-th1}


Recall that in round $t$, Algorithm~\ref{Algo1} updates the parameter vector by 
\begin{equation*}
    w_{t+1}=\prod_\mathcal{W}(w_t-\eta\cdot g(w_t)).
\end{equation*} 
Define $\widehat{w}_{t+1}=w_t-\eta\cdot g(w_t)$ and thus we have $w_{t+1}=\prod_\mathcal{W}\widehat{w}_{t+1}$. By the property of Euclidean projection, we know that
\begin{equation*}
    \lVert w_{t+1}-w^*\rVert_2\le\lVert \widehat{w}_{t+1}-w^*\rVert_2.
\end{equation*}
Hence we can further obtain
\begin{flalign}\label{eq-itergap}
    &\| w_{t+1}-w^*\|_2\le\| w_t-\eta g(w_t)-w^*\|_2\notag\\
    &\le\| w_t-\eta\nabla R_{\mathcal{D}}(w_t)-w^*\|_2+\eta\| g(w_t)-\nabla R_{\mathcal{D}}(w_t)\|_2.
\end{flalign}
By the co-coercivity of strongly convex function (\cite[Lemma 3.11]{Bubeck2017convex}), we have
\begin{flalign*}
    &\langle w_t-w^*,\nabla R_{\mathcal{D}}(w_t)\rangle\notag\\
    &\ge\frac{L_R\lambda_R}{L_R+\lambda_R}\lVert w_t-w^*\rVert_2^2+\frac{1}{L_R+\lambda_R}\lVert \nabla R_{\mathcal{D}}(w_t)\rVert_2^2.
\end{flalign*}
Taking $\eta=\frac{1}{L_R}$, for the first term on the right hand side of (\ref{eq-itergap}), we have
\begin{flalign*}
    &\lVert w_t-\eta\nabla R_{\mathcal{D}}(w_t)-w^*\rVert_2^2\\
    &= \lVert w_t-w^*\rVert_2^2-2\eta\langle w_t-w^*,\nabla R_{\mathcal{D}}(w_t)\rangle+\eta^2\lVert\nabla R_{\mathcal{D}}(w_t)\rVert_2^2\\
    &\le(1-\frac{2\lambda_R}{L_R+\lambda_R})\lVert w_t-w^*\rVert_2^2-\frac{2}{L_R(L_R+\lambda_R)}\lVert\nabla R_{\mathcal{D}}(w_t)\rVert_2^2\\
    &\quad+\frac{1}{L_R^2}\lVert\nabla R_{\mathcal{D}}(w_t)\rVert_2^2\\
    &\le(1-\frac{2\lambda_R}{L_R+\lambda_R})\lVert w_t-w^*\rVert_2^2,
\end{flalign*}
where the second inequality is due to the fact that $\lambda_R<L_R$. Using the fact that $\sqrt{1-x}\le1-\frac{x}{2}$, we get
\begin{equation}\label{eq-sct1}
    \lVert w_t-\eta\nabla R_{\mathcal{D}}(w_t)-w^*\rVert_2\le(1-\frac{\lambda_R}{L_R+\lambda_R})\lVert w_t-w^*\rVert_2.
\end{equation}
Combining (\ref{eq-itergap}) and (\ref{eq-sct1}), we get
\begin{equation}\label{eq-iter}
    \lVert w_{t+1}-w^*\rVert_2\le(1-\frac{\lambda_R}{L_R+\lambda_R})\lVert w_t-w^*\rVert_2+\frac{1}{L_R}\mathcal{E},
\end{equation}
where $\mathcal{E}$ denotes the uniform upper bound on $\lVert g(w_t)-\nabla R_{\mathcal{D}}(w_t)\rVert$ and $\mathcal{E}\in\mathcal{O}\left(\alpha d\sqrt{\frac{\log(mn)}{n}}+d\sqrt{\frac{\log(mn)}{mn}}\right)$ according to Lemma~\ref{le-unierr}.
By iterating (\ref{eq-iter}) and simplifying the geometric series, we get
\begin{equation*}
    \lVert w_T-w^*\rVert_2\le(1-\frac{\lambda_R}{L_R+\lambda_R})^{T}\lVert w_0-w^*\rVert_2+\frac{2}{\lambda_R}\mathcal{E}.
\end{equation*}
By the smoothness of $R_{\mathcal{D}}(\cdot)$, we have 
\begin{flalign}
    &R_{\mathcal{D}}(w_T)-R_{\mathcal{D}}(w^*)\notag\\
    &\le\frac{L_R}{2}\lVert w_T-w^* \rVert_2^2\notag\\
    &\le\frac{L_R}{2}\left((1-\frac{\lambda_R}{L_R+\lambda_R})^{T}\lVert w_0-w^*\rVert_2+\frac{2}{\lambda_R}\mathcal{E}\right)^2\notag\\
    &\le L_R(1-\frac{\lambda_R}{L_R+\lambda_R})^{2T}\lVert w_0-w^*\rVert_2^2+\frac{4L_R}{\lambda_R^2}\mathcal{E}^2,
\end{flalign}
which concludes the proof.

\subsubsection{Proof of Theorem~\ref{th-hgc}}\label{proof-th2}


We first show that with Assumption~\ref{A5} and the choice of step size $\eta=\frac{1}{L_R}$, the iterative parameters (i.e., $w_t$'s) all stay in $\mathcal{W}$ without using projection. According to the update rule, we have

\begin{flalign}\label{eq-itergap2}
    &\lVert w_{t+1}-w^*\rVert_2\notag\\
    &\le\lVert w_t-\eta\nabla R_{\mathcal{D}}(w_t)-w^* \rVert_2+\eta\lVert g(w_t)-\nabla R_{\mathcal{D}}(w_t) \rVert_2.
\end{flalign}

For the first term on the right hand side of the inequality above, we have
\begin{flalign*}
    &\lVert w_t-\eta\nabla R_{\mathcal{D}}(w_t)-w^*\rVert_2^2\\
    &= \lVert w_t-w^*\rVert_2^2-2\eta\langle \nabla R_{\mathcal{D}}(w_t), w_t-w^*\rangle+\eta^2\lVert \nabla R_{\mathcal{D}}(w_t)\rVert_2^2\notag\\
    &\le\lVert w_t-w^* \rVert_2^2-2\eta\frac{1}{L_R}\lVert\nabla R_{\mathcal{D}}(w_t) \rVert_2^2+\eta^2\lVert \nabla R_{\mathcal{D}}(w_t)\rVert_2^2\notag\\
    &=\lVert w_t-w^*\rVert_2^2-\frac{1}{L_R^2}\lVert \nabla R_{\mathcal{D}}(w_t)\rVert_2^2\le\lVert w_t-w^*\rVert_2^2,
\end{flalign*}
where the first inequality is due to the co-coercivity of convex functions. Taking this back to (\ref{eq-itergap2}), we get
\begin{equation}\label{eq-itergapupdt}
    \lVert w_{t+1}-w^*\rVert_2\le\lVert w_t-w^*\rVert_2+\frac{\mathcal{E}}{L_R},
\end{equation}
Define $D_t\triangleq\lVert w_0-w^*\rVert_2+\frac{t\mathcal{E}}{L_R}$ for $t=0,\cdots,T$. Then we have $\|w_{t}-w^*\|_2\le D_t$ and $D_0=\lVert w_0-w^*\rVert_2$. Since $T=\frac{L_R D_0}{\mathcal{E}}$, we can conclude that for all $t=0,\cdots,T-1$, $w_t\in\mathcal{W}$. 

By the smoothness of $R_{\mathcal{D}}(w_t)$, we have
\begin{flalign}
    &R_{\mathcal{D}}(w_{t+1})\notag\\
    &\le R_{\mathcal{D}}(w_t)+\langle \nabla R_{\mathcal{D}}(w_t), w_{t+1}-w_{t}\rangle+\frac{L_R}{2}\lVert w_{t+1}-w_{t}\rVert_2^2\notag\\
    &= R_{\mathcal{D}}(w_t)+\eta\langle \nabla R_{\mathcal{D}}(w_t), -g(w_t)+\nabla R_{\mathcal{D}}(w_t)-\nabla R_{\mathcal{D}}(w_t)\rangle\notag\\
    &\quad+\eta^2\frac{L_R}{2}\lVert g(w_t)\notag-\nabla R_{\mathcal{D}}(w_t)+\nabla R_{\mathcal{D}}(w_t)\rVert_2^2\notag\\
    &= R_{\mathcal{D}}(w_t)-\eta\langle\nabla R_{\mathcal{D}}(w_t), g(w_t)-\nabla R_{\mathcal{D}}(w_t)\rangle\notag\\
    &-\eta\lVert \nabla R_{\mathcal{D}}(w_t)\rVert_2^2+\eta^2\frac{L_R}{2}\lVert g(w_t)-\nabla R_{\mathcal{D}}(w_t)\rVert_2^2\notag\\
    &+\eta^2\frac{L_R}{2}\lVert \nabla R_{\mathcal{D}}(w_t)\rVert_2^2\notag\\
    &+\eta^2 L_R\langle g(w_t)-\nabla R_{\mathcal{D}}(w_t), \nabla R_{\mathcal{D}}(w_t) \rangle\notag\\
    &\le R_{\mathcal{D}}(w_t)-\frac{1}{2L_R}\lVert \nabla R_{\mathcal{D}}(w_t) \rVert_2^2+\frac{1}{2L_R}\mathcal{E}^2.\label{eq-cvxrskupd}
\end{flalign}

The remainder of the proof relies on the following lemma. 
\begin{lemma}\label{le-cvxaux}
    When $R_{\mathcal{D}}(w)$ is convex and $g(\cdot)$ satisfies Lemma~\ref{le-unierr} for all $w\in\mathcal{W}$, by running Algorithm~\ref{Algo1} for $T=\frac{L_R D_0}{\mathcal{E}}$ rounds, there exists some $t\in\{0,\cdots,T\}$ such that $R_{\mathcal{D}}(w_{t})-R_{\mathcal{D}}(w^*)\le 16 D_0\mathcal{E}$.
\end{lemma}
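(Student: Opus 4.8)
The plan is to convert the per-step descent inequality (\ref{eq-cvxrskupd}) into a self-improving recursion on the excess risk and then invoke a standard $\mathcal{O}(1/t)$ decay bound. Write $h_t\triangleq R_{\mathcal{D}}(w_t)-R_{\mathcal{D}}(w^*)$ for brevity. First I would collect the two facts already available from the main proof: the iterates never leave the ball of radius $2D_0$ around $w^*$, i.e. $\lVert w_t-w^*\rVert_2\le D_t\le D_T=2D_0$ for every $t\le T$ (this follows since $\lVert w_t-w^*\rVert_2\le D_t$, $D_t=D_0+t\mathcal{E}/L_R$, and the choice $T=L_R D_0/\mathcal{E}$ gives $D_T=2D_0$, using Assumption~\ref{A5} so that no projection interferes); and the descent bound (\ref{eq-cvxrskupd}), namely $h_{t+1}\le h_t-\frac{1}{2L_R}\lVert\nabla R_{\mathcal{D}}(w_t)\rVert_2^2+\frac{1}{2L_R}\mathcal{E}^2$.

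Next I would use convexity to lower bound the gradient norm by the excess risk. First-order convexity gives $h_t\le\langle\nabla R_{\mathcal{D}}(w_t),w_t-w^*\rangle\le\lVert\nabla R_{\mathcal{D}}(w_t)\rVert_2\,\lVert w_t-w^*\rVert_2\le 2D_0\lVert\nabla R_{\mathcal{D}}(w_t)\rVert_2$, hence $\lVert\nabla R_{\mathcal{D}}(w_t)\rVert_2\ge h_t/(2D_0)$. Substituting this into the descent bound yields the quadratic recursion
\[
    h_{t+1}\le h_t-\frac{h_t^2}{8L_R D_0^2}+\frac{\mathcal{E}^2}{2L_R},
\]
valid for every $t\le T-1$ as long as the iterate-distance bound holds.

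Finally I would argue by a dichotomy on $\min_{t\le T}h_t$. If some $h_t\le 16D_0\mathcal{E}$ then the claim holds immediately, so suppose $h_t>16D_0\mathcal{E}$ for all $t\in\{0,\dots,T\}$. In this regime $\mathcal{E}<h_t/(16D_0)$, so the additive noise term $\frac{\mathcal{E}^2}{2L_R}$ is dominated by a small constant fraction of the progress term $\frac{h_t^2}{8L_R D_0^2}$; absorbing it leaves a clean self-bounding recursion $h_{t+1}\le h_t-c\,h_t^2$ with $c=\Theta(1/(L_R D_0^2))$. Taking reciprocals and telescoping (the standard step $1/h_{t+1}\ge 1/h_t+c$) produces $h_T\le 1/(cT)=\mathcal{O}(L_R D_0^2/T)$, and plugging in $T=L_R D_0/\mathcal{E}$ gives $h_T=\mathcal{O}(D_0\mathcal{E})$ with a constant strictly below $16$, contradicting $h_T>16D_0\mathcal{E}$. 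The main obstacle is exactly the bookkeeping in this last step: verifying that the noise floor $\frac{\mathcal{E}^2}{2L_R}$ is genuinely negligible against the quadratic progress once $h_t$ exceeds the $16D_0\mathcal{E}$ threshold, so that the recursion collapses to the form to which the $\mathcal{O}(1/t)$ lemma applies. The generous constant $16$ is chosen precisely to supply enough slack both for this domination and for closing the final contradiction.
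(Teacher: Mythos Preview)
Your proposal is correct and follows essentially the same route as the paper: both arguments combine the descent inequality (\ref{eq-cvxrskupd}) with the convexity bound $\lVert\nabla R_{\mathcal{D}}(w_t)\rVert_2\ge h_t/\lVert w_t-w^*\rVert_2$, absorb the $\mathcal{E}^2$ noise term via a dichotomy, and then telescope reciprocals to obtain the $\mathcal{O}(1/T)$ rate. The only cosmetic difference is where the dichotomy is placed---the paper splits on whether some $\lVert\nabla R_{\mathcal{D}}(w_t)\rVert_2<\sqrt{2}\,\mathcal{E}$ and substitutes the convexity bound afterward, whereas you split directly on whether some $h_t\le 16D_0\mathcal{E}$ and substitute beforehand; both lead to the same recursion and the constant $16$ falls out identically.
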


\begin{proof}[Proof of Lemma~\ref{le-cvxaux}]
    Since $T=\frac{L_R D_0}{\mathcal{E}}$, we have $D_t \le 2D_0$ for all $t = 0,1,\cdots, T$. According to the first order optimality of convex functions, for any $w$:
	\begin{flalign}
		&R_{\mathcal{D}}(w)-R_{\mathcal{D}}(w^*) \notag\\
		&\le \Braket{\nabla R_{\mathcal{D}}(w), w-w^*} \le \Vert \nabla R_{\mathcal{D}}(w)\Vert_2\Vert w-w^*\Vert_2,
	\end{flalign}
		thus we have
		\begin{equation}\label{eq-lbnormR}
			\Vert \nabla R_{\mathcal{D}}(w) \Vert_2 \ge \frac{R_{\mathcal{D}}(w)-R_{\mathcal{D}}(w^*)}{\Vert w-w^*\Vert_2}.
		\end{equation}
	Suppose that there exists $t \in  \{0,1,\cdots, T-1\}$ s.t. $\Vert \nabla R_{\mathcal{D}}(w_t)\Vert_2 < \sqrt{2}\mathcal{E}$, then we have:
	\[
		R_{\mathcal{D}}(w_t)-R_{\mathcal{D}}(w^*) \le \Vert \nabla R_{\mathcal{D}}(w_t)\Vert_2\Vert w_t-w^*\Vert_2\le 2\sqrt{2}D_0\mathcal{E}. 
	\]
	Otherwise, $\Vert \nabla R_{\mathcal{D}}(w_t)\Vert_2 \ge \sqrt{2}\mathcal{E}$ holds for all $t \in  \{0,1,\cdots, T-1\}$. Then for all $t\in\{0,\cdots, T-1\}$, we have the following:
	\[
		\begin{split}
			&R_{\mathcal{D}}(w_{t+1})-R_{\mathcal{D}}(w^*)\\
			&\le R_{\mathcal{D}}(w_t)-R_{\mathcal{D}}(w^*)-\frac{1}{2L_R}\Vert \nabla R_{\mathcal{D}}(w_t) \Vert_2^2 + \frac{1}{2L_R}\mathcal{E}^2\\
			&\le R_{\mathcal{D}}(w_t) - R_{\mathcal{D}}(w^*) - \frac{1}{4L_R} \Vert \nabla R_{\mathcal{D}}(w_t)\Vert_2^2\\
			&\le R_{\mathcal{D}}(w_t) - R_{\mathcal{D}}(w^*) - \frac{1}{4L_R D_t^2}(R_{\mathcal{D}}(w_t)-R_{\mathcal{D}}(w^*))^2.\\
		\end{split}
	\]
	Multiplying both sides by $[(R_{\mathcal{D}}(w_{t+1})-R_{\mathcal{D}}(w^*))(R_{\mathcal{D}}(w_{t})-R_{\mathcal{D}}(w^*))]^{-1}$ and rearranging the terms, we obtain:
		\begin{flalign}\label{eq-imp}
		    &\frac{1}{R_{\mathcal{D}}(w_{t+1})-R_{\mathcal{D}}(w^*)}\notag\\
		    &\ge \frac{1}{R_{\mathcal{D}}(w_{t})-R_{\mathcal{D}}(w^*)} + \frac{1}{4L_R D_t^2}\frac{R_{\mathcal{D}}(w_{t})-R_{\mathcal{D}}(w^*)}{R_{\mathcal{D}}(w_{t+1})-R_{\mathcal{D}}(w^*)} \notag\\
		    &\ge \frac{1}{R_{\mathcal{D}}(w_{t})-R_{\mathcal{D}}(w^*)} + \frac{1}{16L_R D_0^2},
		\end{flalign}
		where the last inequality comes from $\frac{R_{\mathcal{D}}(w_{t})-R_{\mathcal{D}}(w^*)}{R_{\mathcal{D}}(w_{t+1})-R_{\mathcal{D}}(w^*)} \ge 1 \ge \frac{D_t^2}{4D_0^2}$.
		
	Note that the inequality (\ref{eq-imp}) implies
	\[
		    \frac{1}{R_{\mathcal{D}}(w_{T})-R_{\mathcal{D}}(w^*)} \ge \frac{T}{16L_R D_0^2},
	\]
		thus we obtain $R_{\mathcal{D}}(w_T)-R_{\mathcal{D}}(w^*) \le 16D_0\mathcal{E}$ using the fact $T=\frac{L_R D_0}{\mathcal{E}}.$
\end{proof}

With Lemma~\ref{le-cvxaux}, we next show that $R_{\mathcal{D}}(w_T)-R_{\mathcal{D}}(w^*)\le 16D_0\mathcal{E}+\frac{1}{2L_R}\mathcal{E}^2$. In particular, let $t=t_0$ be the first time that $R_{\mathcal{D}}(w_t)-R_{\mathcal{D}}(w^*)\le 16D_0\mathcal{E}$, we show that we have $R_{\mathcal{D}}(w_t)-R_{\mathcal{D}}(w^*)\le16D_0\mathcal{E}+\frac{1}{2L_R}\mathcal{E}^2$ hold for any $t>t_0$. If this is false, then we let $t_1>t_0$ be the first time that $R_{\mathcal{D}}(w_{t})-R_{\mathcal{D}}(w^*)>16D_0\mathcal{E}+\frac{1}{2L_R}\mathcal{E}^2$. Then there must be $R_{\mathcal{D}}(w_{t_1-1})<R_{\mathcal{D}}(w_{t_1})$. According to (\ref{eq-cvxrskupd}), the following should also hold
\begin{equation*}
    \begin{split}
        &R_{\mathcal{D}}(w_{t_1-1})-R_{\mathcal{D}}(w^*)\\
        &\ge R_{\mathcal{D}}(w_{t_1})-R_{\mathcal{D}}(w^*)-\frac{1}{2L_R}\mathcal{E}^2>16D_0\mathcal{E}.
    \end{split}
\end{equation*}
Then according to (\ref{eq-lbnormR}) and Assumption~\ref{A5}, we have
\begin{equation*}
    \lVert \nabla R_{\mathcal{D}}(w_{t_1-1})\rVert_2\ge\frac{R_{\mathcal{D}}(w_{t_1-1})-R_{\mathcal{D}}(w^*)}{\lVert w_{t_1-1}-w^*\rVert_2}>8\mathcal{E}.
\end{equation*}
Taking this into (\ref{eq-cvxrskupd}), we get $R_{\mathcal{D}}(w_{t_1})\le R_{\mathcal{D}}(w_{t_1-1})-\frac{63}{2L_R}\mathcal{E}^2$, implying that $R_{\mathcal{D}}(w_{t_1})\le R_{\mathcal{D}}(w_{t_1-1})$, which contradicts with $R_{\mathcal{D}}(w_{t_1-1})<R_{\mathcal{D}}(w_{t_1})$. Hence for any $t>t_0$, $R_{\mathcal{D}}(w_t)-R_{\mathcal{D}}(w^*)\le16D_0\mathcal{E}+\frac{1}{2L_R}\mathcal{E}^2$, which concludes the proof.

\subsubsection{Proof of Theorem~\ref{th-hnc}}\label{proof-th3}


We first show that with Assumption~\ref{A6} and the choice of step size $\eta=\frac{1}{L_R}$, the iterative parameters (i.e., $w_t$'s) all stay in $\mathcal{W}$ without using projection. By the update rule, we have
\begin{flalign*}
    &\| w_{t+1}-w_0\|_2\\
    &\le\lVert w_t-w_0\rVert_2+\eta(\lVert\nabla R_{\mathcal{D}}(w_t)\rVert_2+\lVert g(w_t)-\nabla R_{\mathcal{D}}(w_t)\rVert_2)\notag\\
    &\le\lVert w_t-w_0\rVert_2+\frac{1}{L_R}(G+\mathcal{E}).
\end{flalign*}
Since we set $T=\frac{L_R}{\mathcal{E}^2}(R_{\mathcal{D}}(w_0)-R_{\mathcal{D}}(w^*))$, according to Assumption~\ref{A6}, we know that $w_t \in \mathcal{W}$ for $t = 0,1,\cdots, T$ without projection.
By the smoothness of $R_{\mathcal{D}}(\cdot)$ and the choice of $\eta=\frac{1}{L_R}$, the inequality (\ref{eq-cvxrskupd}) still holds i.e., for all $t = 0,1,\cdots, T$,
\begin{equation}\label{eq2-cvxrskupd}
    R_{\mathcal{D}}(w_{t+1})\le R_{\mathcal{D}}(w_t)-\frac{1}{2L_R}\lVert \nabla R_{\mathcal{D}}(w_t) \rVert_2^2+\frac{1}{2L_R}\mathcal{E}^2. 
\end{equation}
Sum up (\ref{eq2-cvxrskupd}) for $t = 0,1,\cdots, T-1$. Then, we get 
\[
    \begin{split}
        0 &\le R_{\mathcal{D}}(w_{T}) - R_{\mathcal{D}}(w^*)\\
        &\le R_{\mathcal{D}}(w_0)- R_{\mathcal{D}}(w^*)-\frac{1}{2L_R}\sum_{t = 0}^{T-1}\lVert \nabla R_{\mathcal{D}}(w_t) \rVert_2^2+\frac{T}{2L_R}\mathcal{E}^2,
    \end{split}
\]
which implies that
\[
    \min_{t = 0,1,\cdots, T-1}\lVert \nabla R_{\mathcal{D}}(w_t) \rVert_2^2 \le \frac{2L_R}{T}(R_{\mathcal{D}}(w_0) - R_{\mathcal{D}}(w^*))+\mathcal{E}^2.
\]
Note that $T=\frac{2L_R}{\mathcal{E}^2}[R_{\mathcal{D}}(w_0)-R_{\mathcal{D}}(w^*)]$, we conclude the proof.

\section{Byzantine-Resilient Heavy-tailed Gradient Descent with Compression}\label{sec-compress}

In this section, we study how to further reduce the communication overhead in addition to retaining the Byzantine resilience and the statistical error rate. Based on BHGD, we introduce gradient compression and propose \underline{B}yzantine-resilient \underline{h}eavy-tailed \underline{g}radient \underline{d}escent with \underline{c}ompression, which is called BHGD-C .

\subsection{Algorithm Design}

\begin{algorithm}[tbp]
    \caption{Byzantine-Resilient Heavy-tailed Gradient Descent with Compression (BHGD-C)}
    \label{Algo2}
    \begin{algorithmic}[1]
        \REQUIRE Initial parameter vector $w_0\in\mathcal{W}$, compressor $\mathcal{Q}(.)$, step size $\eta$, time horizon $T$.
		\ENSURE $\zeta \gets \frac{1}{2(\Delta\sqrt{mn})^dd(mn)^d} $, $s \gets \sqrt{\frac{nv}{2\log(1/\zeta)}}, \tau\gets\sqrt{2\log(1/\zeta)}$.
		\FOR{$t\gets 0,1,\cdots,T-1$}
		    \STATE \underline{Server:} Send $w_t$ to all the worker machines
		    \STATE Each \underline{good device} $i\in\mathcal{G}_t$ do in parallel: \begin{enumerate}
		        \item Computes local estimate of gradient $g_i(w_t)$. Specifically, for $\forall k\in[d]$,
		            \begin{align*}
		                g_{i,k}(w_t)\gets
		                & \frac{1}{n}\sum_{j=1}^{n}\left[g\left(1-\frac{g^2}{2s^2\tau}\right)-\frac{g^3}{6s^2}\right]\\
		                &+\frac{s}{n}\sum_{j=1}^{n}C\left(\frac{g}{s},\frac{|g|}{s\sqrt{\tau}}\right),
		            \end{align*}
		            where $g$ denotes $\nabla_k\ell(w_t,x_j)$.
		            
		            \item Sends $\mathcal{Q}(g_{i}(w_t))$ to the central machine.
		    \end{enumerate}
		    \STATE \underline{Server:}
		    \begin{enumerate}
		        \item Sorts $\mathcal{Q}(g_i(w_t))$'s in a non-decreasing order according to $\lVert \mathcal{Q}(g_{i}(w_t))\rVert_2$
		        \item Denotes the indices of the first $1-\beta$ fraction of elements as $\mathcal{U}_t$
		        \item Aggregates the gradients through the trimmed mean: $g(w_t)=\frac{1}{|\mathcal{U}_t|}\sum_{i\in\mathcal{U}_t}\mathcal{Q}(g_{i}(w_t))$
		        \item Updates the parameter by  $$w_{t+1}=\prod_{\mathcal{W}}\left(w_t-\eta\cdot g(w_t)\right)$$
		    \end{enumerate}
		\ENDFOR
    \end{algorithmic}
\end{algorithm}

To reduce the communication cost, we adopt the technique of gradient compression. For the compression scheme, we consider a generic class of compression operators called $\delta$-approximate compressors, just as the recent work \cite{ghosh2021communication} did. The formal definition for the compressors is given below.

\begin{definition}[$\delta$-Approximate Compressor]
An operator $\mathcal{Q}(\cdot):\mathbb{R}^d\to\mathbb{R}^d$ is said to be an $\delta$-approximate compressor on a set $\mathcal{S}\subseteq\mathbb{R}^d$ if $\forall x\in\mathcal{S}$,
\begin{equation*}
    \lVert \mathcal{Q}(x)-x\rVert_2^2\le(1-\delta)\lVert x\rVert_2^2,
\end{equation*}
where $\delta\in(0,1]$ is the compression factor. 
\end{definition}

The compression factor $\delta$ measures the degree of compression and $\delta=1$ implies $\mathcal{Q}(x)=x$, which means no compression. There are many compressors satisfying the definition, such as Top-$k$ Sparsification \cite{StichCJ_NIPS18}, 
 $k$-PCA \cite{WangSLCPW_NIPS18}, 
Randomized Quantization \cite{AlistarhG0TV_NIPS17}, 
 $1$-bit Quantization \cite{BernsteinWAA_ICML18},
$\ell_1$-norm Quantization \cite{KarimireddyRSJ_ICML19}, etc. 

In Algorithm~\ref{Algo2}, we let each non-Byzantine device compress its estimate for loss gradient by a $\delta$-approximate compressor $\mathcal{Q}(\cdot)$ before sending it to the server (step 3(2)). No restriction is placed on Byzantine devices. Note that, in Algorithm~\ref{Algo2}, the aggregation rule used by the server is different from that of Algorithm~\ref{Algo1}. Now the server performs a norm-based trimmed mean (i.e., to trim the gradients according their norm values, see step 4(1)-4(2)). By doing this, the server eliminates only $\beta$ $(\beta\ge\alpha)$ fraction of local gradient estimators instead of $2\beta$ as in Algorithm~\ref{Algo1}. And we believe this will bring a more accurate estimation for the server.

\subsection{Theoretical Results}

In this part, we analyse the influence of the gradient compression on the learning performance. Throughout the analysis, we need an additional mild assumption on population risk function $R_{\mathcal{D}}(\cdot)$.

\begin{assumption} \label{A-Gbound}
    For all $w \in \mathcal{W}$, $\Vert \nabla R_{\mathcal{D}}(w)\Vert_2 \le G$, where $G$ is a constant.
\end{assumption}

Note that, while the loss gradients are unbound, it is realistic to assume that the expected gradient (i.e., the population risk gradient) is inside a ball with some radius $G$. 

Before presenting the statistical error rates, We first analyse the uniform accuracy of $g(w)$ for all $w\in\mathcal{W}$.

\begin{lemma}\label{lm-T_0}
With Assumption~\ref{A-Gbound}, for all $w\in\mathcal{W}$, it holds with probability at least $1-\frac{1}{(mn)^d}$ that
\begin{flalign}\label{bd-T_0}
    &\| g(w)-\nabla R_{\mathcal{D}}(w)\|_2\notag\\ &\in\mathcal{O}\Big((\alpha+\sqrt{1-\delta})d\sqrt{\frac{\log(mn)}{n}}+d\sqrt{\frac{\log(mn)}{mn}}\Big).
\end{flalign}
\end{lemma}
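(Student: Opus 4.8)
The plan is to mirror the proof of Lemma~\ref{le-unierr}, but to account for the two structural changes in Algorithm~\ref{Algo2}: the local estimators are now passed through the $\delta$-approximate compressor $\mathcal{Q}(\cdot)$ before transmission, and the server trims by $\ell_2$ norm rather than coordinate-wise. Because the norm-based trimming couples all coordinates together, the coordinate-wise point-wise bounds of Lemma~\ref{le-maxerr} and Lemma~\ref{le-meanerr} are no longer the right atomic objects; instead I would work directly with the full vector-valued estimation error. Concretely, I would first establish a point-wise (fixed $w$) bound on $\lVert g_i(w)-\nabla R_{\mathcal{D}}(w)\rVert_2$ for good devices $i\in\mathcal{G}$ by summing the squared coordinate-wise errors from Lemma~\ref{le-poiacc}: taking the union bound over the $d$ coordinates yields $\lVert g_i(w)-\nabla R_{\mathcal{D}}(w)\rVert_2 \in \mathcal{O}(\sqrt{d}\,\sqrt{\log(1/\zeta)/n})$ with high probability. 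The compression step then contributes an additional error controlled by the $\delta$-approximate property, namely $\lVert \mathcal{Q}(g_i(w))-g_i(w)\rVert_2 \le \sqrt{1-\delta}\,\lVert g_i(w)\rVert_2$, and since $\lVert g_i(w)\rVert_2$ is close to $\lVert \nabla R_{\mathcal{D}}(w)\rVert_2\le G$ under Assumption~\ref{A-Gbound}, the compression adds a term of order $\sqrt{1-\delta}$ times the gradient norm.

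Next I would handle the norm-based trimmed mean following the technique of \cite{ghosh2021communication}. Writing $g(w)=\frac{1}{|\mathcal{U}|}\sum_{i\in\mathcal{U}}\mathcal{Q}(g_i(w))$, I would decompose the error into the contribution of the good untrimmed devices, the good trimmed devices, and the Byzantine untrimmed devices, exactly as in \eqref{eq-errterm} but at the vector level. The key observation that tames the Byzantine term is the norm-based selection rule: a Byzantine device survives trimming only if its compressed estimate has norm no larger than that of some trimmed good device, so its norm is bounded by $\max_{i\in\mathcal{G}}\lVert \mathcal{Q}(g_i(w))\rVert_2$, which in turn is bounded by $G$ plus the point-wise plus compression error. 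Since at most $\beta$ fraction is trimmed and $\beta\ge\alpha$, both the good-trimmed and Byzantine-untrimmed contributions are controlled by $\frac{\beta}{1-\beta}$ (or $\frac{\alpha}{1-\beta}$) times the maximal good-device error, giving the $\alpha$- and $\sqrt{1-\delta}$-weighted terms in \eqref{bd-T_0}; the good-untrimmed average supplies the $\frac{1}{\sqrt{mn}}$ statistical term.

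Finally I would promote the fixed-$w$ bound to a uniform bound over all $w\in\mathcal{W}$ by the same covering-net argument as in Lemma~\ref{le-unierr}. I would choose an $\epsilon$-net $\mathcal{W}_\epsilon$ with $N_\epsilon\le(3\Delta/2\epsilon)^d$, take a union bound over the net, and control the fluctuation between a generic $w$ and its nearest net point $w^p$ using the Lipschitzness of $\nabla R_{\mathcal{D}}$ (Assumption~\ref{A3}) together with the Lipschitz stability of the local estimator from \cite[Lemma 4]{Holland_AISTATS19}. The one genuinely new wrinkle here is continuity of the \emph{compressed} estimator: I must argue that $w\mapsto\mathcal{Q}(g_i(w))$ does not vary too wildly, which I would address by bounding $\lVert \mathcal{Q}(g_i(w))-\mathcal{Q}(g_i(w^p))\rVert_2$ through the triangle inequality against $g_i(w)$ and $g_i(w^p)$ and the $\delta$-approximate property, absorbing the resulting $\mathcal{O}(\epsilon)$ terms by setting $\epsilon=\Theta(1/\sqrt{mn})$ as dictated by the chosen $\zeta=\frac{1}{2(\Delta\sqrt{mn})^d d(mn)^d}$. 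I expect the main obstacle to be precisely this interaction between compression and the covering argument: unlike the coordinate-wise trimmed mean, the norm-based rule can change which devices are selected as $w$ varies, so the decomposition into good/Byzantine/trimmed sets is not stable across the net, and I would need to ensure the bounds hold uniformly over the (finitely many) possible selection patterns, or argue that the error bound is insensitive to the exact membership of $\mathcal{U}$ as long as its cardinality and the extremal good-device norm are controlled.
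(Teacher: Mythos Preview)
Your decomposition of $g(w)-\nabla R_{\mathcal{D}}(w)$ into good, good-trimmed, and Byzantine-untrimmed contributions, and your use of the $\delta$-approximate property together with Assumption~\ref{A-Gbound} to control the compression error, match the paper exactly. The genuine gap is in the order of operations between the covering argument and the aggregation step, and you have correctly identified the symptom (instability of $\mathcal{U}_t$ in $w$, lack of continuity of $\mathcal{Q}$) without finding the cure.

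Your proposed fix---bounding $\lVert \mathcal{Q}(g_i(w))-\mathcal{Q}(g_i(w^p))\rVert_2$ via the triangle inequality and the $\delta$-approximate property---does not yield an $\mathcal{O}(\epsilon)$ term: the $\delta$-approximate property controls $\lVert \mathcal{Q}(x)-x\rVert_2$, not $\lVert \mathcal{Q}(x)-\mathcal{Q}(y)\rVert_2$, so you are left with an irreducible $\sqrt{1-\delta}(\lVert g_i(w)\rVert_2+\lVert g_i(w^p)\rVert_2)$ contribution that does not vanish as $\epsilon\to 0$. Likewise, union-bounding over all possible selection patterns of $\mathcal{U}_t$ introduces an extra combinatorial factor you would have to absorb. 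The paper sidesteps both issues by reversing the order: it first establishes \emph{uniform} (over all $w\in\mathcal{W}$) bounds on the \emph{uncompressed} local estimators, namely $\max_{i\in\mathcal{G}}\lVert g_i(w)-\nabla R_{\mathcal{D}}(w)\rVert_2\le\mathcal{E}_1$ and $\lVert\frac{1}{|\mathcal{G}|}\sum_{i\in\mathcal{G}}g_i(w)-\nabla R_{\mathcal{D}}(w)\rVert_2\le\mathcal{E}_2$, by invoking \cite[Lemma 5]{Holland_AISTATS19} (which already contains the covering argument) and taking appropriate $\epsilon=\Theta(1/\sqrt{n})$ and $\epsilon=\Theta(1/\sqrt{mn})$. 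Once these hold simultaneously on a single high-probability event, your three-term decomposition is applied \emph{deterministically} for each $w$: the bounds $\mathcal{E}_1,\mathcal{E}_2$ are uniform, so it does not matter how $\mathcal{U}_t$ varies with $w$, and the compressor only ever appears through the pointwise inequality $\lVert\mathcal{Q}(g_i(w))-g_i(w)\rVert_2\le\sqrt{1-\delta}\lVert g_i(w)\rVert_2$, never through any continuity property. In short: cover first on $g_i$, then compress and aggregate, not the other way around.
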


Next, we provide the main results on the error rates .

\noindent{\bf Strongly convex population risks:} Note that the loss function for each data $\ell(\cdot, z)$ need not be strongly convex. The upper bound on the excess population risk is as follows.

\begin{theorem}\label{Thm-hsco}
    Suppose Assumption \ref{A1}, \ref{A2}, \ref{A3}, \ref{A4} and \ref{A-Gbound} hold, and $R_{\mathcal{D}}(\cdot)$ is $\lambda_R$-strongly convex. Choose step size $\eta=\frac{1}{L_R}$ and run Algorithm~\ref{Algo2} for $T$ rounds, then with probability at least $1-\frac{1}{(mn)^d}$, we have the following bound on excess population risk,
    \begin{flalign*}
        &R_{\mathcal{D}}(w_T)-R_{\mathcal{D}}(w^*)\\
        &\le L_R(1-\frac{\lambda_R}{L_R+\lambda_R})^{2T}\lVert w_0-w^*\rVert_2^2+\frac{4L_R}{\lambda_R^2}\widetilde{\mathcal{E}}^2,
    \end{flalign*}
    where $\widetilde{\mathcal{E}}\in \mathcal{O}\left((\alpha+\sqrt{1-\delta})d\sqrt{\frac{\log(mn)}{n}}+d\sqrt{\frac{\log(mn)}{mn}}\right)$.
\end{theorem}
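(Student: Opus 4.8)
The plan is to observe that the optimization dynamics here are \emph{identical} to those in Theorem~\ref{th-hsco}, since Algorithm~\ref{Algo2} uses exactly the same projected update $w_{t+1}=\prod_{\mathcal{W}}(w_t-\eta\cdot g(w_t))$. The only things that change are the internal construction of the aggregated estimator $g(\cdot)$ (compression followed by norm-based trimming, rather than coordinate-wise trimming) and, consequently, its uniform deviation from $\nabla R_{\mathcal{D}}$, which is now controlled by Lemma~\ref{lm-T_0} instead of Lemma~\ref{le-unierr}. I would therefore reuse the entire argument of Theorem~\ref{th-hsco} almost verbatim, substituting $\widetilde{\mathcal{E}}$ for $\mathcal{E}$ throughout.

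Concretely, I would first invoke nonexpansiveness of the Euclidean projection together with the triangle inequality to obtain the per-step bound
\[
\|w_{t+1}-w^*\|_2\le\|w_t-\eta\nabla R_{\mathcal{D}}(w_t)-w^*\|_2+\eta\|g(w_t)-\nabla R_{\mathcal{D}}(w_t)\|_2,
\]
exactly as in~(\ref{eq-itergap}). Next, with $\eta=1/L_R$ and the co-coercivity of the $\lambda_R$-strongly convex, $L_R$-smooth function $R_{\mathcal{D}}$, the first term contracts by the factor $(1-\frac{\lambda_R}{L_R+\lambda_R})$, reproducing~(\ref{eq-sct1}). The second term is bounded uniformly over all iterates $w_t\in\mathcal{W}$ by $\widetilde{\mathcal{E}}$ via Lemma~\ref{lm-T_0}, which holds with probability at least $1-\frac{1}{(mn)^d}$; this is the \emph{only} place where the compression enters, through the extra $\sqrt{1-\delta}$ term inside $\widetilde{\mathcal{E}}$.

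The remaining steps are purely mechanical. Combining the two pieces gives the linear recurrence
\[
\|w_{t+1}-w^*\|_2\le\Big(1-\tfrac{\lambda_R}{L_R+\lambda_R}\Big)\|w_t-w^*\|_2+\tfrac{1}{L_R}\widetilde{\mathcal{E}},
\]
whose iteration and geometric-series summation (using $\lambda_R<L_R$) yield $\|w_T-w^*\|_2\le(1-\frac{\lambda_R}{L_R+\lambda_R})^{T}\|w_0-w^*\|_2+\frac{2}{\lambda_R}\widetilde{\mathcal{E}}$. A final application of $L_R$-smoothness, namely $R_{\mathcal{D}}(w_T)-R_{\mathcal{D}}(w^*)\le\frac{L_R}{2}\|w_T-w^*\|_2^2$, together with $(a+b)^2\le 2a^2+2b^2$, produces the stated bound. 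The main obstacle is therefore \emph{not} in this theorem but is entirely absorbed into Lemma~\ref{lm-T_0}: one must show that compressing each good device's local estimator and then trimming by $\ell_2$-norm still yields a uniformly accurate aggregate. I expect that argument to require a covering net over $\mathcal{W}$ adapted to norm-based trimming rather than the coordinate-wise analysis of Lemma~\ref{le-unierr}, combined with the $(1-\delta)$ distortion guarantee of the compressor and Assumption~\ref{A-Gbound} to keep the true gradient norm bounded. Once Lemma~\ref{lm-T_0} is granted, the convergence proof is a direct transcription of the strongly convex analysis already carried out for Algorithm~\ref{Algo1}.
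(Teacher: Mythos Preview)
Your proposal is correct and matches the paper's approach exactly: the paper explicitly states that Theorem~\ref{Thm-hsco} follows by replacing $\mathcal{E}$ with $\widetilde{\mathcal{E}}$ (via Lemma~\ref{lm-T_0}) and repeating verbatim the strongly convex argument from the proof of Theorem~\ref{th-hsco}. You have also correctly identified that all the substantive work lives in Lemma~\ref{lm-T_0}, with the remaining steps being a direct transcription of the earlier projected-GD analysis.
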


\noindent{\bf General convex population risks:} We have the following upper bound on excess population risk function.

\begin{theorem}\label{Thm-hgc}
    Suppose Assumption \ref{A1}, \ref{A2}, \ref{A3}, \ref{A4},  \ref{A5} and \ref{A-Gbound} hold, and $R_{\mathcal{D}}(\cdot)$ is convex. Choose step size $\eta=\frac{1}{L_R}$ and run Algorithm~\ref{Algo1} for $T=\frac{L_R}{\widetilde{\mathcal{E}}}\lVert w_0-w^*\rVert_2$ rounds, then with probability at least $1-\frac{1}{(mn)^d}$, we have the following bound on excess population risk,
    \begin{equation*}
        R_{\mathcal{D}}(w_T)-R_{\mathcal{D}}(w^*)\le16\Delta\widetilde{\mathcal{E}}+\frac{1}{2L_R}\widetilde{\mathcal{E}}^2,
    \end{equation*}
    where $\widetilde{\mathcal{E}}\in\mathcal{O}\left((\alpha+\sqrt{1-\delta})d\sqrt{\frac{\log(mn)}{n}}+d\sqrt{\frac{\log(mn)}{mn}}\right).$
\end{theorem}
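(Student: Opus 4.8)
The plan is to recognize that the convergence analysis here is completely decoupled from the aggregation mechanism: once we have a uniform high-probability bound on the gradient-estimation error $\|g(w)-\nabla R_{\mathcal{D}}(w)\|_2$ valid for all $w\in\mathcal{W}$, the remainder of the argument is purely optimization-theoretic and depends on the aggregator only through the size of that error. Consequently I would mirror the proof of Theorem~\ref{th-hgc} almost verbatim, substituting Lemma~\ref{lm-T_0} (which supplies the bound $\widetilde{\mathcal{E}}$, now carrying the extra $\sqrt{1-\delta}$ compression term) in place of Lemma~\ref{le-unierr}, and replacing every occurrence of $\mathcal{E}$ by $\widetilde{\mathcal{E}}$. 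The reference to ``Algorithm~\ref{Algo1}'' in the statement should read Algorithm~\ref{Algo2}, since the compression and the norm-based trimming are exactly what produce $\widetilde{\mathcal{E}}$.

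First I would show the projection is never active. Using $\eta=\frac{1}{L_R}$ and the co-coercivity of convex $R_{\mathcal{D}}$, the exact-gradient step is non-expansive toward $w^*$, i.e. $\|w_t-\eta\nabla R_{\mathcal{D}}(w_t)-w^*\|_2\le\|w_t-w^*\|_2$; combining this with the triangle inequality and Lemma~\ref{lm-T_0} gives $\|w_{t+1}-w^*\|_2\le\|w_t-w^*\|_2+\widetilde{\mathcal{E}}/L_R$. Defining $D_t\triangleq\|w_0-w^*\|_2+t\widetilde{\mathcal{E}}/L_R$ and using the horizon $T=L_R D_0/\widetilde{\mathcal{E}}$, we get $D_t\le 2D_0$ for all $t\le T$, so Assumption~\ref{A5} guarantees every iterate lies in $\mathcal{W}$ and the projection can be dropped.

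Next I would derive the per-step descent inequality from $L_R$-smoothness exactly as in (\ref{eq-cvxrskupd}): expanding $R_{\mathcal{D}}(w_{t+1})$ about $w_t$, writing $g(w_t)=\nabla R_{\mathcal{D}}(w_t)+(g(w_t)-\nabla R_{\mathcal{D}}(w_t))$ and bounding the error terms by $\widetilde{\mathcal{E}}$ yields $R_{\mathcal{D}}(w_{t+1})\le R_{\mathcal{D}}(w_t)-\frac{1}{2L_R}\|\nabla R_{\mathcal{D}}(w_t)\|_2^2+\frac{1}{2L_R}\widetilde{\mathcal{E}}^2$. Since the proof of Lemma~\ref{le-cvxaux} uses only this descent inequality together with (\ref{eq-lbnormR}) and $D_t\le 2D_0$, the identical reciprocal-telescoping argument (now with $\widetilde{\mathcal{E}}$) produces some $t\in\{0,\dots,T\}$ with $R_{\mathcal{D}}(w_t)-R_{\mathcal{D}}(w^*)\le 16D_0\widetilde{\mathcal{E}}$. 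Finally, the same contradiction argument propagates the bound to $t=T$: if the excess risk ever exceeded $16D_0\widetilde{\mathcal{E}}+\frac{1}{2L_R}\widetilde{\mathcal{E}}^2$, then (\ref{eq-lbnormR}) with Assumption~\ref{A5} would force $\|\nabla R_{\mathcal{D}}\|_2>8\widetilde{\mathcal{E}}$, and the descent inequality would then make the risk strictly decrease, contradicting the assumed first increase. This gives $R_{\mathcal{D}}(w_T)-R_{\mathcal{D}}(w^*)\le 16D_0\widetilde{\mathcal{E}}+\frac{1}{2L_R}\widetilde{\mathcal{E}}^2$, and bounding $D_0=\|w_0-w^*\|_2\le\Delta$ via Assumption~\ref{A1} yields the claimed $16\Delta\widetilde{\mathcal{E}}+\frac{1}{2L_R}\widetilde{\mathcal{E}}^2$.

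The genuinely new content is not in this theorem but in Lemma~\ref{lm-T_0}: because compression makes the trimming norm-based rather than coordinate-wise, the coordinate-wise covering argument of Lemma~\ref{le-unierr} no longer applies, and one must bound the uniform error of the local estimator directly and then absorb the compression distortion, which is precisely where the $\sqrt{1-\delta}$ term enters. Granting that lemma, the only subtlety in the present proof is bookkeeping: verifying that the compression error is fully captured inside $\widetilde{\mathcal{E}}$ and therefore does not perturb the descent structure, so that the non-expansiveness step, the staying-in-$\mathcal{W}$ argument, and the reciprocal-telescoping step all carry over unchanged.
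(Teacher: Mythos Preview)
Your proposal is correct and matches the paper's own approach exactly: the paper states that Theorem~\ref{Thm-hgc} follows by replacing $\mathcal{E}$ with $\widetilde{\mathcal{E}}$ (via Lemma~\ref{lm-T_0}) and repeating the argument of Theorem~\ref{th-hgc} verbatim, with details omitted. Your observation that the reference to Algorithm~\ref{Algo1} in the statement is a typo for Algorithm~\ref{Algo2} is also correct.
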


\noindent{\bf Non-convex population risks:} For the non-convex population risk case, we need a slightly distinct technical assumption on the size of $\mathcal{W}$.

\begin{assumption}\label{A7+1}
    The parameter space $\mathcal{W}$ contains the $\ell_2$-ball centered at $w_0$:
    \[\{
    w  \in \mathcal{R}^d: \Vert w-w_0\Vert_2 \le 2\frac{G+\widetilde{\mathcal{E}}}{\widetilde{\mathcal{E}}^2}(R_{\mathcal{D}}(w_0)-R_{\mathcal{D}}(w^*))
    \}.\]
\end{assumption}

\begin{theorem}\label{Thm-com-non}
	Suppose Assumption \ref{A1}, \ref{A2}, \ref{A3}, \ref{A4}, \ref{A-Gbound} and \ref{A7+1} hold, and $R_{\mathcal{D}}(\cdot)$ is non-convex. Choose step size $\eta = \frac{1}{L_R}$ and run Algorithm~\ref{Algo2} for $T=\frac{2L_R}{\widetilde{\mathcal{E}}^2}(R_{\mathcal{D}}(w_0)-R_{\mathcal{D}}(w^*))$ rounds, then with probability at least $1-\frac{1}{(mn)^d}$, we have
	\[
	    \min_{t=0,1,\cdots, T}\Vert \nabla R_{\mathcal{D}}(w_t)\Vert_2^2 \le \sqrt{2}\widetilde{\mathcal{E}},
	\]
	where $\widetilde{\mathcal{E}}\in\mathcal{O}\left((\alpha+\sqrt{1-\delta})d\sqrt{\frac{\log(mn)}{n}}+d\sqrt{\frac{\log(mn)}{mn}}\right).$
\end{theorem}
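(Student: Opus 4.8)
The plan is to follow the template of the proof of Theorem~\ref{th-hnc} almost verbatim, substituting the compressed uniform error bound $\widetilde{\mathcal{E}}$ supplied by Lemma~\ref{lm-T_0} in place of the bound $\mathcal{E}$ from Lemma~\ref{le-unierr}. The descent-and-telescope argument treats the gradient estimation error purely as a black box, so the only structural input needed from the compression analysis is the guarantee of Lemma~\ref{lm-T_0}: with probability at least $1-\frac{1}{(mn)^d}$, the aggregated compressed estimator satisfies $\lVert g(w)-\nabla R_{\mathcal{D}}(w)\rVert_2\le\widetilde{\mathcal{E}}$ uniformly over all $w\in\mathcal{W}$. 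I would condition on this event throughout.

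First I would verify that the iterates never leave $\mathcal{W}$, so the projection in the update is inactive and the unprojected recursion $w_{t+1}=w_t-\eta g(w_t)$ may be used. Combining the triangle inequality with Assumption~\ref{A-Gbound} and the uniform bound gives $\lVert g(w_t)\rVert_2\le\lVert\nabla R_{\mathcal{D}}(w_t)\rVert_2+\widetilde{\mathcal{E}}\le G+\widetilde{\mathcal{E}}$, hence $\lVert w_{t+1}-w_0\rVert_2\le\lVert w_t-w_0\rVert_2+\frac{1}{L_R}(G+\widetilde{\mathcal{E}})$. Iterating for the prescribed $T=\frac{2L_R}{\widetilde{\mathcal{E}}^2}(R_{\mathcal{D}}(w_0)-R_{\mathcal{D}}(w^*))$ rounds keeps every $w_t$ inside the $\ell_2$-ball specified in Assumption~\ref{A7+1}; this is precisely why that assumption is stated with radius scaled by $\frac{G+\widetilde{\mathcal{E}}}{\widetilde{\mathcal{E}}^2}$.

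Next I would reproduce the per-step descent inequality. Since the unprojected update, the choice $\eta=\frac{1}{L_R}$, and the uniform error bound are the only ingredients used in deriving (\ref{eq-cvxrskupd}), the identical smoothness computation yields, for every $t=0,\dots,T-1$,
\[
R_{\mathcal{D}}(w_{t+1})\le R_{\mathcal{D}}(w_t)-\frac{1}{2L_R}\lVert\nabla R_{\mathcal{D}}(w_t)\rVert_2^2+\frac{1}{2L_R}\widetilde{\mathcal{E}}^2.
\]
Summing over $t=0,\dots,T-1$ telescopes the risk values, and using $R_{\mathcal{D}}(w_T)\ge R_{\mathcal{D}}(w^*)$ gives $\min_{t}\lVert\nabla R_{\mathcal{D}}(w_t)\rVert_2^2\le\frac{2L_R}{T}(R_{\mathcal{D}}(w_0)-R_{\mathcal{D}}(w^*))+\widetilde{\mathcal{E}}^2$. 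Substituting the chosen $T$ collapses the first term to $\widetilde{\mathcal{E}}^2$ and yields the claimed bound.

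I do not expect a genuine obstacle in this theorem itself: all the heavy lifting—controlling how the $\delta$-approximate compressor and the norm-based trimmed mean interact to produce the extra $\sqrt{1-\delta}$ factor inside $\widetilde{\mathcal{E}}$—is already absorbed into Lemma~\ref{lm-T_0}. The only point requiring care is to confirm that (\ref{eq-cvxrskupd}) is derived using nothing beyond the uniform norm bound on $g(w_t)-\nabla R_{\mathcal{D}}(w_t)$, so that replacing $\mathcal{E}$ by $\widetilde{\mathcal{E}}$ is legitimate even though $g$ is now a compressed, norm-trimmed aggregate rather than the coordinate-wise trimmed mean of Algorithm~\ref{Algo1}. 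Once this is checked, the argument is word-for-word identical to the non-convex case without compression.
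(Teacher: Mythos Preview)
Your proposal is correct and matches the paper's own approach exactly: the paper states that Theorem~\ref{Thm-com-non} follows by replacing $\mathcal{E}$ with $\widetilde{\mathcal{E}}$ from Lemma~\ref{lm-T_0} and repeating the argument of Theorem~\ref{th-hnc} verbatim. Your observation that the descent inequality~(\ref{eq-cvxrskupd}) uses only the uniform norm bound on $g(w)-\nabla R_{\mathcal{D}}(w)$, and hence transfers unchanged to the compressed, norm-trimmed aggregator, is precisely the point that makes this substitution legitimate.
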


\subsection{Analysis of Algorithm~\ref{Algo2}}

{\bf Notation:}
We denote the Byzantine devices and the good devices in round $t$ as $\mathcal{B}_t$ and $\mathcal{G}_t$ respectively. Also, we use $\mathcal{U}_{t}$ and $\mathcal{T}_{t}$ to denote the untrimmed devices and the trimmed devices in round $t$.

The analysis of algorithm~\ref{Algo2} takes Lemma~\ref{lm-T_0} as the core. To prove Lemma~\ref{lm-T_0}, we require the following two lemmas, which show that the local estimators $g_{i}(\cdot)$'s are concentrated around $\nabla R_{\mathcal{D}}(w)$ for all $w\in\mathcal{W}$.

\begin{lemma}\label{le-maxunierr}
    For all $w\in\mathcal{W}$, the following holds with probability at least $1-m\cdot(\Delta\sqrt{n})^dd\zeta$,
    \begin{flalign}\label{eq-maxunierr}
            &\max_{i\in\mathcal{G}}\|g_i(w)-\nabla R_{\mathcal{D}}(w)\|_2\notag\\
            &\le\frac{3(c_v\widehat{L}+L_R)}{2\sqrt{n}}+\sqrt{\frac{2vd\log(\zeta^{-1})}{n}}+\sqrt{\frac{v}{n}}\triangleq\mathcal{E}_1
    \end{flalign}
\end{lemma}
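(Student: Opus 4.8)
The plan is to prove this as the $\ell_2$-norm analogue of the coordinate-wise estimate underlying Lemma~\ref{le-unierr}: I would combine the pointwise accuracy of Holland's one-dimensional estimator (Lemma~\ref{le-poiacc}) with a covering-net argument, but now assembling the per-coordinate guarantees into a single Euclidean bound rather than trimming coordinate by coordinate. First I would fix a minimal $\epsilon$-net $\mathcal{W}_\epsilon=\{w^1,\dots,w^{N_\epsilon}\}$ of $\mathcal{W}$ so that every $w\in\mathcal{W}$ lies within $\epsilon$ of some $w^p$; by the standard covering-number bound \cite{Tikhomirov1993}, $N_\epsilon\le(3\Delta/2\epsilon)^d$, and I would set $\epsilon=\tfrac{3}{2\sqrt{n}}$ precisely so that $N_\epsilon\le(\Delta\sqrt{n})^d$, matching the net factor in the stated probability.

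Second, I would establish the bound at the net points. For a fixed $w^p$, a fixed good device $i\in\mathcal{G}$, and a fixed coordinate $k\in[d]$, Lemma~\ref{le-poiacc} (applied with the scale $s$, noise distribution $\nu$, and $\tau$ set in Algorithm~\ref{Algo2}, and using Assumption~\ref{A4}) gives $|g_{i,k}(w^p)-\nabla_k R_{\mathcal{D}}(w^p)|\le\sqrt{2v\log(1/\zeta)/n}+\sqrt{v/n}$ with probability at least $1-\zeta$. Taking a union bound over all $k\in[d]$, all good devices (at most $m$), and all $N_\epsilon$ net points gives total failure probability at most $m(\Delta\sqrt{n})^d d\,\zeta$; on the complementary event I would square-sum the coordinate errors into the Euclidean norm, obtaining $\|g_i(w^p)-\nabla R_{\mathcal{D}}(w^p)\|_2\le\sqrt{2vd\log(1/\zeta)/n}+\sqrt{v/n}$ for every net point and every good device simultaneously.

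Third, I would pass from the net to an arbitrary $w\in\mathcal{W}$ by interpolation. Choosing $w^p$ with $\|w-w^p\|_2\le\epsilon$, I split
\[
\|g_i(w)-\nabla R_{\mathcal{D}}(w)\|_2 \le \|g_i(w)-g_i(w^p)\|_2 + \|g_i(w^p)-\nabla R_{\mathcal{D}}(w^p)\|_2 + \|\nabla R_{\mathcal{D}}(w^p)-\nabla R_{\mathcal{D}}(w)\|_2.
\]
For the first term I would invoke the coordinate-wise Lipschitz stability of the estimator, $|g_{i,k}(w)-g_{i,k}(w^p)|\le c_\nu L_k\epsilon$ (from \cite[Lemma 4]{Holland_AISTATS19} together with Assumption~\ref{A3}), and square-sum over $k$ to get $\|g_i(w)-g_i(w^p)\|_2\le c_\nu\widehat{L}\epsilon$ since $\widehat{L}=\sqrt{\sum_k L_k^2}$; for the third term I would use the $L_R$-smoothness of $R_{\mathcal{D}}$ (Assumption~\ref{A2}) to get $\|\nabla R_{\mathcal{D}}(w^p)-\nabla R_{\mathcal{D}}(w)\|_2\le L_R\epsilon$. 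With $\epsilon=\tfrac{3}{2\sqrt{n}}$ these two discretization terms combine to exactly $\tfrac{3(c_\nu\widehat{L}+L_R)}{2\sqrt{n}}$, and adding the net-point bound from the previous step yields $\mathcal{E}_1$; taking the maximum over $i\in\mathcal{G}$ on the established event completes the argument.

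The step I expect to be the main obstacle is the passage from coordinate-wise control to a clean $\ell_2$ bound while keeping the discretization error at the target statistical scale. Unlike Algorithm~\ref{Algo1}, where each coordinate is trimmed separately and the per-coordinate Lipschitz constants $L_k$ can be tracked individually, the norm-based trimming of Algorithm~\ref{Algo2} forces a single Euclidean estimate, so I must aggregate the $d$ coordinate Lipschitz constants into $\widehat{L}$ and choose the net resolution $\epsilon=\Theta(1/\sqrt{n})$ finely enough that the covering cardinality $(\Delta\sqrt{n})^d$ contributes only logarithmically to $\log(1/\zeta)$ (hence to the dominant $\sqrt{2vd\log(1/\zeta)/n}$ term), yet coarsely enough that the interpolation error stays $O(1/\sqrt{n})$. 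Balancing these against the union bound over devices and net points, while ensuring the final failure probability remains $m(\Delta\sqrt{n})^d d\zeta$, is the delicate part; the remaining concentration and triangle-inequality estimates are routine given Lemma~\ref{le-poiacc} and \cite[Lemma 4]{Holland_AISTATS19}.
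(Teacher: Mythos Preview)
Your proposal is correct and follows essentially the same approach as the paper. The paper's own proof is terser: it directly invokes \cite[Lemma 5]{Holland_AISTATS19}, which already packages the covering-net argument, the per-coordinate pointwise accuracy, and the Lipschitz extension into a single uniform $\ell_2$ bound for a fixed device, and then takes a union bound over the $m$ devices; you have simply unpacked that citation by rederiving it from Lemma~\ref{le-poiacc} and \cite[Lemma 4]{Holland_AISTATS19} together with Assumptions~\ref{A2} and~\ref{A3}, using the same net resolution $\epsilon=\tfrac{3}{2\sqrt{n}}$. One minor arithmetic point: when you square-sum $d$ coordinate bounds of the form $\sqrt{2v\log(1/\zeta)/n}+\sqrt{v/n}$ you actually pick up $\sqrt{vd/n}$ rather than $\sqrt{v/n}$ on the lower-order term, so your stated net-point bound is slightly optimistic; this does not affect the order of $\mathcal{E}_1$ and the paper's own statement carries the same constant, so it is not a substantive gap.
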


\begin{lemma}\label{le-avgunierr}
    For all $w\in\mathcal{W}$, the following holds with probabillity at least $1-(\Delta\sqrt{mn})^dd\zeta$,
    \begin{flalign}\label{eq-avgunierr}
            &\left\|\frac{1}{|\mathcal{G}|}\sum_{i\in\mathcal{G}}g_i(w)-g(w)\right\|_2\notag\\
            &\le\frac{3(c_v\widehat{L}+L_R)}{2\sqrt{mn}}+\sqrt{\frac{2vd\log(\zeta^{-1})}{mn}}+\sqrt{\frac{v}{mn}}\triangleq\mathcal{E}_2
    \end{flalign}
\end{lemma}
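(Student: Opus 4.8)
The plan is to mirror the coordinate-wise-then-uniform strategy that established Lemma~\ref{le-unierr}, but now carried out directly in the $\ell_2$ norm, since the compressed algorithm trims by norm rather than coordinate by coordinate. Write $\bar g(w)\triangleq\frac{1}{|\mathcal{G}|}\sum_{i\in\mathcal{G}}g_i(w)$ for the average of the good devices' local estimators; because the compression factor does not enter $\mathcal{E}_2$, the quantity being controlled is the deviation of this good-device average from the population gradient $\nabla R_{\mathcal{D}}(w)$. First, for a fixed $w\in\mathcal{W}$ and a fixed coordinate $k\in[d]$, I would invoke the pointwise accuracy of Holland's estimator (Lemma~\ref{le-poiacc}) exactly as in the proof of Lemma~\ref{le-meanerr}: the $|\mathcal{G}|$ devices' samples form a pooled i.i.d.\ collection of effective size $|\mathcal{G}|n\ge(1-\alpha)mn$, so with probability at least $1-\zeta$ the $k$-th coordinate of $\bar g(w)$ deviates from $\nabla_k R_{\mathcal{D}}(w)$ by at most $\sqrt{\tfrac{2v\log(1/\zeta)}{mn}}+\sqrt{\tfrac{v}{mn}}$, absorbing the $\Theta(1)$ factor $(1-\alpha)^{-1}$.

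Second, I would take a union bound over the $d$ coordinates (at a cost of $d\zeta$ in probability) and pass to the $\ell_2$ norm through $\|\cdot\|_2\le\sqrt{\sum_k(\cdot)_k^2}$, which extracts the $\sqrt d$ factor and yields the pointwise estimate $\|\bar g(w)-\nabla R_{\mathcal{D}}(w)\|_2\le\sqrt{\tfrac{2vd\log(1/\zeta)}{mn}}+\sqrt{\tfrac{v}{mn}}$ for each fixed $w$.

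Third, to upgrade this pointwise statement to a uniform one over all $w\in\mathcal{W}$, I would reuse the covering-net argument from Lemma~\ref{le-unierr}. I fix an $\epsilon$-net $\mathcal{W}_\epsilon=\{w^1,\dots,w^{N_\epsilon}\}$ with $N_\epsilon\le(3\Delta/2\epsilon)^d$, apply the previous step together with a union bound over the net points, and transfer to an arbitrary $w$ via Lipschitz continuity. The smoothness Assumption~\ref{A2} gives $\|\nabla R_{\mathcal{D}}(w)-\nabla R_{\mathcal{D}}(w^p)\|_2\le L_R\epsilon$ whenever $\|w-w^p\|_2\le\epsilon$, while the coordinate-wise stability of the estimator (Holland's Lemma~4, $|\widehat x(X)-\widehat x(X')|\le\frac{c_\nu}{n}\|X-X'\|_1$) combined with Assumption~\ref{A3} gives $\|\bar g(w)-\bar g(w^p)\|_2\le c_\nu\widehat L\epsilon$. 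Choosing $\epsilon=\tfrac{3}{2\sqrt{mn}}$ makes $N_\epsilon=(\Delta\sqrt{mn})^d$ and collapses the two Lipschitz contributions into the term $\tfrac{3(c_\nu\widehat L+L_R)}{2\sqrt{mn}}$, while the union bound over net points and coordinates produces the claimed failure probability $(\Delta\sqrt{mn})^d d\zeta$. Adding the three contributions yields $\mathcal{E}_2$.

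The main obstacle is the variance reduction in the first step: one must show that the \emph{average} of $|\mathcal{G}|$ independent Holland estimators, each computed with the \emph{local} scale $s=\sqrt{nv/(2\log(1/\zeta))}$ on its own $n$ samples, concentrates at the pooled rate $1/\sqrt{mn}$ rather than the per-device rate $1/\sqrt n$. This requires arguing that averaging independent bounded-influence estimators shrinks the fluctuation term by the full factor $\sqrt{|\mathcal{G}|}$ while keeping the truncation bias controlled, i.e.\ that treating $\bar g_k(w)$ as an instance of Lemma~\ref{le-poiacc} at effective sample size $|\mathcal{G}|n$ is legitimate despite the scale being tuned to $n$; once this is in hand, the $\ell_2$ combination and the covering step are routine repetitions of the arguments already carried out for Lemma~\ref{le-unierr}.
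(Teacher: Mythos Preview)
Your proposal is essentially the paper's argument made explicit: the paper's proof simply invokes \cite[Lemma~5]{Holland_AISTATS19}—which already packages the pointwise coordinate bound, the $\sqrt{d}$ aggregation, and the covering net into one uniform statement—applied to the pooled good-device estimator $\bar g(w)$ with effective sample size $|\mathcal{G}|n\approx mn$, and then takes $\epsilon=\tfrac{3}{2\sqrt{mn}}$. The scale-mismatch obstacle you flag (local $s$ tuned to $n$ rather than $mn$) is a genuine subtlety, but the paper glosses over the very same point both here and in its proof of Lemma~\ref{le-meanerr}, so your write-up is at least as rigorous as the original.
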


\begin{proof}[Proof of Lemma~\ref{le-maxunierr}]
    According to \cite[Lemma 5]{Holland_AISTATS19}, we know that for any fixed $i\in\mathcal{G}$ and all $w\in\mathcal{W}$, the following holds with probability at least $1-N_\epsilon d\zeta$,
    \[
        \|g_i(w)-\nabla R_{\mathcal{D}}(w)\|_2\le (c_v\widehat{L}+L_R)\epsilon+\sqrt{\frac{2vd\log(\zeta^{-1})}{n}}+\sqrt{\frac{v}{n}},
    \]
    where $N_\epsilon\le(\frac{3\Delta}{2\epsilon})^d$. By setting $\epsilon=\frac{3}{2\sqrt{n}}$, we obtain that with probability at least $1-(\Delta\sqrt{n})^dd\zeta$,
    \[
        \|g_i(w)-\nabla R\|_2\le\frac{3(c_v\widehat{L}+L_R)}{2\sqrt{n}}+\sqrt{\frac{2vd\log(\zeta^{-1})}{n}}+\sqrt{\frac{v}{n}}.
    \]
    Take the union bound, we know that with probability at least $1-m\cdot(\Delta\sqrt{n})^dd\zeta$, the following holds for all $w\in\mathcal{W}$,
    \[
        \begin{split}
            &\max_{i\in\mathcal{G}}\left\|g_i(w)-\nabla R_{\mathcal{D}}(w)\right\|_2\\
            &\le\frac{3(c_v\widehat{L}+L_R)}{2\sqrt{n}}+\sqrt{\frac{2vd\log(\zeta^{-1})}{n}}+\sqrt{\frac{v}{n}}.
        \end{split}
    \]
\end{proof}

\begin{proof}[Proof of Lemma~\ref{le-avgunierr}]
    Following the same argument in the proof of Lemma~\ref{le-maxunierr} (except for taking $\epsilon=\frac{3}{2\sqrt{mn}}$), we have that, with probability at least $1-(\Delta\sqrt{mn})^dd\zeta$, the following holds for all $w\in\mathcal{W}$,
    \[
        \begin{split}
            &\left\|\frac{1}{|\mathcal{G}|}\sum_{i\in\mathcal{G}}g_{i}(w)-\nabla R_{\mathcal{D}}(w)\right\|_2\\
            &\le\frac{3(c_v\widehat{L}+L_R)}{2\sqrt{mn}}+\sqrt{\frac{2vd\log(\zeta^{-1})}{mn}}+\sqrt{\frac{v}{mn}}.
        \end{split}
    \]
\end{proof}
With Lemma~\ref{le-maxunierr} and Lemma~\ref{le-avgunierr}, we are now ready to prove Lemma~\ref{lm-T_0}. For notation simplicity, we denote $\|g(w)-\nabla R_{\mathcal{D}}(w)\|_2$ by $\widetilde{\mathcal{E}}(w)$. First, we take union bound, then with probability at least $1-m\cdot(\Delta\sqrt{n})^dd\zeta-(\Delta\sqrt{mn})^dd\zeta\ge1-2(\Delta\sqrt{mn})^dd\zeta$, (\ref{eq-maxunierr}) and (\ref{eq-avgunierr}) hold simultaneously. Conditioned on this, we next proceed to bound $\widetilde{\mathcal{E}}(w)$. Specifically, we have the following holds for all $w\in\mathcal{W}$,
\begin{flalign}
    \widetilde{\mathcal{E}}(w)
    &= \left\Vert \frac{1}{\vert \mathcal{U}_t\vert}\sum_{i\in \mathcal{U}_t}\mathcal{Q}[g_i(w)] - \nabla R_{\mathcal{D}}(w)\right\Vert_2\notag\\
    &= \frac{1}{\vert \mathcal{U}_t\vert}\Bigg\Vert \sum_{i\in \mathcal{G}}(\mathcal{Q}[g_i(w)]
	- \nabla R_{\mathcal{D}}(w))\notag\\
	&\quad-  \sum_{i\in \mathcal{G}\cap\mathcal{T}_t}(\mathcal{Q}[g_i(w_t)] - \nabla R_{\mathcal{D}}(w))\notag\\
	&\quad+ \sum_{i\in \mathcal{B}\cap\mathcal{U}_t}(\mathcal{Q}[g_i(w)] - \nabla R_{\mathcal{D}}(w))\Bigg\Vert_2\notag\\ 
	&\le \frac{1}{\vert \mathcal{U}_t\vert}\left\Vert \sum_{i\in \mathcal{G}}(\mathcal{Q}[g_i(w)] - \nabla R_{\mathcal{D}}(w))\right\Vert_2\notag\\
	&\quad+\frac{1}{\vert \mathcal{U}_t\vert}\left\Vert \sum_{i\in \mathcal{G}\cap\mathcal{T}_t}(\mathcal{Q}[g_i(w)] - \nabla R_{\mathcal{D}}(w))\right\Vert_2\notag\\
	&\quad+\frac{1}{\vert \mathcal{U}_t\vert}\left\Vert \sum_{i\in \mathcal{B}\cap\mathcal{U}_t}(\mathcal{Q}[g_i(w)] - \nabla R_{\mathcal{D}}(w))\right\Vert_2.\label{inq-E}
\end{flalign}
We control each term in (\ref{inq-E}) separately. For the first term in (\ref{inq-E}), we have
\begin{flalign}
    &\frac{1}{\vert \mathcal{U}_t\vert}\left\Vert \sum_{i\in \mathcal{G}}(\mathcal{Q}[g_i(w)] - \nabla R_{\mathcal{D}}(w))\right\Vert_2\notag\\  
    &\le  \frac{1}{\vert \mathcal{U}_t\vert}\left\Vert \sum_{i\in \mathcal{G}}(\mathcal{Q}[g_i(w)]-g_i(w))\right\Vert_2\notag\\
    &\quad+ \frac{1}{\vert \mathcal{U}_t\vert}\left\Vert \sum_{i\in \mathcal{G}}(g_i(w)-\nabla R_{\mathcal{D}}(w))\right\Vert_2\notag\\
	&\le\frac{1}{|\mathcal{U}_t|}\sum_{i\in\mathcal{G}}(\left\|Q[g_i(w)]-g_i(w)\right\|_2)+\frac{1-\alpha}{1-\beta}\mathcal{E}_2\notag\\
	&\le\frac{1}{|\mathcal{U}_t|}\sum_{i\in\mathcal{G}}(\sqrt{1-\delta}\|g_i(w)\|_2)+\frac{1-\alpha}{1-\beta}\mathcal{E}_2\notag\\
	&\le\frac{\sqrt{1-\delta}}{|\mathcal{U}_t|}\sum_{i\in\mathcal{G}}(\|\nabla R_{\mathcal{D}}(w)\|_2\notag\\
	&\quad+\|g_i(w)-\nabla R_{\mathcal{D}}(w)\|_2)+\frac{1-\alpha}{1-\beta}\mathcal{E}_2\notag\\
	&\le\frac{\sqrt{1-\delta}(1-\alpha)}{1-\beta}G+\frac{\sqrt{1-\delta}(1-\alpha)}{1-\beta}\mathcal{E}_1+\frac{1-\alpha}{1-\beta}\mathcal{E}_2\label{eq-T1}
\end{flalign}

Similarly, we bound the second term in (\ref{inq-E}) as follows.
\begin{flalign}
    &\frac{1}{\vert \mathcal{U}_t\vert}\left\Vert \sum_{i\in \mathcal{G}\cap\mathcal{T}_t}(\mathcal{Q}[g_i(w)] - \nabla R_{\mathcal{D}}(w))\right\Vert_2\notag\\ 
    &\le \frac{|\mathcal{T}_t|}{|\mathcal{U}_t|}\max_{i \in \mathcal{G}}\left\Vert \mathcal{Q}[g_i(w)]-\nabla R_{\mathcal{D}}(w) \right\Vert_2\notag\\
	&\le \frac{\beta}{1-\beta}\max_{i \in \mathcal{G}}(\left\Vert \mathcal{Q}[g_i(w)]-g_i(w)\Vert_2+\Vert g_i(w)-\nabla R_{\mathcal{D}}(w) \right\Vert_2)\notag\\
	&\le\frac{\beta}{1-\beta}\max_{i\in\mathcal{G}}(\sqrt{1-\delta}\|g_i(w)\|_2+\Vert g_i(w)-\nabla R_{\mathcal{D}}(w) \Vert_2)\notag\\
	&\le\frac{\beta}{1-\beta}(\sqrt{1-\delta}\|\nabla R_{\mathcal{D}}(w)\|_2\notag\\
	&\quad+(1+\sqrt{1-\delta})\max_{i\in\mathcal{G}}\|g_i(w)-\nabla R_{\mathcal{D}}(w)\|_2)\notag\\
	&\le \frac{\beta\sqrt{1-\delta}}{1-\beta}G+\frac{\beta(1+\sqrt{1-\delta})}{1-\beta}\mathcal{E}_1.\label{eq-T2}
\end{flalign}

Finally, we work on the third term in (\ref{inq-E}). Owing to the trimming threshold $\beta>\alpha$, we have at least one good device machine in the set $\mathcal{T}_t$ for all $t\in[T]$.
	
\begin{flalign}
	&\frac{1}{\vert \mathcal{U}_t\vert}\left\Vert \sum_{i\in \mathcal{B}\cap\mathcal{U}_t}(\mathcal{Q}[g_i(w)] - \nabla R_{\mathcal{D}}(w))\right\Vert_2\notag\\
	&\le \frac{|\mathcal{B}|}{\vert\mathcal{U}_t\vert}\max_{i\in \mathcal{B}\cap\mathcal{U}_t} \Vert \mathcal{Q}[g_i(w)] - \nabla R_{\mathcal{D}}(w) \Vert_2\notag\\
	&\le \frac{\alpha}{1-\beta} \max_{i\in \mathcal{B}\cap\mathcal{U}_t} \left(\Vert \mathcal{Q}[g_i(w)]\Vert_2 + \Vert \nabla R_{\mathcal{D}}(w) \Vert_2\right)\notag\\
	&\le\frac{\alpha}{1-\beta}\max_{i\in \mathcal{B}\cap\mathcal{U}_t} \left(\sqrt{1-\delta}\Vert g_i(w)\Vert_2 + \|g_i(w)\|_2 + \Vert \nabla R_{\mathcal{D}}(w) \Vert_2\right)\notag\\
	&\le \frac{\alpha}{1-\beta}((1+\sqrt{1-\delta})\mathcal{E}_1+(2+\sqrt{1-\delta})\|\nabla R_{\mathcal{D}}(w)\|_2)\notag\\
	&\le\frac{\alpha(2+\sqrt{1-\delta})}{1-\beta}G+\frac{\alpha(1+\sqrt{1-\delta})}{1-\beta}\mathcal{E}_1.\label{eq-T3}
\end{flalign}
	
Combining (\ref{eq-T1}), (\ref{eq-T2}), (\ref{eq-T1}) and letting $\widetilde{\mathcal{E}}$ be the uniform upper bound on $\widetilde{\mathcal{E}}(w)$ over $w\in\mathcal{W}$, we obtain that with probability at least $1-2(\Delta\sqrt{mn})^dd\zeta$, we have the following holds for all $w\in\mathcal{W}$,
\begin{equation}
    \begin{split}
        \widetilde{\mathcal{E}}
        &\le\frac{(1+\beta)\sqrt{1-\delta}+2\alpha}{1-\beta}G\notag\\
        &\quad+\frac{(1+\beta)\sqrt{1-\delta}+\alpha+\beta}{1-\beta}\mathcal{E}_1+\frac{1-\alpha}{1-\beta}\mathcal{E}_2.
    \end{split}
\end{equation}

Note that $\zeta=\frac{1}{2(\Delta\sqrt{mn})^dd(mn)^d}$, hence Lemma~\ref{lm-T_0} follows.

With Lemma~\ref{lm-T_0}, we can prove Theorem~\ref{Thm-hsco}, \ref{Thm-hgc} and \ref{Thm-com-non} by replacing $\mathcal{E}$ with $\widetilde{\mathcal{E}}$ and then following almost the same arguments as in the proof of Theorem~\ref{th-hsco},~\ref{th-hgc}~and~\ref{th-hnc}. Thus the details are omitted here.

\section{Experiments}\label{sec:exp}

We now conduct experiments on both synthetic and real-world data to validate the efficacy of our algorithms. 

\subsection{Experiment Setup}

We will study tasks of linear regression and logistic regression on both synthetic and real-world datasets. The synthetic data is generated as follows. For linear model, we generate each data point ($x$, $y$) by $y=\langle x,w^* \rangle+\xi$, where $\xi\in\mathbb{R}$ is zero-mean noise sampled from $\operatorname{LogNormal}(0, 0.55848)$ by default
and each coordinate of $x\in\mathbb{R}^d$ is sampled from $\operatorname{LogNormal}(0, 0.78)$ by default. 
For logistic model, we generate each data point ($x$,$y$) by $y=\operatorname{sign}[\operatorname{sigmoid}(z)-\frac{1}{2}]$, where $\operatorname{sigmoid}(z)=\frac{1}{1+\exp(-z)}$ and $z=\langle x,w^* \rangle+\xi$ where $\xi\in\mathbb{R}$ is zero-mean noise sampled from $\operatorname{LogNormal}(0, 0.55848)$ by default and each coordinate of $x\in\mathbb{R}^d$ is sample from $\operatorname{LogNormal}(0, 3)$ by default. 
For real-world datasets, we will use the Adult dataset~\cite{Dua:2019} for a binary classification task and the Boston Housing Price dataset\footnote{http://lib.stat.cmu.edu/datasets/boston} for a linear regression task. We use these datasets since they are representative and have been used in previous works on machine learning with heavy-tailed data e.g., \cite{WangX_ICML20}. In particular, to test our algorithms for stochastic non-convex optimization, we train a fully-connected neural network with one hidden layer to solve the tasks on real-world datasets.

Since the precise evaluation of the population risk, i.e., $R_{\mathcal{D}}(w)-R_{\mathcal{D}}(w^*)$, is impossible, we will use the empirical risk to approximate the population risk. Specifically, we measure the performance of different algorithm in terms of test loss, i.e., excess empirical risk on the test dataset. For all experiments, we run each algorithm for at least 10 times and report the average results over all the repetitions.

\subsection{Results and Discussion}

\subsubsection{Comparison with Baseline Methods}

\begin{figure}[tb]
    \centering
    \includegraphics[width=0.5\textwidth]{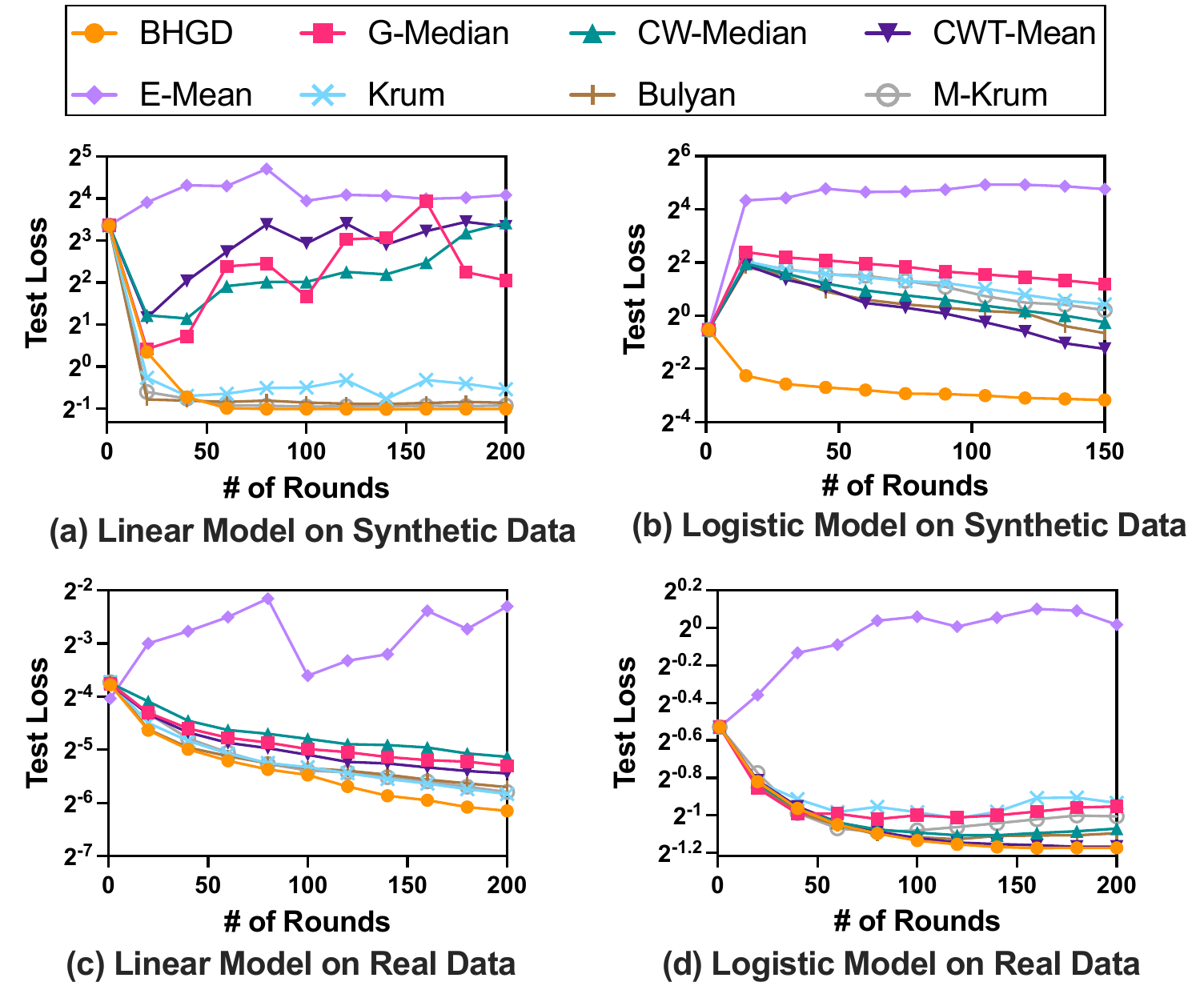}
    \caption{Comparison with Baseline Methods}
    \label{fig-baselines}
\end{figure}

\begin{figure}[tb]
    \centering
    \includegraphics[width=0.5\textwidth]{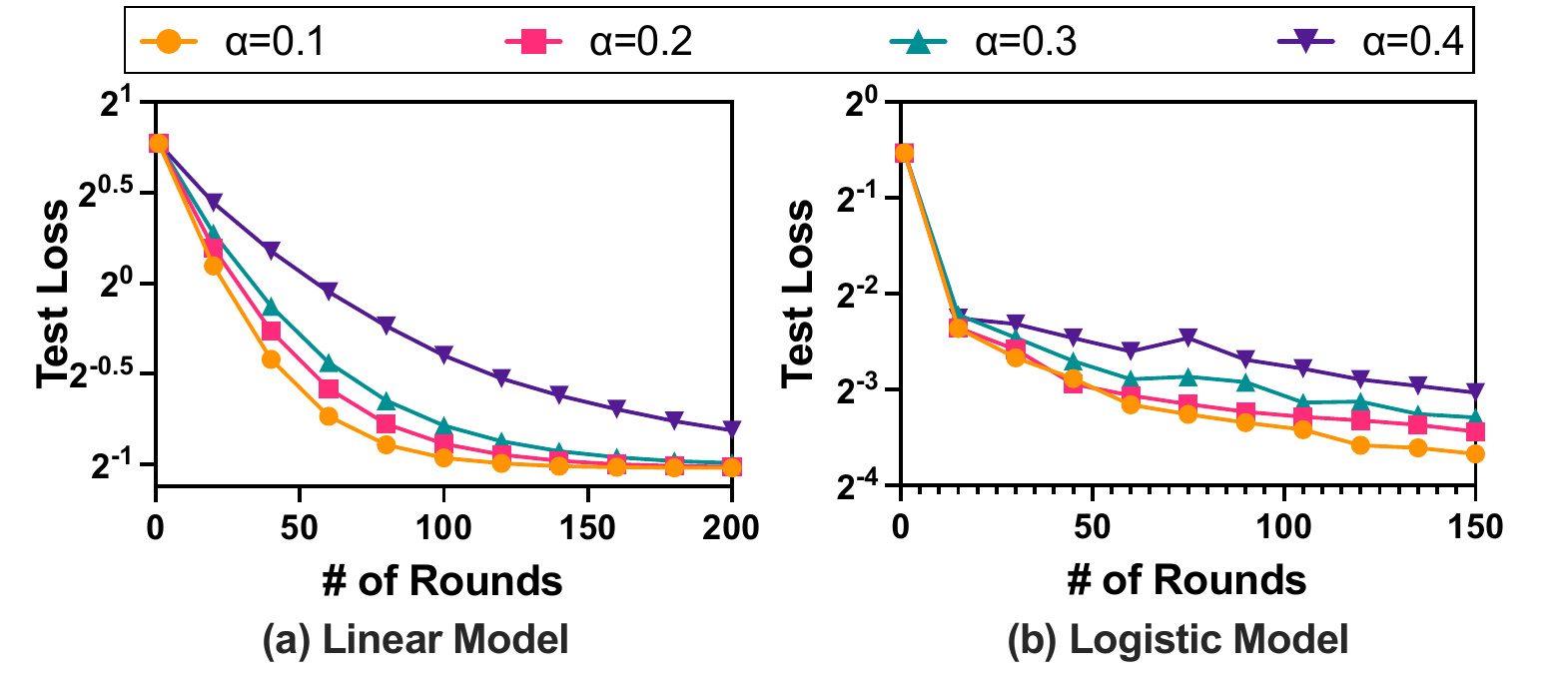}
    \caption{Impact of Byzantine Devices}
    \label{fig-byzantine}
\end{figure}

\begin{figure}[tb]
    \centering
    \includegraphics[width=0.5\textwidth]{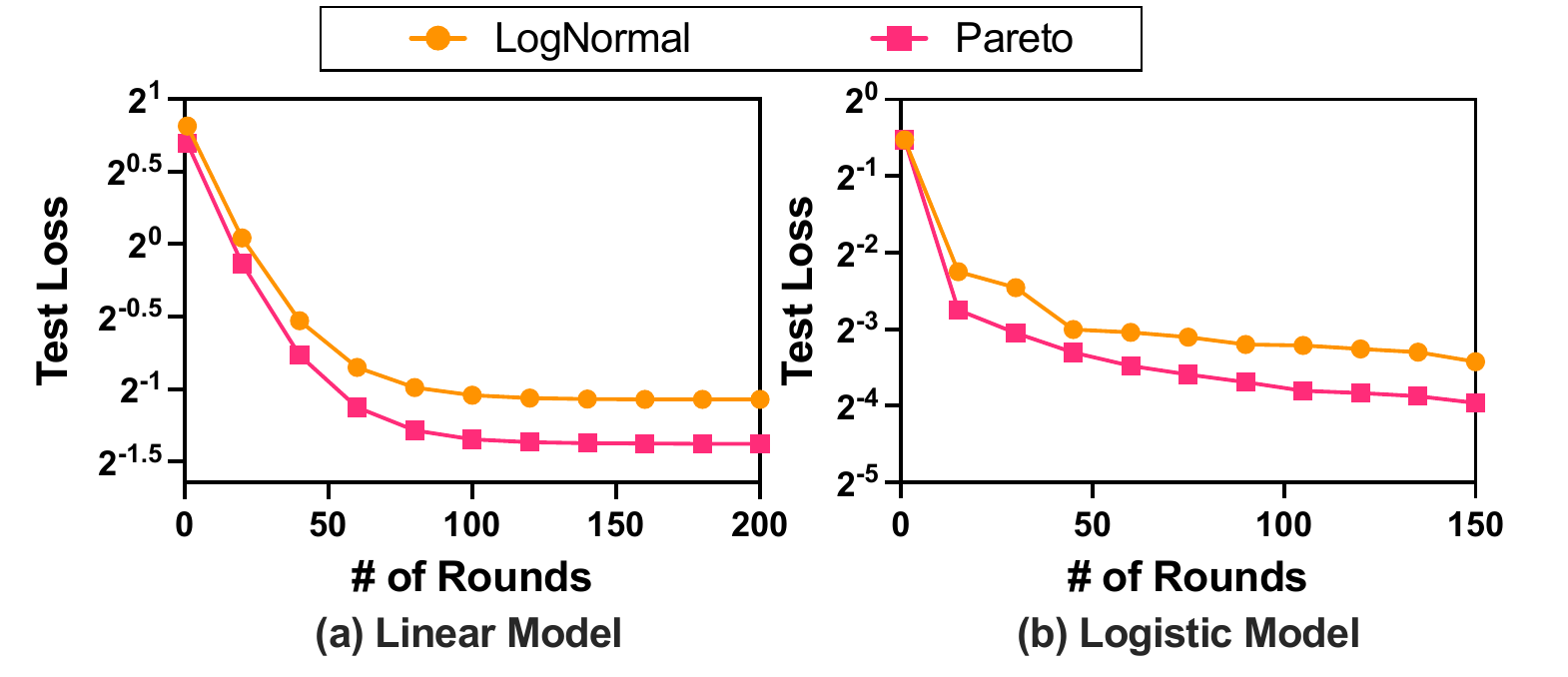}
    \caption{Performance under Different Heavy-tailed Label Noises}
    \label{fig-noises}
\end{figure}

\begin{figure}[tb]
    \centering
    \includegraphics[width=0.5\textwidth]{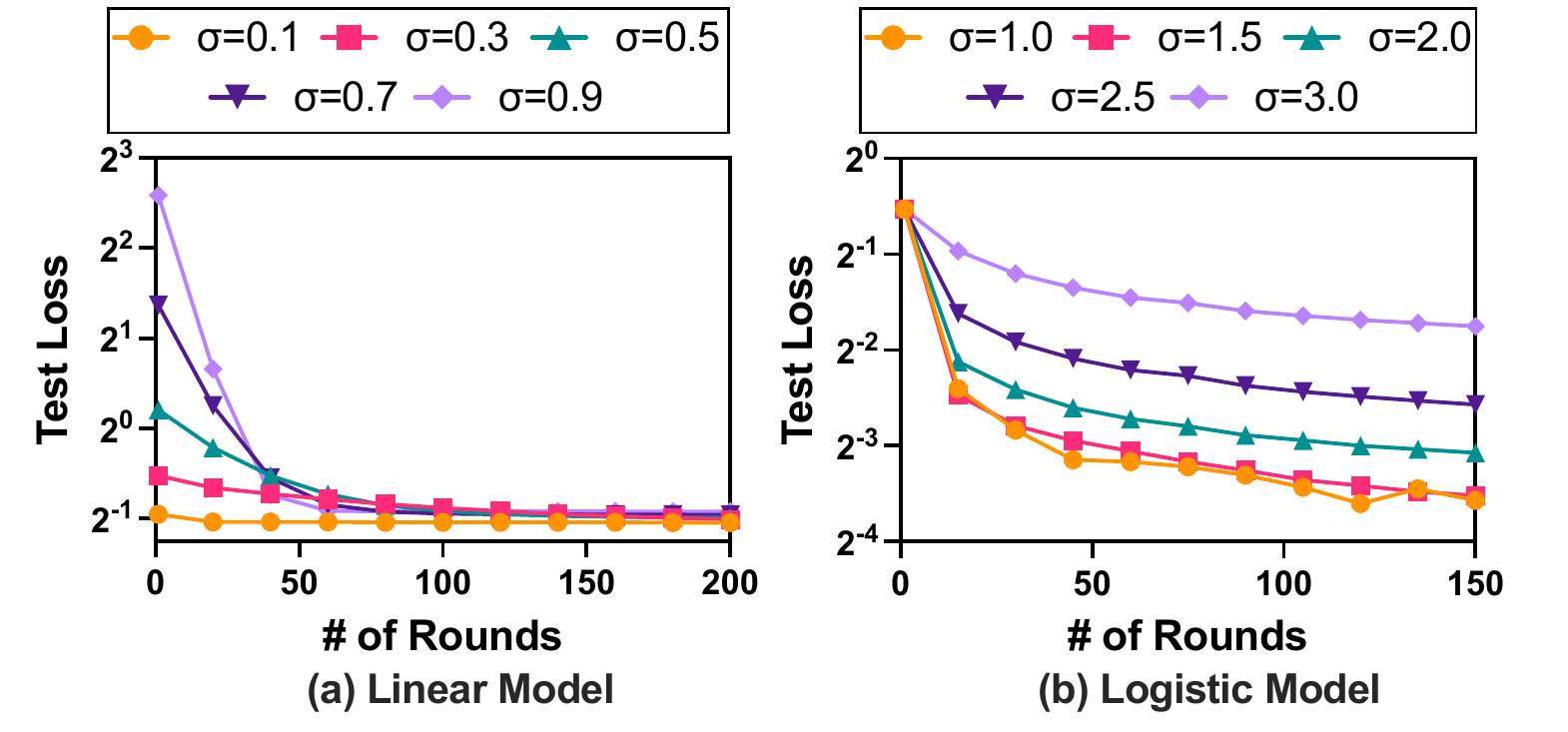}
    \caption{Performance under Different Tail Weights of Heavy-Tailed Feature Vectors}
    \label{fig-tail}
\end{figure}

\begin{figure}[tb]
    \centering
    \includegraphics[width=0.5\textwidth]{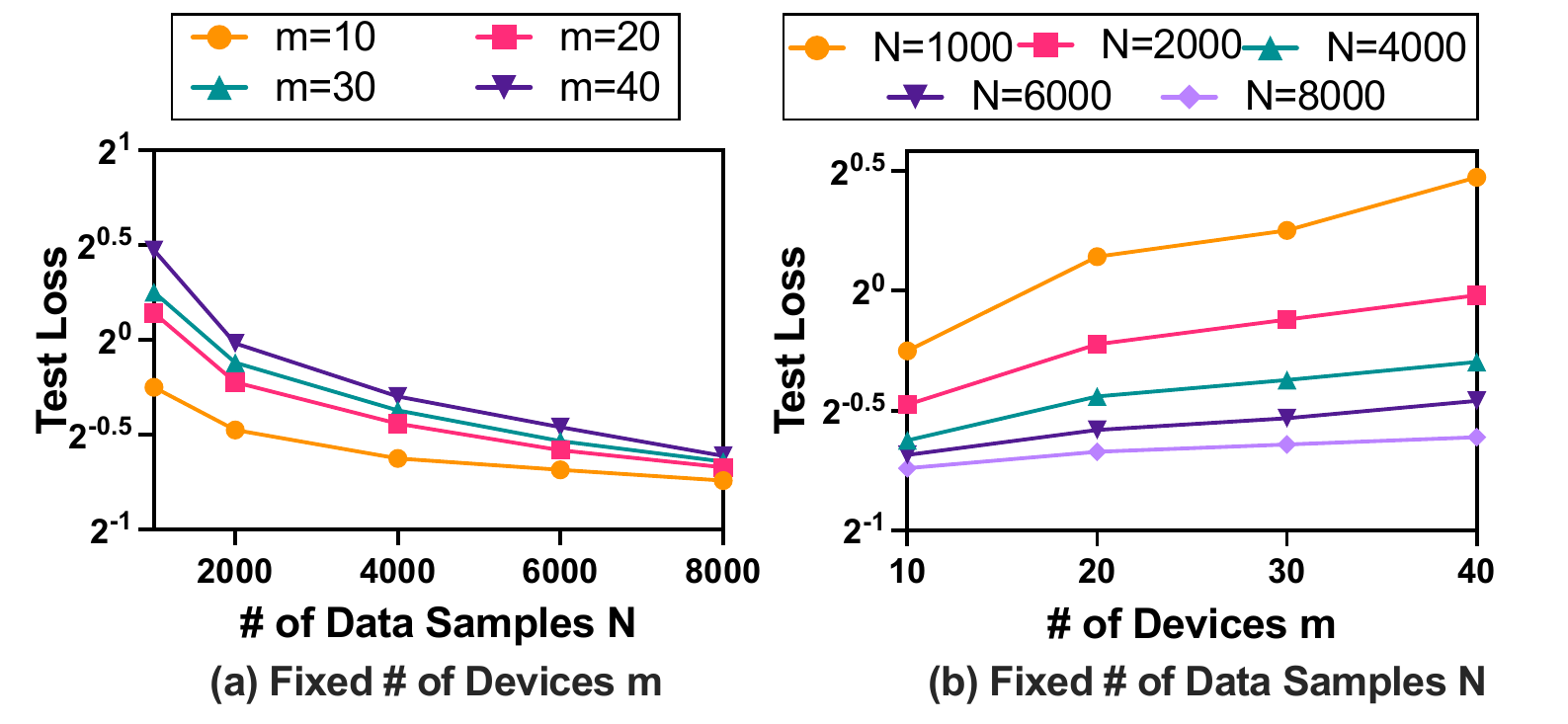}
    \caption{Impact of the System Scale}
    \label{fig-scale}
\end{figure}

\begin{figure}[tb]
    \centering
    \includegraphics[width=0.5\textwidth]{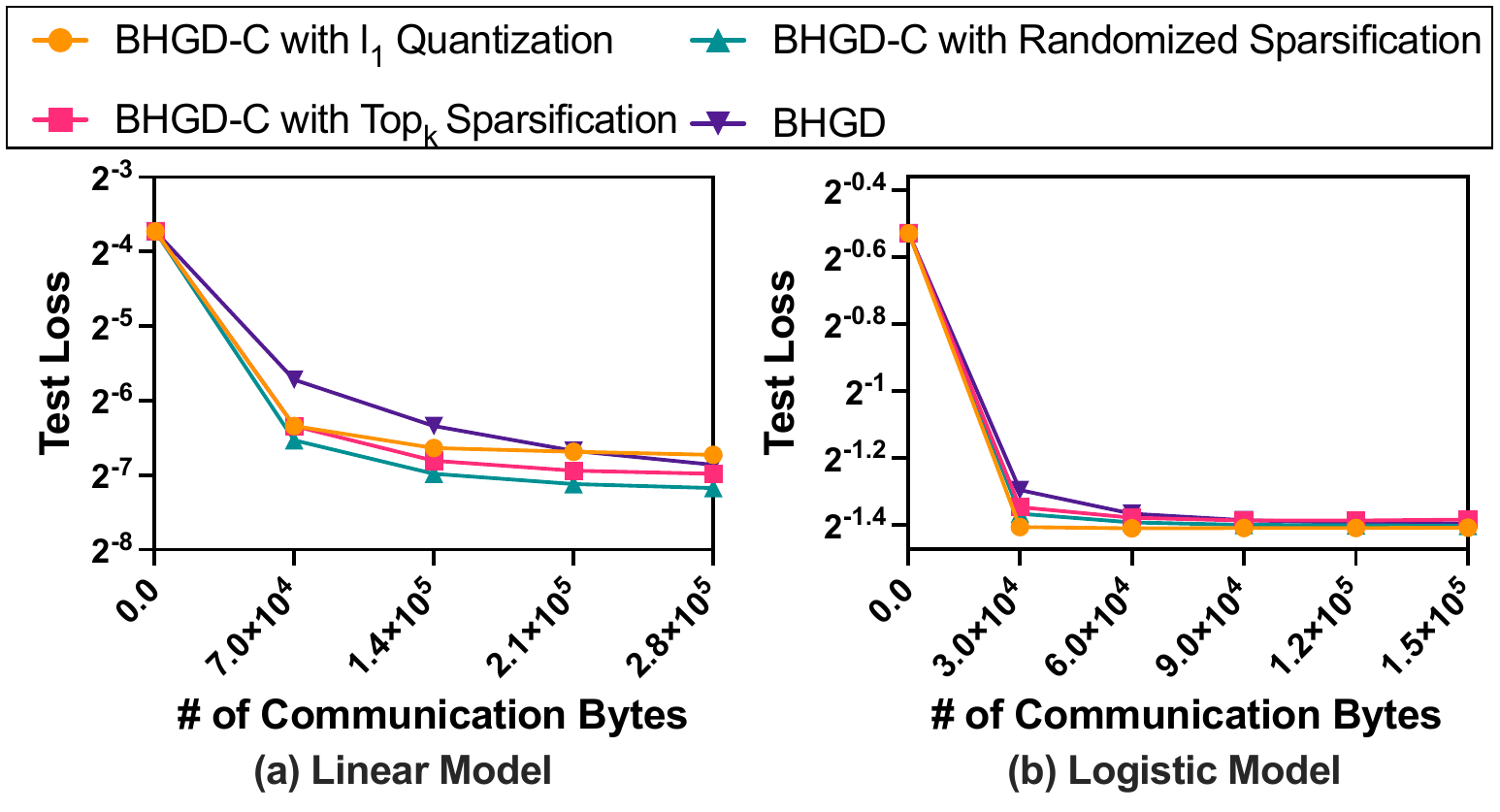}
    \caption{Impact of Gradient Compression}
    \label{fig-compression}
\end{figure}

\begin{figure}[tb]
    \centering
    \includegraphics[width=0.47\textwidth]{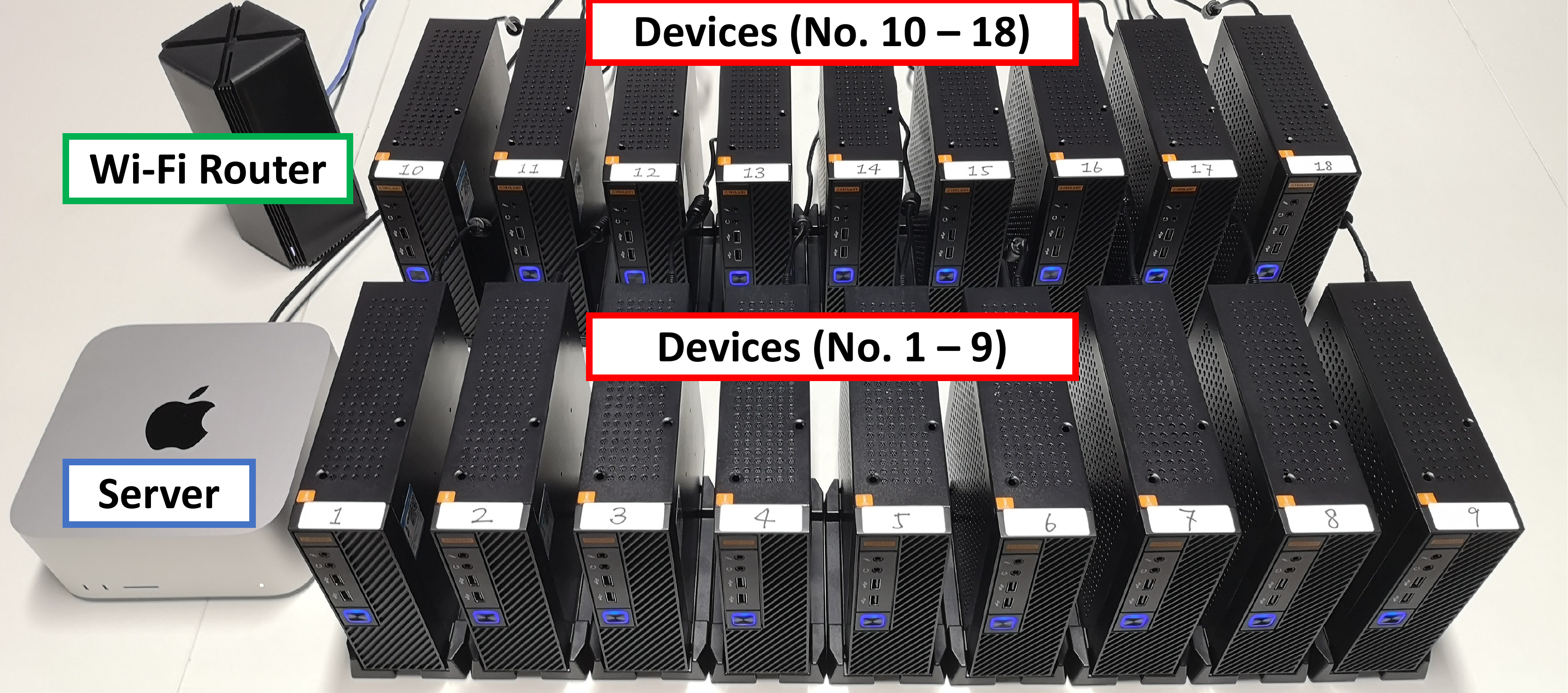}
    \caption{The hardware prototype system for edge federated learning}
    \label{fig-system}
\end{figure}

\begin{figure}[tb]
    \centering
    \includegraphics[width=0.5\textwidth]{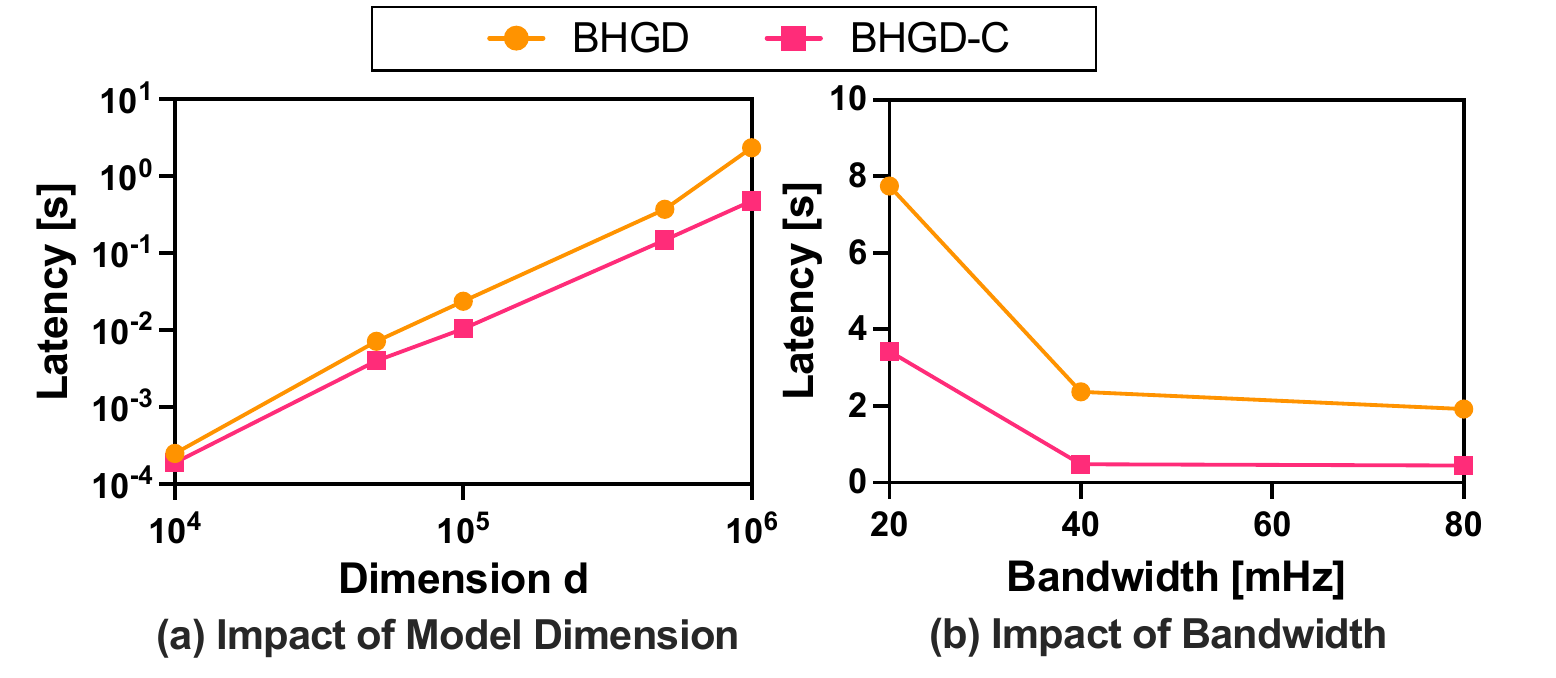}
    \caption{Transmission Latency in the Real System}
    \label{fig-latency}
\end{figure}


To show the Byzantine-resilience of BHGD (Algorithm~\ref{Algo1}), we compare the performance of BHGD with previous methods. Specifically, we consider the distributed gradient descent algorithms with averaging rules being empirical mean (E-Mean)~\cite{McMahanMRHA_AISTATS17}, coordinate-wise trimmed mean (CWT-Mean)~\cite{YinCRB_ICML18}, coordinate-wise median (CW-Median)~\cite{YinCRB_ICML18}, geometric median (G-Median)~\cite{ChenSX_Sigmetrics17}, Krum~\cite{BlanchardMGS_NIPS17}, Bulyan~\cite{guerraoui2018hidden}, and momentum Krum (M-Krum)~\cite{el2021distributed} respectively. For experiments on  synthetic datasets, we set $m=10$, $n=100$, $d=10$, $\alpha=0.2$ and generate $200$ test data for each model. For experiments on real-world datasets, we set $m=10$, $n=40$, $\alpha=0.2$ and $m=5$, $n=10$, $\alpha=0.2$ for Boston dataset and Adult dataset respectively and pick $100$ data uniformly at random as the test data for each. The experimental results are shown in Fig.~\ref{fig-baselines}. As we can see from Fig.~\ref{fig-baselines}, BHGD (Algorithm~\ref{Algo1}) achieves lower test loss than all the baseline methods on both synthetic and real-world datasets. In addition, the convergence of BHGD is stabler than the other methods, which indicates that BHGD is more robust to heavy-tailed data as well as Byzantine devices.

\subsubsection{Impact of Byzantine Devices}

We conduct experiments on synthetic datasets to study the impact of Byzantine devices. We present the performance of BHGD (Algorithm~\ref{Algo1}) in presence of different fractions of Byzantine devices in Fig.~\ref{fig-byzantine}, where $\alpha$ denotes the fraction of Byzantine devices among all learning devices. Our BHGD algorithm converge in all cases, which confirms that BHGD can tolerate various fraction of Byzantine devices. Moreover, the convergence is subject to the fraction of Byzantine devices. As shown in Fig.~\ref{fig-byzantine}, a larger fraction of Byzantine devices may not only cause a larger excess empirical risk, but also slow the convergence down.

\subsubsection{Impact of Heavy-tailed Distributions}

We conduct experiments on synthetic datasets with different types of heavy-tailed label noises and different tail weights of heavy-tailed feature vectors to show the impact of heavy-tailed distributions on the learning performance of BHGD algorithm. First, we adopt LogNormal distribution and Pareto Distribution to generate the label noises. The main difference between these two heavy-tailed distributions is that LogNormal distribution is symmetric while Pareto distribution is one-sided. For Pareto label noise in each task, we set the scale parameter to be $1$ and the shape parameter to be $3.26953$ such that its second moment is identical to that of $\operatorname{LogNormal}(0, 0.55848)$. The experimental results are shown in Fig.~\ref{fig-noises}. It can be seen that, BHGD converges under both types of noises, and performs better on Pareto noises. We think the main reason lies in that Pareto noise is one-sided and thus its influence is smaller. Next, we vary the tail weight of heavy-tailed feature vector. We adopt the LogNormal distribution $\operatorname{LogNormal}(\mu, \sigma)$, where $\mu$ is set to be $0$ and $\sigma$ varies in different settings. Note that, a larger $\sigma$ corresponds to a larger variance. For linear model, we let $\sigma$ vary in $\{0.1, 0.3, 0.5, 0.7, 0.9\}$, while for logistic model, we let $\sigma$ vary in $\{1.0, 1.5, 2.0, 2.5, 3.0\}$. The experimental results are shown in Fig.~\ref{fig-tail}. From these results, we can conclude that more test loss would be incurred when the training data become more heavy-tailed.

\subsubsection{Impact of System Scale}

We study the impact of system scale on the performance of BHGD. Specifically, we consider the scale of learning devices and the scale of training data samples. In this experiment, we let $m$ (i.e., the number of learning devices) vary in $\{10, 20, 30, 40\}$ and let $N$ (i.e., the number of training data samples) vary in $\{1000, 2000, 4000, 6000, 8000\}$. We run the BHGD algorihtm on synthetic dataset for logistic model under each combination of $m$ and $N$. The results are shown in Fig.~\ref{fig-scale}. In Fig.~\ref{fig-scale}~(a), we fix $m$ and present the test loss after $200$ training rounds for various $N$. Since $N=mn$, when $m$ is fixed, a larger $N$ means a larger size of local data sample $n$ for each device. The results in Fig.~\ref{fig-scale}~(a) shows that when each learning device has more training data, the test loss would get smaller. Intuitively, this means that when each device has more local data, it learns more accurately, which is quite reasonable. In Fig.~\ref{fig-scale}~(b), we fix $N$ and present the test loss after $200$ training rounds for various $m$.When $N$ is fixed, a larger $m$ means a smaller size of local data $n$ for each device. Hence the test loss grows larger as $m$ increases in Fig.~\ref{fig-scale}~(b). Note that, these results corroborate our previous theoretical upper bounds on statistical error rates.

\subsubsection{Communication Efficiency}

We show the communication efficiency of our BHGD-C algorithm (Algorithm~\ref{Algo2}). 

In the first part, we study the impact of different gradient compression techniques. To this end, we equip BHGD-C with different compressor, and then compare their performances with BHGD on real-world datasets. Specifically, we adopt $\ell_1$ quantization~\cite{KarimireddyRSJ_ICML19}, $\text{Top}_k$ sparsification~\cite{StichCJ_NIPS18} and randomized sparsification~\cite{wangni2018gradient}.  In $\ell_1$ quantization, the compressor $Q(\cdot)$ can be summarized as $\mathcal{Q}(x)=\left\{\frac{\|x\|_1}{d}, \operatorname{sign}(x)\right\}$ for $\forall x\in\mathbb{R}^d$, where $\operatorname{sign}(x)$ is the quantized vector with each coordinate $i\in[d]$ being either $+1$ (for positive $x_i$) or $-1$ (for negative $x_i$) and $\frac{\|x\|_1}{d}$ is the scaling factor. In $\text{Top}_k$ sparsification, for any $x\in\mathbb{R}^d$, the compressor compresses $x$ by retaining the top $k$ largest
coordinates of $x$ and sets the others to zero. In randomized sparsification, for any $x\in\mathbb{R}^d$ and each coordinate $i\in[d]$, the compressor set $x_i$ to $1$ with probability $p$ and to $0$ with probability $1-p$. The experimental results are shown in Fig.~\ref{fig-compression}. Note that, in Fig.~\ref{fig-compression}, we present how the test loss varies with respect to the total communication costs (in byte) during the training process. It can be seen that, to achieve the same test loss, BHGD-C entails fewer bytes to transmit from devices to the sever and thus provide communication efficiency. 

Next, in the second part, we conduct experiment on a networked hardware prototype system for edge federated learning. Our system, as illustrated in Fig.~\ref{fig-system}, consists of $m=18$ Teclast mini PCs serving as learning devices and a Mac Studio serving as the central server. All the machines are interconnected via a Wi-Fi router and the bandwidth is set to 40mHz by default. We implement our BHGD and BHGD-C algorithms on the system, where BHGD-C algorithm use the $\text{Top}_k$ sparsification with $k$ being half of the model dimension. We present the average communication latency across all the learning devices and all the training rounds in Fig.~\ref{fig-latency}. Overall, the BHGD-C incurs less latency during the learning process. From Fig.~\ref{fig-latency}~(a), we can see that, with the model dimension increases, the latency saved by BHGD-C also grows. From Fig.~\ref{fig-latency}~(b), we can see that, the bandwidth affects the communication latency a lot and BHGD-C achieves a much lower latency than BHGD even when the bandwidth is limited.

\section{Conclusions}\label{sec:conclu}

In this paper, we studied how to retain Byzantine resilience and communication efficiency when training machine learning model with heavy-tailed data in a distributed manner. Specifically, we first presented an algorithm that is robust against both Byzantine worker nodes and heavy-tailed data. Then by adopting the gradient compression technique, we further proposed a robust learning algorithm with reduced communication overhead. Our theoretical analysis demonstrated that our proposed algorithms achieve optimal error rates for strongly convex population risk case. We also conducted extensive experiments, which yield consistent results with the theoretical analysis. It is worth noting that, our results may not have optimal dependence on the dimension $d$. Some recent work also attempt to incorporate recent breakthroughs in robust high-dimensional aggregators into the distributed learning scenarios, e.g., \cite{DiakonikolasKK0_ICML19,SuX_Sigmetrics19,El-MhandiG_arxiv19,DataD_ICML21}. How to achieve Byzantine resilience, communication efficiency and heavy-tailed data robustness simultaneously in high dimensions is an interesting and significant problem and we leave it as the future work.
\section*{Acknowledgments}
This work was supported in part by the National Key Research and Development Program of China Grant 2020YFB1005900 and National Natural Science Foundation of China (NSFC) Grant 62122042.


\ifCLASSOPTIONcaptionsoff
  \newpage
\fi



\normalem
\bibliographystyle{IEEEtran}
 \bibliography{references}
%



%


\begin{IEEEbiography}[{\includegraphics[width=1in,height=1.25in,clip,keepaspectratio]{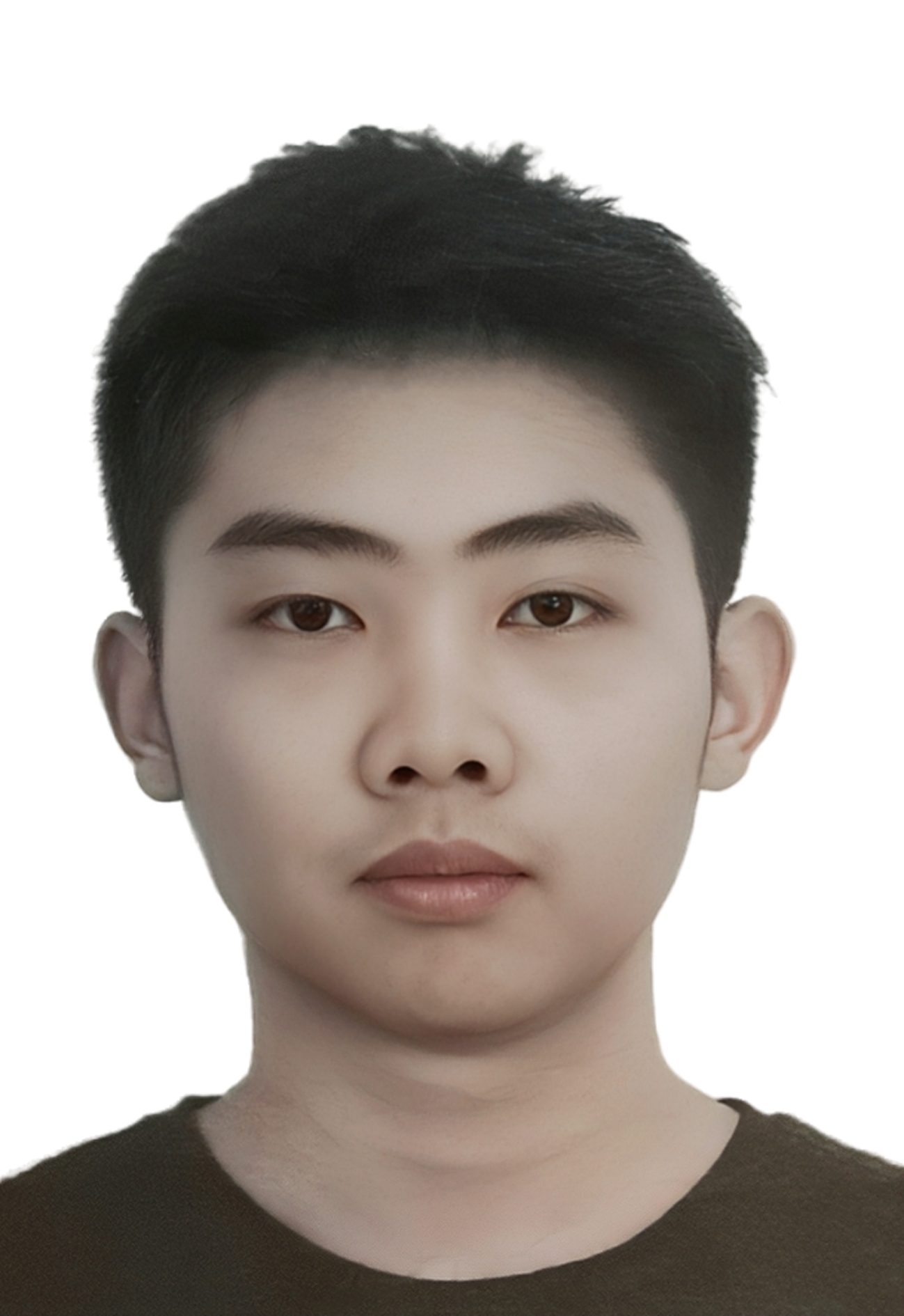}}]{Youming Tao}
received the Bachelor's degree (with distinction) in Computer Science from Shandong University in 2021, where he is currently pursuing the Ph.D. degree in Computer Science. He is interested in building private, efficient and reliable collaborative learning algorithms with applications to edge intelligence.
\end{IEEEbiography}


\begin{IEEEbiography}[{\includegraphics[width=1in,height=1.25in,clip,keepaspectratio]{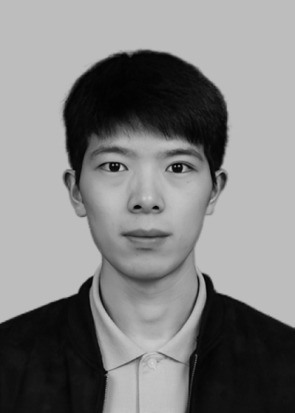}}]{Sijia Cui}
is currently a Ph.D. student in Computer Science at University of Chinese Academy of Sciences. He obtained his B.E. degree in computer science from Shandong University in 2022. He is interested in reinforcement learning.
\end{IEEEbiography}

\begin{IEEEbiography}[{\includegraphics[width=1in,height=1.25in,clip,keepaspectratio]{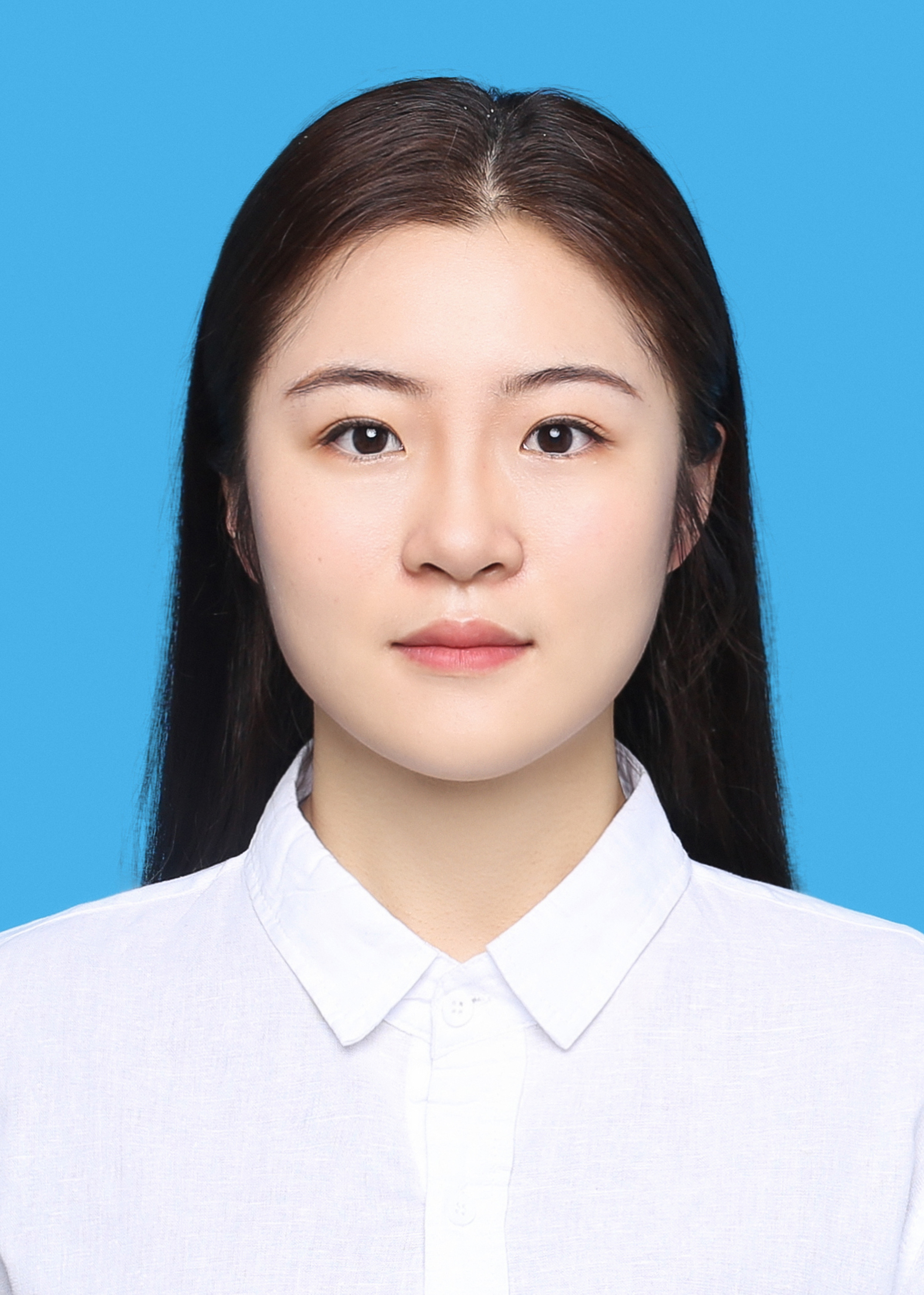}}]{Wenlu Xu}
is currently a Ph.D. student at the Department of Statistics, University of California, Los Angeles (UCLA). She received her bachelor’s degree in Mathematics from Shandong University. She is under the guidance of Dr. Xiaowu Dai. Her research interests include recommender system, causal inference and machine learning in economics.  She is currently a member of the Lab for Statistics, Computing, Algorithm, Learning, and Economics (SCALE).
\end{IEEEbiography}

\begin{IEEEbiography}[{\includegraphics[width=1in,height=1.25in,clip,keepaspectratio]{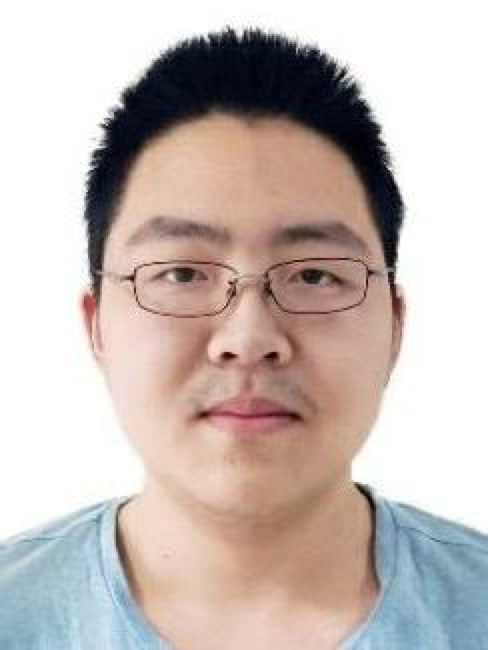}}]{Haofei Yin}
is currently a Master student in computer science at Shandong University. He obtained his B.E. degree in Computer Science from Shandong University in 2022. He is interested in computer networking.
\end{IEEEbiography}

\begin{IEEEbiography}[{\includegraphics[width=1in,height=1.25in,clip,keepaspectratio]{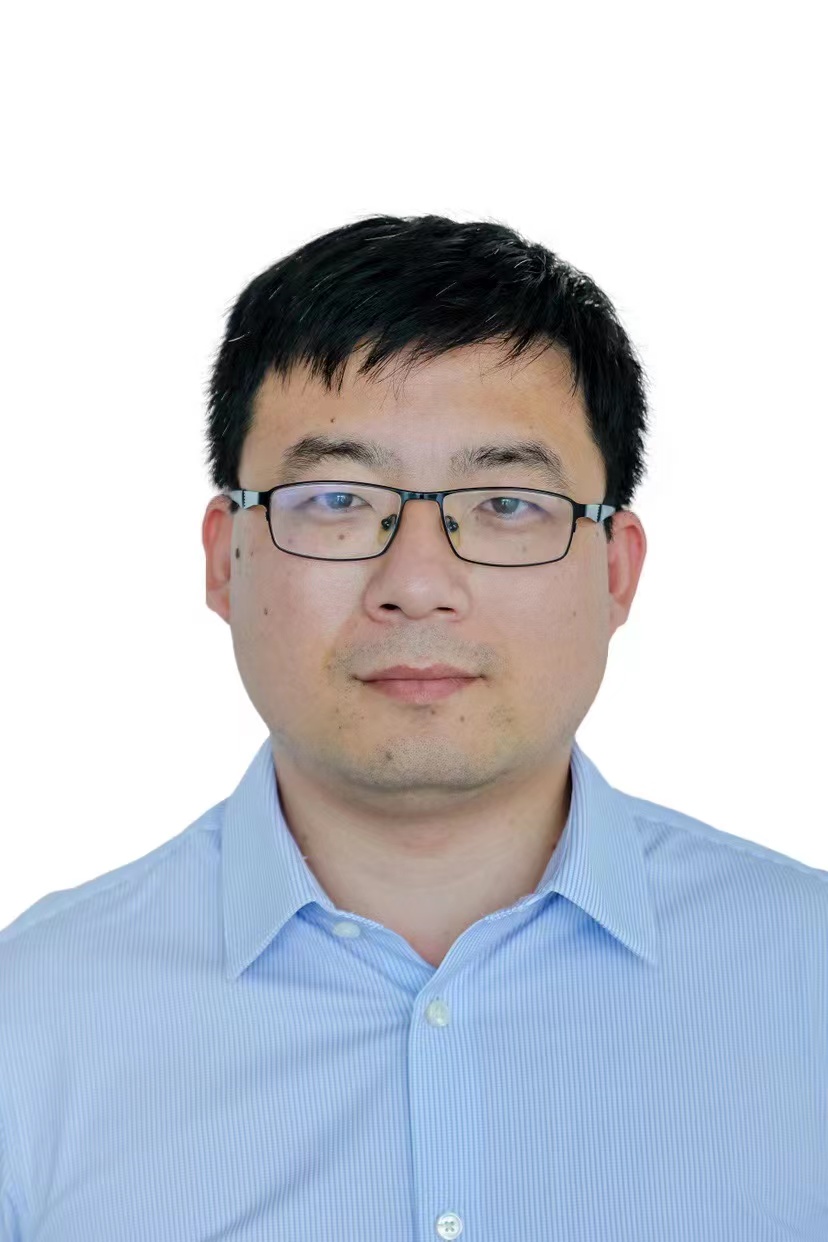}}]{Dongxiao Yu}
received the B.Sc. degree from the School of Mathematics, Shandong University, in 2006, and the Ph.D. degree from the Department of Computer Science, The
University of Hong Kong, in 2014. He became an Associate Professor with the School of Computer Science and Technology, Huazhong University of Science and Technology, in 2016. He is currently a Professor with the School of Computer Science and Technology, Shandong University. His research interests include wireless networks, distributed
computing, and graph algorithms.
\end{IEEEbiography}

\begin{IEEEbiography}[{\includegraphics[width=1in,height=1.25in,clip,keepaspectratio]{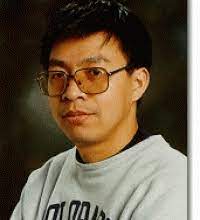}}]{Weifa Liang}
received the B.Sc. degree in computer science from Wuhan University, China, in 1984 and the M.E. and Ph.D. degrees in computer science from Australian National University, in 1989 and 1998, respectively. He is currently a professor with the Department of Computer Science, City University of Hong Kong. Prior to that, he was a professor with Australian National University. His research interests include design and analysis of energy efficient routing protocols for wireless ad hoc and sensor networks, mobile edge computing (MEC), network function virtualization (NFV), Internet of Things, software-defined networking (SDN), design and analysis of parallel and distributed algorithms, approximation algorithms, combinatorial optimization, and graph theory. He is currently an associate editor for the editorial board of IEEE Transactions on Communications.
\end{IEEEbiography}

\begin{IEEEbiography}[{\includegraphics[width=1in,height=1.25in,clip,keepaspectratio]{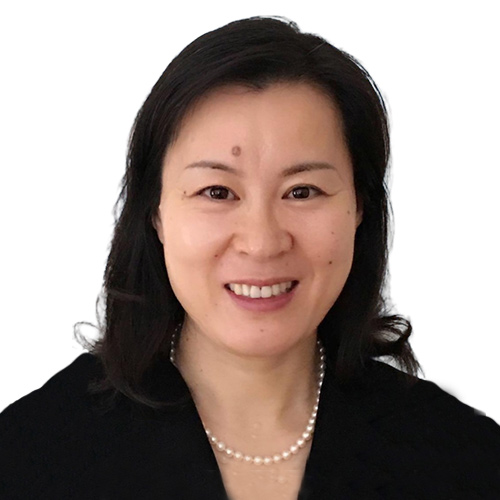}}]{Xiuzhen Cheng}
received her M.S. and Ph.D. degrees in computer science from University of Minnesota, Twin Cities, in 2000 and 2002, respectively. She was a faculty member at the Department of Computer Science, The George Washington University, from 2002-2020. Currently she is a professor of computer science at Shandong University, Qingdao, China. Her research focuses on blockchain computing, security and privacy, and Internet of Things. She is a Fellow of IEEE
\end{IEEEbiography}




\end{document}